\documentclass[onefignum,onetabnum]{siamart171218}

\makeatletter
\AtBeginDocument{%
  \def\refstepcounter@optarg[#1]#2{%
    \cref@old@refstepcounter{#2}%
    \cref@constructprefix{#2}{\cref@result}%
    \@ifundefined{cref@#1@alias}%
      {\def\@tempa{#1}}%
      {\def\@tempa{\csname cref@#1@alias\endcsname}}%
    \protected@edef\cref@currentlabel{%
      [\@tempa][\arabic{#2}][\cref@result]%
      \csname p@#2\endcsname\csname the#2\endcsname}}}
\makeatother

\usepackage{amsfonts,amssymb}
\usepackage{url}
\usepackage{yhmath}

\newsiamthm{assumption}{Assumption}
\newsiamthm{condition}{Condition}
\newsiamremark{remark}{Remark}

\makeatletter
\@removefromreset{theorem}{section}
\@for\@tempa:={lemma,corollary,proposition,definition,assumption,condition,remark}\do{%
  \expandafter\let\csname\@tempa\endcsname\relax
  \expandafter\let\csname end\@tempa\endcsname\relax
  \expandafter\let\csname c@\@tempa\endcsname\relax
  \expandafter\let\csname the\@tempa\endcsname\relax
  \expandafter\let\csname\@tempa*\endcsname\relax
  \expandafter\let\csname end\@tempa*\endcsname\relax}
\makeatother

\theoremstyle{plain}\theoremheaderfont{\normalfont\sc}\theorembodyfont{\normalfont\itshape}
\theoremseparator{.}\theoremsymbol{}
\newtheorem{lemma}{Lemma}
\newtheorem{corollary}{Corollary}
\newtheorem{proposition}{Proposition}
\newtheorem{definition}{Definition}
\newtheorem{assumption}{Assumption}

\theoremheaderfont{\normalfont\itshape}\theorembodyfont{\normalfont}
\newtheorem{remark}{Remark}

\renewcommand{\Re}{\mathrm{Re}}
\newcommand{\E}{\mathbb{E}}
\newcommand{\cA}{\mathcal{A}}
\newcommand{\cS}{\mathcal{S}}
\newcommand{\cB}{\mathcal{B}}
\newcommand{\cH}{\mathcal{H}}
\newcommand{\cT}{\mathcal{T}}
\newcommand{\dd}{\,d}
\newcommand{\Rea}{\operatorname{Re}}
\newcommand{\dt}[1]{\dot{\wideparen{#1}}}

\headers{Stabilization of Yih's Viscosity Jump Interface}{M. Krstic}

\title{Stabilization of Yih's Viscosity Jump Interface\\ in a Channel\thanks{\funding{This work was funded by AFOSR grant FA9550-23-1-0535 and NSF grant ECCS-2151525.}}}

\author{Miroslav Krstic\thanks{Department of Mechanical and Aerospace Engineering, University of California, San Diego, La Jolla, CA 92093-0411, USA (\email{mkrstic@ucsd.edu}).}}

\allowdisplaybreaks
\begin{document}
\maketitle

\begin{abstract}
Two immiscible viscous fluids in pressure-driven channel flow can be unstable through their interface when their viscosities differ, at arbitrarily small Reynolds number --- the instability Yih found in 1967. The unstable state is the interface itself: a curve inside the domain, separating the two fluids, whose displacement is governed by a partial differential equation of its own and which moves the fluid domain with it. Actuation is at one wall only, so the state to be controlled is reached across a fluid. We stabilize this interface at any rate below the limit set by the streamwise mean flow, by static feedback of the two velocity components of that wall, through a Fredholm backstepping transformation acting jointly on the two layers. The target is the two-fluid Stokes problem, shifted, with the transmission of stress between the fluids kept and the coupling of the interface into them removed. Two findings about the two-fluid spectrum carry the design: joined at the interface, the eigenvalue sequences of the two layers, which collide for a dense set of layer ratios when the layers are uncoupled, avoid each other uniformly, so no degeneracy has to be assigned; and the slowest mode left to viscosity is not a bulk mode but the capillary relaxation of the interface, whose rate is linear rather than quadratic in the wavenumber, which sets the band of actuated wavenumbers. The controller is tested on the nonlinear two-fluid channel, switched on against a saturated interfacial wave.
\end{abstract}

\begin{keywords}
  boundary control, moving boundary, two-fluid channel flow, Yih instability, Fredholm backstepping, rapid stabilization
\end{keywords}

\begin{AMS}
  93D15, 93C20, 76E17, 35Q30, 76D55
\end{AMS}

\section{Introduction}

\subsection{Problem}

Two incompressible fluids of different viscosity, stacked in a plane channel and driven by a pressure gradient, flow with a piecewise parabolic profile whose slope jumps at the interface. Yih \cite{Yih67} showed that this viscosity jump can destabilize long interfacial waves at arbitrarily small Reynolds number, depending on the viscosity ratio and the layer thicknesses. The mode is not a mode of either fluid. It is carried by the displacement of the interface, which is an unknown of the problem in its own right, and it feeds back into the fluids through a single term in the tangential-velocity condition at the interface. The instability underlies the breakdown of lubricated transport of viscous oil by a water layer, the wave regimes of stratified oil--water pipelines, and the interfacial defects of coextruded films and coating flows --- configurations in which the shape of the interface is the product.

\subsection{Literature}

The interface is a moving boundary. Backstepping feedback stabilization of moving-boundary PDE systems has been developed over the past decade for a thermal phase-change front (the Stefan problem) \cite{KDK19}, a piston between two columns of compressible gas \cite{KK22}, the growth cone of a neuron \cite{DKK24}, and the shock front of a traffic wave \cite{YDZK21}. A string whose endpoint moves is controlled from the boundary in \cite{Gug07,Gug08}, and a point mass moving in a viscous fluid is steered by a force on the mass in \cite{CMRT15}. In each of these the domain is one-dimensional and the moving boundary is a point, whose position is governed by an ordinary differential equation coupled to the PDE through its boundary values. A subsystem joined to others at both of its ends is a sandwich system: in \cite{WK20} an ODE sandwiched between two PDEs, the coupled hyperbolic PDEs of a cable on one side and the transport PDE of a sensor delay on the other, and in \cite{WK21} a PDE sandwiched between two ODEs, with the PDE domain varying in time along a prescribed law. The plant here has that structure with each part one dimension higher: two-dimensional fluids on either side and a one-dimensional interface between them. The domain variation is not prescribed either, since the interface is the state that moves the boundaries of the two fluid domains. The present problem is the next case in that sequence: the domain is two-dimensional, the moving boundary is a curve, and its position is governed by a partial differential equation of its own --- the kinematic condition --- coupled to the fluids along its whole length.

Outside of backstepping, Munteanu \cite{Mun21} posed and solved the boundary stabilization problem for two viscous incompressible fluids in a channel: the Navier--Stokes equations coupled to a convective Cahn--Hilliard equation for the relative density of one species, with the Poiseuille profile as the target flow, stabilized by feedback of the tangential velocity at both walls, the feedback given in explicit form. There the two fluids are described by an order parameter and the interface between them is a diffuse layer of positive thickness, so the fluid domain is fixed and its boundary is the two walls. The present paper treats the same two fluids with the interface sharp, as a moving boundary carrying the state.

Controllability of fluid--structure interactions, in which an elastic structure moves inside a fluid, has been studied \cite{IT07,BO08}, and feedback stabilization of a fluid--beam system by a Riccati-based design in \cite{Ray10}. An interface between two fluids has no structure of its own: it carries no stiffness and no mass, only capillarity, and it moves because the fluids move.

\subsection{This Paper}

Two features set the problem tackled in this paper apart from those it descends from. Unlike single-fluid Poiseuille flow \cite{VK07,Kr26}, which is unstable only above a Reynolds number in the thousands, the two-fluid channel can be unstable at arbitrarily small Reynolds number, so the instability is not a high-Reynolds-number effect to be suppressed but a low-Reynolds-number one that the equilibrium profile carries into every regime. And unlike the moving-boundary systems in \cite{KDK19,KK22,DKK24,YDZK21}, where the boundary follows the interior fields and the interior is where any instability would arise, here the instability originates at the boundary and is manifested there: the unstable mode is the interface displacement itself, and the fluids serve only to transmit it.

This paper controls that instability from the wall, with the two velocity components of the upper wall, by the Fredholm backstepping design developed for single-fluid Poiseuille flow in \cite{Kr26}, which carries the rapid stabilization of \cite{CL14,CL15} to that plant. Everything the design uses is taken from there: the Stokes problem as target, the finite-rank pre-feedback through the normal wall velocity that makes every mode controllable from the tangential one, the Riesz basis of the plant obtained by perturbation from the target's, and the Fredholm transformation built on that basis. None of it is re-derived here; where a proof carries over it is cited, and where the second fluid changes it --- the spectrum, the wall coefficients, the slow mode, the mean flow --- it is proved. What is added is the object: an interface between two fluids, an unstable state attached to a curve inside the domain and governed by its own equation, which \cite{Kr26} does not contain.

The lower wall is unactuated. So the interface, the object to be controlled, sits at an interior point of the channel and is reached only across the actuated fluid.

Per wavenumber, the linearized plant is a pair of fourth-order Orr--Sommerfeld operators on the two layers, joined at the interface by four transmission conditions, with the interface displacement as an additional scalar state attached there. One of the transmission conditions carries a time derivative when the densities differ. The unstable coupling of the interface into the fluids is of rank one.

\subsection{Contributions}

\begin{itemize}
\item \emph{Rapid stabilization of a PDE-governed moving boundary.} The Yih interfacial instability of two fluids across a viscosity jump in a channel is stabilized at every prescribed rate below the mean-flow controllability limit, with wall actuation in one of the two fluids, the other wall being unactuated. Where the two fluids in a channel have been stabilized with the interface diffuse \cite{Mun21}, on a fixed domain and at an unprescribed rate, here the interface is sharp: the state includes its displacement, the fluid domain moves with it, and the moving boundary is governed by a partial differential equation of its own in one space dimension rather than by an ordinary differential equation at a point, as in \cite{KDK19,KK22,DKK24,YDZK21}. Every mode that deforms the interface admits an arbitrary rate; what limits the achievable rate is the streamwise mean flow alone, which one wall velocity can fail to reach, and that limit is shown to be attainable and to tighten with the Reynolds number. This is possible because the interface, although an interior object reached only across the actuated fluid, is a single scalar per wavenumber whose only destabilizing coupling to the fluids is of rank one, and a Fredholm transformation acting jointly on the two layers removes it while leaving the transmission conditions between the layers intact.
\end{itemize}

Three mathematical obstacles stood in the way, met by the following.

\begin{itemize}
\item \emph{Coupling two layers simplifies their spectrum.} The interface does not make the spectrum harder; it makes it simpler. Two viscous layers considered separately have eigenvalue sequences that come arbitrarily close for a dense set of viscosity and thickness ratios, and a design would then have to assign near-degenerate pairs. Joined at the interface, the sequences avoid each other uniformly: at every admissible parameter value the high modes of the coupled problem are simple and uniformly separated, and the one-fluid Fredholm construction applies to the two-fluid channel without a new spectral device.
\item \emph{What an interfacial target may delete.} The target of the design is the two-fluid Stokes problem itself, shifted and with its finitely many low eigenvalues separated by a finite-rank term, with the viscosity jump kept in its transmission conditions. Deleting that jump makes the target dissipative for a reason unrelated to the physics; keeping it makes the dissipation the strain-rate integral of the two fluids. This identifies what a backstepping target for an interfacial problem may delete --- the coupling of the interface displacement into the fluids --- and what it may not --- the transmission of stress between them.
\item \emph{Capillary relaxation sets the actuated band.} What a wall feedback acting on finitely many wavenumbers must outrun is not the viscous decay of the fluids but the capillary relaxation of the interface, which is slower: it is linear in the wavenumber where the bulk modes decay quadratically. Computing that rate with its constant, from the mobility of a viscous interface, makes the controlled band explicit and linear in the requested rate.
\end{itemize}

The main theorem is stated in Section \ref{sec:main}; the design and its analysis follow, in the order of the dependencies among them.

\section{Two-Fluid Channel and Its Linearization}\label{sec:model}

Fluid 1 occupies $0<y<d+\eta(t,x)$ and fluid 2 occupies $d+\eta(t,x)<y<1$, both $L$-periodic in $x$, with densities $\rho_j$, viscosities $\mu_j$, and the interface at $y=d+\eta(t,x)$. Lengths are scaled by the channel height $H$, velocities by the centerline speed $U^e$ of the single-fluid Poiseuille profile with the same pressure gradient, pressure by $\rho_1(U^e)^2$, and $\rho_1=\mu_1=1$ in nondimensional terms. With
\begin{equation}\label{eq:groups}
\Re=\frac{\rho_1U^eH}{\mu_1},\quad M=\frac{\mu_2}{\mu_1},\quad \Sigma=\frac{\sigma}{\rho_1(U^e)^2H},\quad G=\frac{gH}{(U^e)^2},
\end{equation}
the fluids obey, in their respective domains,
\begin{subequations}\label{eq:ns}
\begin{align}
\partial_xu_j+\partial_yv_j &= 0, \label{eq:cont}\\
\rho_j\big(\partial_tu_j+u_j\partial_xu_j+v_j\partial_yu_j\big) &= -\partial_xp_j+\frac{\mu_j}{\Re}\big(\partial_x^2+\partial_y^2\big)u_j, \label{eq:mx}\\
\rho_j\big(\partial_tv_j+u_j\partial_xv_j+v_j\partial_yv_j\big) &= -\partial_yp_j+\frac{\mu_j}{\Re}\big(\partial_x^2+\partial_y^2\big)v_j-\rho_jG, \label{eq:my}
\end{align}
\end{subequations}
with no slip at the unactuated wall and the two actuated components at the upper wall,
\begin{subequations}\label{eq:wallbc}
\begin{align}
u_1(t,x,0)&=v_1(t,x,0)=0, \label{eq:walls}\\
u_2(t,x,1)&=U_c(t,x),\qquad v_2(t,x,1)=V_c(t,x), \label{eq:walls2}
\end{align}
\end{subequations}
and at the interface $y=d+\eta(t,x)$,
\begin{subequations}\label{eq:ifc}
\begin{align}
\eta_t+u_j\eta_x&=v_j,\qquad [u_1-u_2]=[v_1-v_2]=0,\qquad (T_2-T_1)n=\Sigma\kappa n, \label{eq:interface}\\
T_j&=-p_jI+\frac{\mu_j}{\Re}\big(\nabla\mathbf{u}_j+\nabla\mathbf{u}_j^{\mathrm T}\big), \label{eq:stress}\\
n&=\frac{(-\eta_x,1)}{\sqrt{1+\eta_x^2}},\qquad \kappa=-\frac{\eta_{xx}}{(1+\eta_x^2)^{3/2}} . \label{eq:curv}
\end{align}
\end{subequations}
Here \eqref{eq:stress}--\eqref{eq:curv} define the stress and the normal and curvature of the graph $y=d+\eta$. The mean flux is held at its equilibrium value.

The flat interface $\eta\equiv0$ with $v_j\equiv0$, $u_j=U_j(y)$, and $p_j=-Px+p_I-\rho_jG(y-d)$, normalized by $P\Re=8$, gives
\begin{subequations}\label{eq:base}
\begin{align}
U_1(y)&=4y(G_0-y),\qquad U_2(y)=\frac{4}{M}(1-y)(1+y-G_0), \label{eq:profile}\\
G_0&=\frac{1-(1-M)d^2}{1-(1-M)d}, \label{eq:G0}
\end{align}
\end{subequations}
continuous at $y=d$ with $\mu_1U_1'(d)=\mu_2U_2'(d)$, so that the slope jumps by the factor $1/M$, and with $\mu_1U_1''=\mu_2U_2''=-8$. The interface speed and slope are
\begin{equation}\label{eq:UI}
U_I=U_1(d)=\frac{4Md(1-d)}{1-(1-M)d},\qquad U_1'(d)=\frac{4\big[1-2d+(1-M)d^2\big]}{1-(1-M)d}.
\end{equation}

Integrals over a layer are written out, $\int_0^d$ for fluid 1 and $\int_d^1$ for fluid 2, and $\sum_j\int_j$ abbreviates their sum. Linearizing about \eqref{eq:profile}, transferring the interface conditions to $y=d$, expanding in $e^{2\pi ikx}$ with $k=m/L$, and eliminating the streamwise velocity and the pressure through $u_j=\frac{i}{2\pi k}\partial v_j$ and the streamwise momentum equation, as in \cite{Kr26}, gives the plant per wavenumber in the normal velocities $v_j(y,t)$ and the interface displacement $\eta(t)$. With $\E:=\partial^2-4\pi^2k^2$, $\partial:=\partial_y$,
\begin{subequations}\label{eq:plant}
\begin{align}
\rho_j\partial_t\E v_j &= \frac{\mu_j}{\Re}\E^2v_j-2\pi ik\rho_jU_j\E v_j+2\pi ik\rho_jU_j''v_j,\quad j=1,2, \label{eq:p1}\\
0 &= v_1(0,t)=\partial v_1(0,t), \label{eq:p3}\\
V_c(k,t) &= v_2(1,t), \label{eq:p4}\\
-2\pi ik\,U_c(k,t)&=\partial v_2(1,t), \label{eq:p4b}\\
0 &= \big[v_1-v_2\big]_{y=d}, \label{eq:p5}\\
0 &= \partial\big[v_1-v_2\big]_{y=d}-2\pi ik\Big(1-\frac1M\Big)U_1'(d)\,\eta, \label{eq:p6}\\
\big[\mu_1\E v_1-\mu_2\E v_2\big]_{y=d} &= -8\pi^2k^2(\mu_1-\mu_2)\,v_2(d,t), \label{eq:p7}\\
\big[\mu_1\partial\E v_1-\mu_2\partial\E v_2\big]_{y=d} &= \Re(\rho_1-\rho_2)\Big(\dt{\partial v_2}(d,t)+2\pi ikU_I\,\partial v_2(d,t)\Big) \nonumber\\
&\quad+8\pi^2k^2\big[\mu_1\partial v_1-\mu_2\partial v_2\big]_{y=d} \nonumber\\
&\quad+2\pi ik\,\Re\,\frac{\rho_2-\rho_1}{M}U_1'(d)\,v_2(d,t) \nonumber\\
&\quad-4\pi^2k^2\Re\big[(\rho_2-\rho_1)G-4\pi^2k^2\Sigma\big]\eta, \label{eq:p8}\\
\dot\eta &= -2\pi ikU_I\,\eta+v_2(d,t). \label{eq:p9}
\end{align}
\end{subequations}
Equation \eqref{eq:p6} carries the Yih coupling: the interface displacement enters the slope condition with the coefficient $(1-1/M)U_1'(d)$, which vanishes for equal viscosities and at $d=1/(1+\sqrt M)$, the interface at the maximum of the profile. Equation \eqref{eq:p8} is the normal-stress balance with the pressure eliminated; its time derivative of a trace makes the per-wavenumber system a pencil when $\rho_1\ne\rho_2$. Equation \eqref{eq:p9} is the kinematic condition. At $k=0$, $v\equiv0$, the mean of $\eta$ is conserved, and the streamwise mean obeys a two-layer heat equation with transmission, treated in Section \ref{sec:phys}.

\section{Controller and Main Result}\label{sec:main}

\subsection{Controller and stabilization theorem}

\begin{definition}[Energy spaces]\label{def:H}
For $k\ne0$, $\cH'_k$ is the space of triples $x=(v_1,v_2,\eta)$ with $v_1\in H^1(0,d)$, $v_2\in H^1(d,1)$, $v_1(0)=0$, $v_1(d)=v_2(d)$, and $\cH_k\subset\cH'_k$ is the subspace with $v_2(1)=0$, both with the inner product
\begin{equation}\label{eq:ip}
\langle x,x'\rangle_k=\sum_{j=1}^2\rho_j\Big(\langle\partial v_j,\partial v'_j\rangle+4\pi^2k^2\langle v_j,v'_j\rangle\Big)+4\pi^2k^2\varpi^+_k\,\eta\bar\eta',
\end{equation}
with $\varpi^+_k=1+|\rho_1-\rho_2|G+4\pi^2k^2\Sigma$. The physical state space $X$ consists of the fields $(u_1,u_2,v_1,v_2,\eta)$ on the channel with
\begin{equation}\label{eq:X}
\|x\|_X^2=\sum_{j=1}^2\rho_j\big\|(u_j,v_j)\big\|^2+\sum_k\varpi^+_k|\eta(k)|^2<\infty,\quad \int_{\mathbb{T}_L}\eta\dd x=0,
\end{equation}
\end{definition}

All $x$-integrals are taken with the normalized measure $\dd x/L$ on $\mathbb{T}_L=(0,L)$, so Parseval reads $\|\phi\|^2=\sum_m|\phi_m|^2$. For $k\ne0$ the modal field energy equals the kinetic energy of the velocity pair, $\rho_j(\|u_j\|^2+\|v_j\|^2)=\frac{\rho_j}{4\pi^2k^2}(\|\partial v_j\|^2+4\pi^2k^2\|v_j\|^2)$, and the interface weight is the potential energy of gravity and capillarity where that is positive, made positive throughout by the constant.

\begin{assumption}[Surface tension]\label{ass:sigma}
$\Sigma>0$.
\end{assumption}

\begin{definition}[Mean-flow rate limit]\label{def:qmean}
\begin{equation}\label{eq:qmean}
q_{\rm mean}:=\inf\Big\{-\sigma:\ \sigma\in\sigma(\cA_0),\ \ker(\cA_0-\sigma)\cap\ker B_0^*\ne\{0\}\Big\},
\end{equation}
$q_{\rm mean}:=+\infty$ if that set is empty, where $\cA_0$ is the operator of the streamwise mean, $u\mapsto\big(-P'+\frac{\mu_j}{\Re}\partial^2u_j\big)/\rho_j$ on $\{u:\ u_1\in H^2(0,d),\ u_2\in H^2(d,1),\ \int_0^1u\dd y=0\}$ with $u_1(0)=u_2(1)=0$, $[u]_{y=d}=[\mu_1\partial u_1-\mu_2\partial u_2]_{y=d}=0$ and $P'$ the multiplier of the zero-mean constraint, and $B_0^*w=-\frac{\mu_2}{\Re}\overline{\partial w_2(1)}$ is the input functional of $U_c(0,\cdot)$.
\end{definition}

The set in \eqref{eq:qmean} collects the eigenvalues of the streamwise mean that the Hautus test declares uncontrollable from $U_c(0,\cdot)$, so $-q_{\rm mean}$ is the slowest of them and the wall input cannot move it. Rates below $q_{\rm mean}$ are assigned by Theorem \ref{thm:main}; rates above it are not available at $k=0$.

\begin{assumption}[Interface position]\label{ass:d}
The interface lies no closer to the actuated wall than the fraction $\sqrt{\mu_2}/(\sqrt{\mu_1}+\sqrt{\mu_2})$ of the channel height:
\begin{equation}\label{eq:dbound}
d\ \le\ \frac{1}{1+\sqrt{\mu_2/\mu_1}} .
\end{equation}
\end{assumption}

\begin{remark}\label{rem:simple}
Assumption \ref{ass:d} is imposed for simplicity of interpretation and exposition. It is sufficient, not necessary: what the design needs at $k\ne0$ is weaker, and is stated and discussed in Remark \ref{rem:weak}.
\end{remark}

The design maps the closed-loop plant onto a target by the Fredholm transformation $(w_1,w_2,\xi)=\cT(v_1,v_2,\eta)$,
\begin{subequations}\label{eq:fred}
\begin{align}
w_j(y,t)&=v_j(y,t)-\int_0^{d}\!f_{j1}(y,\zeta)v_1(\zeta,t)\dd\zeta-\int_d^{1}\!f_{j2}(y,\zeta)v_2(\zeta,t)\dd\zeta-g_j(y)\eta(t), \label{eq:fredw}\\
\xi(t)&=\eta(t)-\int_0^{d}\!h_1(\zeta)v_1(\zeta,t)\dd\zeta-\int_d^{1}\!h_2(\zeta)v_2(\zeta,t)\dd\zeta, \label{eq:fredxi}
\end{align}
\end{subequations}
for $j=1,2$, with kernels $f_{ji}$ and profiles $g_j,h_i$ to be found, so that $(w_1,w_2,\xi)$ obeys the target system \eqref{eq:target}: two Stokes fluids with their stress transmission intact, the interface decoupled from them, and every block shifted by $q$, which decays at rate $q$ by Proposition \ref{prop:target}. The integrals run over both layers in full, so \eqref{eq:fred} is Fredholm and not Volterra, and the kernels are obtained in Section \ref{sec:design} from a basis rather than from a kernel equation.

\begin{theorem}[Two wall inputs stabilize the two-fluid channel at any rate below the mean-flow limit]\label{thm:main}
Let Assumptions \ref{ass:sigma} and \ref{ass:d} hold, and let $0<q<q_{\rm mean}$. There exist $K>0$ and, for each $0<|k|<K$, a bounded linear feedback $F_k=(K_V,F^U_k):\cH'_k\to\mathbb{C}^2$, $K_V$ of finite rank, with $F_{-k}\bar x=\overline{F_kx}$, and a bounded feedback $F_0$ on the streamwise mean, such that under
\begin{subequations}\label{eq:fb}
\begin{align}
\big(V_c(k,t),U_c(k,t)\big)&=F_kx(k,t)\qquad(0<|k|<K), \label{eq:feedback}\\
U_c(0,t)&=F_0u(0,\cdot,t),\qquad V_c=U_c=0\quad(|k|\ge K), \label{eq:feedback2}
\end{align}
\end{subequations}
where $F_k$ acts as
\begin{subequations}\label{eq:physfb}
\begin{align}
V_c(k,t)&=\int_0^{d}\!a_1(\zeta)v_1(\zeta,t)\dd\zeta+\int_d^{1}\!a_2(\zeta)v_2(\zeta,t)\dd\zeta+a_0\,\eta(t), \label{eq:physV}\\
U_c(k,t)&=\frac{i}{2\pi k}\Big(\int_0^{d}\!\partial_yf_{21}(1,\zeta)v_1(\zeta,t)\dd\zeta+\int_d^{1}\!\partial_yf_{22}(1,\zeta)v_2(\zeta,t)\dd\zeta+g_2'(1)\eta(t)\Big), \label{eq:physU}
\end{align}
\end{subequations}
with $a_1,a_2,a_0$ the finite-rank profiles of the pre-feedback $K_V$ of Lemma \ref{lem:heymann}, which is fixed before \eqref{eq:fred} is built, and the gains of $U_c$ the wall traces of the kernels of \eqref{eq:fred}, obtained by imposing $\partial w_2(1)=0$ on \eqref{eq:fredw} and using \eqref{eq:p4b}, the closed loop consisting of the linearization of \eqref{eq:ns}--\eqref{eq:ifc} about \eqref{eq:base}, whose modes are \eqref{eq:plant}, under the feedback \eqref{eq:fb}, has the following two properties.

\begin{enumerate}
\item[(i)] \emph{Well-posedness.} It generates an analytic semigroup on the closed subspace $X_F\subset X$ on which $v_2(k,1)=V_c(k)$ for $0<|k|<K$, so for every initial datum in $X_F$ the solution exists and is unique in $C([0,\infty);X)\cap C^1((0,\infty);X)$, is real for real data, and has pressure $p_j(\cdot,t)\in L^2(\mathbb{T}_L\times(0,d))\ \text{and}\ L^2(\mathbb{T}_L\times(d,1))$ for $t>0$.
\item[(ii)] \emph{Decay at the prescribed rate.}
\begin{equation}\label{eq:decay}
\|x(t)\|_X\le C\,e^{-qt}\,\|x(0)\|_X,\qquad t\ge0,
\end{equation}
\end{enumerate}
for all $\mu_1,\mu_2,\rho_1,\rho_2>0$, $\Re>0$, and $L>0$, where
\begin{equation}\label{eq:K}
K=\max\Big\{K_0,\ \frac{2q\,(\mu_1+\mu_2)}{\pi\Re\,\Sigma}\Big\},
\end{equation}
$K_0$ depends on $\mu_j,\rho_j,d,\Re,G,\Sigma$ and not on $q$, and $C$ depends on those, on $L$, and on $q$.
\end{theorem}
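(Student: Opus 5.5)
The proof splits by wavenumber into three regimes, $k=0$, $0<|k|<K$ and $|k|\ge K$, and then reassembles the modal estimates by Parseval in the norm \eqref{eq:X}. Since the closed loop is linear and diagonal in $k$, \eqref{eq:decay} follows once each modal semigroup decays at rate $q$ with a constant uniform in $k$. Well-posedness in (i) follows once each modal generator is sectorial with sector constants uniform in $k$. The three regimes differ in what has to be shown, so I treat them in turn.

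\emph{High wavenumbers, $|k|\ge K$.} Here the feedback is zero and the modal system is \eqref{eq:plant} with homogeneous walls. The uncontrolled perturbation terms are $-2\pi ik\rho_jU_j\E v_j+2\pi ik\rho_jU_j''v_j$ and the Yih term in \eqref{eq:p6}. Relative to viscous dissipation these are of lower order: they are $O(k)$ against the $O(k^2)$ bulk damping. I would build an energy functional with the inner product \eqref{eq:ip} and integrate by parts using the transmission conditions \eqref{eq:p5}--\eqref{eq:p8}. This produces the strain-rate dissipation $\sum_j\frac{\mu_j}{\Re}\int_j|\E v_j|^2$ together with the interface term coming from \eqref{eq:p8}--\eqref{eq:p9}. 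The bulk modes then decay at rate $\gtrsim k^2/\Re$. The slow mode is the capillary relaxation of $\eta$, whose rate is $\frac{\pi\Sigma\Re}{\mu_1+\mu_2}|k|(1+o(1))$ from the mobility computation announced in the introduction. I would obtain it by a one-mode reduction: solve the quasi-static Stokes problem driven by the $\eta$ term and read off the decay coefficient of $\dot\eta$ in \eqref{eq:p9}. Requiring that rate to exceed $q$, with a safety factor of $2$ to absorb the $O(1)$ Yih and shear corrections, gives the second entry of \eqref{eq:K}. $K_0$ collects the threshold beyond which these corrections are dominated, and it does not depend on $q$. Assumption \ref{ass:sigma} is used exactly here, and the constant in $\varpi^+_k$ handles indefinite gravity.

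\emph{Controlled band, $0<|k|<K$.} I would follow \cite{Kr26} verbatim, with one change: the single-fluid spectrum is replaced by the two-fluid one.
\begin{enumerate}
\item Apply the finite-rank pre-feedback $K_V$ of Lemma \ref{lem:heymann}, which makes every mode approximately controllable from $U_c$.
\item Take the target \eqref{eq:target}, which decays at rate $q$ by Proposition \ref{prop:target}.
\item Show that the target's eigenfunctions form a Riesz basis of $\cH_k$, and that the plant's eigenfunctions form one by perturbation. For the latter I would use the rank-one nature of the Yih coupling together with Bari/Kato-type estimates.
\item Build $\cT$ in \eqref{eq:fred} as the operator sending plant eigenvectors to suitably scaled target eigenvectors, with coefficients fixed by the wall condition $\partial w_2(1)=0$; this yields the gains \eqref{eq:physU}.
\item Prove that $\cT$ is bounded with bounded inverse on $\cH_k$. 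I would use quadratic closeness of the bases together with injectivity, which follows from the Hautus condition, and then Fredholm alternative.
\end{enumerate}

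The main obstacle is step 5. Boundedness requires the high eigenvalues of the coupled two-layer problem to be simple and uniformly separated, which is the spectral claim in the contributions. My first attempt would be an asymptotic characteristic equation. I would write WKB solutions in each layer, join them through \eqref{eq:p5}--\eqref{eq:p8}, and show that the interface coupling splits the would-be collisions by a gap bounded below for all ratios $M,d$. Where Assumption \ref{ass:d} enters is the sign of the Yih coefficient, or the nonvanishing of a wall coefficient in the controllability test; I would check the wall traces of eigenfunctions there.

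\emph{Mean flow, $k=0$, and assembly.} For $k=0$, $\cA_0$ is self-adjoint, and the Hautus definition \eqref{eq:qmean} lets finitely many controllable modes slower than $q$ be assigned by a finite-rank $F_0$, while the uncontrollable ones already decay faster than $q$. The mean of $\eta$ is zero by \eqref{eq:X}. Because $K<\infty$, the constant $C$ is the maximum of finitely many modal constants and the uniform high-$k$ energy constant. Analyticity and the properties of $X_F$ come from sectoriality of each block, with uniformity across the band because it is finite. Realness follows from $F_{-k}\bar x=\overline{F_kx}$, and the pressure regularity follows by recovering $p_j$ from \eqref{eq:mx} for $t>0$.
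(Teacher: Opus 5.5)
Your top-level argument matches the paper's proof. You use the same three regimes and the same $K=\max\{K_0,q/\gamma\}$, with $\gamma=\pi\Re\Sigma/(2(\mu_1+\mu_2))$ equal to half the capillary rate, so that a $q$-independent $K_0$ absorbs the $O(1)$ correction. You also assemble by Parseval with $C$ the largest of the mean, band and tail constants.

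\textbf{The gap: the tail $|k|\ge K$.} This is the only place where uniformity over infinitely many $k$ is needed, both for the constant in \eqref{eq:decay} and for the uniform sectoriality behind (i). Your mechanism does not supply it.
\begin{itemize}
\item An energy identity in \eqref{eq:ip} cannot give it. With the capillary weight $16\pi^4k^4\Sigma$, the capillary term of \eqref{eq:p8} paired with $\bar v_2(d)$ and \eqref{eq:p9} paired with $\bar\eta$ cancel exactly. So only the fields are dissipated, and nothing forces $\eta$ to decay. The extra constant in $\varpi^+_k$ only leaves a cross term of no sign.
\item The Yih term of \eqref{eq:p6} is a slope jump. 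Integrating by parts therefore leaves $\mu_1\E v_1(d)\,\bar\eta$, a second-order trace bounded by neither the energy nor the dissipation.
\item The quasi-static one-mode reduction at best locates the slow eigenvalue. Since $\cA_k$ is non-normal, an eigenvalue bound does not give $\|e^{t\cA_k}\|\le Ce^{-\gamma|k|t}$ with $C$ independent of $k$.
\item The claim that ``the bulk modes decay at rate $k^2/\Re$'' presupposes an invariant splitting that you never construct.
\end{itemize}
That uniform splitting is the missing idea (Lemmas \ref{lem:tail} and \ref{lem:sector}):
\begin{itemize}
\item After lifting the Yih term (Proposition \ref{prop:ops}), $\cA_k=\cS_k+\cB_k$ with $\|\cB_k\|=O(|k|^{3/2})$. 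This is small against a gap $\gtrsim k^2/\Re$ between the field spectrum of $\cS_k$ and the eigenvalue $0$ of its decoupled $\eta$-row.
\item The Riesz projection onto the single eigenvalue $\lambda_{\rm int}(k)$ inside a disk of radius $\asymp k^2/\Re$ lies within $O(|k|^{-1/2})$ of the unperturbed projection.
\item Kato's intertwiner, uniformly close to $I$, then block-diagonalizes $\cA_k$ into the scalar $\lambda_{\rm int}(k)$ and a fast block. The fast block is sectorial with $k$-independent constants by a Neumann series.
\end{itemize}
Your quasi-static computation belongs in the scalar block, made rigorous by the stretched variable and Rouch\'e. The halving in $\gamma$ also pays for the $t^{-1}$ bound in \eqref{eq:tailsg}.

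\textbf{The controlled band.} Step 4 as written is inconsistent with \eqref{eq:intertwine}. An intertwiner sends closed-loop eigenvectors to target eigenvectors with the same eigenvalue, and $\lambda'_n\ne\tilde\lambda_n$ by nonresonance. So $\cT$ cannot send the plant eigenvector $\phi'_n$ to a multiple of $e^0_n$.

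What goes to $e^0_n$ is the design vector $X_n=R'(\tilde\lambda_n)c_n$ of Definition \ref{def:design}. It is the pre-fed-back plant's response at the target eigenvalue to tangential data $c_n=(\tilde\lambda_n-\lambda'_n)/b'_n$, and $F^U_k$ is then defined by $F^U_kX_n=c_n$.

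Bounded invertibility of $\cT$ needs two inputs beyond the spectral gap, and you supply neither:
\begin{itemize}
\item The growth $|b'_n|\asymp n$ of the tangential wall coefficients of the high adjoint eigenvectors (Lemma \ref{lem:wallplant}, Proposition \ref{prop:prefed}). It comes from the cosine structure of the high eigenfunctions at their walls. It is what gives $|c_n|\le C/n$, and hence $\sum|c_n|^2<\infty$ and $\sum\|X_n-\phi'_n\|_k^2<\infty$.
\item Completeness of the design family. This does not follow from the pointwise Hautus test. It comes from the entire-function argument of Lemma \ref{lem:complete}, where $b'_m\ne0$ and low-block controllability are used only after $GP/Q$ has been shown to be a bounded entire function tending to $0$.
\end{itemize}
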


The feedback is real for real data by the conjugate symmetry.

\subsection{Why two wall inputs, and what they achieve}

Controllability from the two wall velocities is decided by the Fattorini--Hautus criterion, applied to the wall functionals and the adjoint at the interface. The plant per wavenumber is $\dot x=\cA_kx+B_{V_c}V_c+B_{U_c}U_c$ on $\cH_k$, with $\cA_k$ the operator of \eqref{eq:plant} at $V_c=U_c=0$ and $B_{V_c},B_{U_c}$ the boundary control operators of the two wall traces in \eqref{eq:p4}. Their adjoint pairings are
\begin{equation}\label{eq:pairings}
\langle B_{V_c},\psi\rangle=-\frac{\mu_2}{\Re}\overline{\partial\E\psi_2(1)},\quad \langle B_{U_c},\psi\rangle=-2\pi ik\frac{\mu_2}{\Re}\overline{\E\psi_2(1)},
\end{equation}
obtained by pairing $\frac{\mu_2}{\Re}\E^2$ with $\psi_2$ and integrating by parts twice, the adjoint domain requiring $\psi_2(1)=\partial\psi_2(1)=0$. Every eigenvalue is controllable from $(V_c,U_c)$ if and only if no eigenvector of $\cA_k^*$ annihilates both, which Theorem \ref{thm:two} establishes but for one exception excluded by Assumption \ref{ass:d}. Its proof uses the interface conditions on such an eigenvector.

\begin{lemma}[Adjoint transmission conditions]\label{lem:adjoint}
Let $w_k:=4\pi^2k^2\varpi^+_k$ be the weight of $\eta$ in \eqref{eq:ip}. Then $\psi=(\zeta_1,\zeta_2,\chi)$ satisfies $\cA_k^*\psi=\bar\lambda\psi$ if and only if
\begin{subequations}\label{eq:adj}
\begin{align}
\frac{\mu_j}{\Re}\E^2\zeta_j+2\pi ik\rho_j\E(U_j\zeta_j) &\nonumber\\
-2\pi ik\rho_jU_j''\zeta_j &= \bar\lambda\rho_j\E\zeta_j, \label{eq:a1}\\
0 &= \zeta_1(0)=\partial\zeta_1(0)=\zeta_2(1)=\partial\zeta_2(1), \label{eq:a2}\\
0 &= \big[\zeta_1-\zeta_2\big]_{y=d}=\partial\big[\zeta_1-\zeta_2\big]_{y=d}, \label{eq:a3}\\
\big[\mu_1\E\zeta_1-\mu_2\E\zeta_2\big]_{y=d} &= -8\pi^2k^2(\mu_1-\mu_2)\,\zeta(d), \label{eq:a4}\\
\big[\mu_1\partial\E\zeta_1-\mu_2\partial\E\zeta_2\big]_{y=d} &= \Re(\rho_1-\rho_2)\big(\bar\lambda-2\pi ikU_I\big)\partial\zeta(d) \nonumber\\
&\quad+8\pi^2k^2(\mu_1-\mu_2)\,\partial\zeta(d) \nonumber\\
&\quad-2\pi ik\,\Re\,\rho_1\Big(1-\frac1M\Big)U_1'(d)\,\zeta(d)-\Re\,w_k\,\chi, \label{eq:a5}\\
w_k\Re\big(\bar\lambda-2\pi ikU_I\big)\chi &= 2\pi ik\Big(1-\frac1M\Big)U_1'(d)\,\mu_1\,\E\zeta_1(d) \nonumber\\
&\quad+16\pi^3ik^3\Big(1-\frac1M\Big)U_1'(d)\,\mu_1\,\zeta(d) \nonumber\\
&\quad-2\pi ik\Big(1-\frac1M\Big)U_1'(d)\,\Re\rho_1\big(\bar\lambda-2\pi ikU_I\big)\,\zeta(d) \nonumber\\
&\quad-4\pi^2k^2\Re\big[(\rho_1-\rho_2)G+4\pi^2k^2\Sigma\big]\zeta(d), \label{eq:a6}
\end{align}
\end{subequations}
where $\zeta(d)$, $\partial\zeta(d)$ denote the common traces of \eqref{eq:a3}.
\end{lemma}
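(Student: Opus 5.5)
The plan is to compute the adjoint directly from the Lagrange identity. I will take $x=(v_1,v_2,\eta)\in D(\cA_k)$, so that $v_2(1)=\partial v_2(1)=0$ and all the transmission conditions \eqref{eq:p5}--\eqref{eq:p9} hold. I will also take a smooth candidate $\psi=(\zeta_1,\zeta_2,\chi)$ and expand $\langle\cA_kx,\psi\rangle_k$ in the inner product \eqref{eq:ip}.

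The first step is to rewrite the field part of \eqref{eq:ip} by integrating by parts. Up to boundary terms, $\rho_j(\langle\partial v,\partial\zeta\rangle+4\pi^2k^2\langle v,\zeta\rangle)=-\rho_j\langle\E v,\zeta\rangle$. This is why $\rho_j\partial_t\E v_j$ in \eqref{eq:p1} pairs naturally with $\zeta_j$. Substituting the right side of \eqref{eq:p1}, I integrate $\frac{\mu_j}{\Re}\E^2v_j\,\bar\zeta_j$ by parts four times, and the convective terms twice. This moves every operator onto $\zeta_j$ and gives \eqref{eq:a1}, where the term $\E(U_j\zeta_j)$ comes from transposing $U_j\E$. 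The wall terms at $y=0$ vanish by \eqref{eq:p3} if and only if $\zeta_1(0)=\partial\zeta_1(0)=0$. At $y=1$ the terms containing $\E v_2(1)$ and $\partial\E v_2(1)$ are free, so they force $\zeta_2(1)=\partial\zeta_2(1)=0$; this gives \eqref{eq:a2}.

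The substance of the proof is at $y=d$. There the boundary terms from both layers combine into jumps of $v$, $\partial v$, $\E v$, $\partial\E v$ against traces of $\zeta$, plus the $\eta$-component $w_k\dot\eta\bar\chi$ coming from \eqref{eq:p9}. I will eliminate the jumps of $v$, $\partial v$, $\E v$, $\partial\E v$ using \eqref{eq:p5}--\eqref{eq:p8}, and what remains is a bilinear form in the free traces $v(d)$, $\partial v_1(d)$, $\partial v_2(d)$, $\E v_1(d)$, $\partial\E v_1(d)$ and $\eta$. Setting the coefficient of each free trace to zero gives the adjoint conditions:
\begin{itemize}
\item the $\partial\E$ and $\E$ traces give continuity \eqref{eq:a3};
\item $\partial v$ gives \eqref{eq:a4}, which is self-adjoint in form because $-8\pi^2k^2(\mu_1-\mu_2)$ multiplies a common trace;
\item $v(d)$ gives \eqref{eq:a5}, and the $\chi$ term enters through $v_2(d)$ in \eqref{eq:p9};
\item $\eta$ gives \eqref{eq:a6}, collecting the Yih coefficient from \eqref{eq:p6}, the gravity--capillary term from \eqref{eq:p8}, and the transport $-2\pi ikU_I$ conjugated to $\bar\lambda-2\pi ikU_I$.
\end{itemize}
The converse direction (``if'') follows by reading the same identity backwards: when \eqref{eq:adj} holds, all boundary terms cancel, so $\langle\cA_kx,\psi\rangle=\langle x,\bar\lambda\psi\rangle$ on $D(\cA_k)$, which is dense.

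The step I expect to be the main obstacle is the pencil term $\Re(\rho_1-\rho_2)\dt{\partial v_2}(d)$ in \eqref{eq:p8}. It contains a time derivative of a trace, so $\cA_k$ is not defined by \eqref{eq:plant} alone, and the eigen-relation must be posed through the pencil. I would handle it by replacing $\partial_t$ by $\lambda$ in the eigen-identity $\langle\cA_kx,\psi\rangle=\lambda\langle x,\psi\rangle$. That produces the $\bar\lambda$-dependent terms in \eqref{eq:a5} and \eqref{eq:a6} (the factors $\Re\rho_1(\bar\lambda-2\pi ikU_I)$). It also requires checking the conventions: $\overline{\partial\zeta}$ versus $\partial\bar\zeta$, the sign of $(1-1/M)$, and the precise use of $w_k$ rather than the signed potential $(\rho_2-\rho_1)G-4\pi^2k^2\Sigma$. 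That last choice explains why the combination $(\rho_1-\rho_2)G+4\pi^2k^2\Sigma$ appears in \eqref{eq:a6}. I would verify the signs by specializing to $\rho_1=\rho_2$ and $M=1$, where the problem reduces to the single-fluid adjoint of \cite{Kr26}.
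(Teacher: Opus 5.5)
Your plan is correct and is essentially the paper's own proof. Both take the slab-by-slab Green identity, with the mass form and the $\eta$-row carrying $\lambda$ (which is how the pencil term of \eqref{eq:p8} enters), obtain \eqref{eq:a2} from the walls, and set the coefficients of the free interface quantities to zero in the same assignment: $\partial\E$ and $\E$ traces give \eqref{eq:a3}, the slope gives \eqref{eq:a4}, $v(d)$ gives \eqref{eq:a5}, and $\eta$ gives \eqref{eq:a6}. The only cosmetic difference is that you leave the layer-1 higher traces free where the paper leaves the layer-2 ones.

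One bookkeeping fix: after \eqref{eq:p5}--\eqref{eq:p8} only five quantities are free, so \eqref{eq:p6} must remove one of $\partial v_1(d)$, $\partial v_2(d)$ from your list of six. Removing $\partial v_1(d)=\partial v_2(d)+2\pi ik(1-\frac1M)U_1'(d)\eta$ yields the $\mu_1\E\zeta_1(d)$ term of \eqref{eq:a6} as written. Also, the bracket $(\rho_1-\rho_2)G+4\pi^2k^2\Sigma$ in \eqref{eq:a6} is simply minus the bracket of \eqref{eq:p8}; it is not an effect of weighting $\eta$ by $w_k$.
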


\begin{theorem}[Every eigenvalue is controllable from the two wall inputs]\label{thm:two}
Let $k\ne0$, let $\lambda$ be an eigenvalue of $\cA_k$, and let $\psi=(\zeta_1,\zeta_2,\chi)$ satisfy $\cA_k^*\psi=\bar\lambda\psi$ and $\langle B_{V_c},\psi\rangle=\langle B_{U_c},\psi\rangle=0$. Then $\psi=0$, unless $\lambda=\lambda_0:=-2\pi ikU_I$ and the solution $\varphi$ of
\begin{subequations}\label{eq:exc}
\begin{align}
\frac{\mu_1}{\Re}\E^2\varphi+2\pi ik\rho_1\E(U_1\varphi)-2\pi ik\rho_1U_1''\varphi&=\overline{\lambda_0}\,\rho_1\E\varphi,\qquad 0<y<d, \label{eq:exception}\\
\varphi(d)=\partial\varphi(d)=\E\varphi(d)&=0,\qquad \partial\E\varphi(d)=1, \label{eq:exception2}
\end{align}
\end{subequations}
satisfies $\varphi(0)=\partial\varphi(0)=0$.
\end{theorem}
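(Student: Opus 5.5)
The proof is a unique-continuation argument for the fourth-order adjoint ODE, run first across fluid 2 and then across the interface into fluid 1. Suppose $\psi=(\zeta_1,\zeta_2,\chi)$ satisfies \eqref{eq:adj} and both pairings \eqref{eq:pairings} vanish. Since $k\ne0$, the pairings give $\E\zeta_2(1)=\partial\E\zeta_2(1)=0$. Combined with $\zeta_2(1)=\partial\zeta_2(1)=0$ from \eqref{eq:a2}, all four Cauchy data of $\zeta_2$ vanish at $y=1$. Equation \eqref{eq:a1} with $j=2$ is a regular linear fourth-order ODE on $[d,1]$: its leading coefficient is $\mu_2/\Re\ne0$ and its coefficients are polynomial. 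Uniqueness for the initial value problem therefore gives $\zeta_2\equiv0$ on $[d,1]$.

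Next I transfer this to fluid 1 through the transmission conditions. By \eqref{eq:a3}, $\zeta_1(d)=\partial\zeta_1(d)=0$. Then \eqref{eq:a4} gives $\mu_1\E\zeta_1(d)=\mu_2\E\zeta_2(d)-8\pi^2k^2(\mu_1-\mu_2)\zeta(d)=0$, so $\E\zeta_1(d)=0$. Substituting $\zeta(d)=\partial\zeta(d)=\E\zeta_1(d)=0$ into \eqref{eq:a6} leaves
\begin{equation*}
w_k\Re(\bar\lambda-2\pi ikU_I)\chi=0 .
\end{equation*}
Here $w_k>0$ because $\varpi^+_k\ge1$. Equation \eqref{eq:a5} reduces to $\mu_1\partial\E\zeta_1(d)=-\Re\,w_k\chi$.

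The argument now splits into two cases.
\begin{itemize}
\item[$\bullet$] \emph{Case $\bar\lambda\ne 2\pi ikU_I$, i.e.\ $\lambda\ne\lambda_0$.} The reduced \eqref{eq:a6} forces $\chi=0$. Then \eqref{eq:a5} gives $\partial\E\zeta_1(d)=0$, so all Cauchy data of $\zeta_1$ vanish at $y=d$. Uniqueness for \eqref{eq:a1} with $j=1$ gives $\zeta_1\equiv0$, hence $\psi=0$.
\item[$\bullet$] \emph{Case $\lambda=\lambda_0$.} Here $\overline{\lambda_0}=2\pi ikU_I$ and $\chi$ is unconstrained by \eqref{eq:a6}. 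By linearity and uniqueness, $\zeta_1=-(\Re\,w_k\chi/\mu_1)\varphi$, with $\varphi$ the solution of \eqref{eq:exception}--\eqref{eq:exception2}. The remaining conditions are $\zeta_1(0)=\partial\zeta_1(0)=0$ from \eqref{eq:a2}.
	\begin{itemize}
	\item[$\circ$] If $\varphi(0)$ or $\partial\varphi(0)$ is nonzero, these conditions force $\chi=0$, hence $\zeta_1\equiv0$ and $\psi=0$.
	\item[$\circ$] If $\varphi(0)=\partial\varphi(0)=0$, then $\psi=(-\Re w_k\chi\varphi/\mu_1,0,\chi)$ with $\chi\neq 0$ is a genuine nonzero adjoint eigenvector annihilating both inputs. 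This is exactly the stated exception.
	\end{itemize}
\end{itemize}

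In this exceptional case one must also check that $\lambda_0$ is then indeed an eigenvalue of $\cA_k$, so that the statement is consistent. This follows because $\bar\lambda_0$ is an eigenvalue of $\cA_k^*$ with this eigenvector. For the unbounded operator this is justified by the compact resolvent of $\cA_k$ (Riesz basis, cited from \cite{Kr26}), so that $\sigma(\cA_k^*)=\overline{\sigma(\cA_k)}$.

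I expect the main obstacle to be the bookkeeping, not the logic. One must confirm that \eqref{eq:pairings} really delivers both $\E\zeta_2(1)$ and $\partial\E\zeta_2(1)$ with nonzero prefactors. One must also confirm that \eqref{eq:a4}--\eqref{eq:a6} are used with the correct one-sided traces, in particular $\E\zeta_1(d)$ rather than a jump, since $\zeta_2\equiv0$ makes every fluid-2 trace vanish. If the pencil structure for $\rho_1\ne\rho_2$ makes the adjoint eigenproblem generalized rather than standard, the argument is unchanged, since only \eqref{eq:adj} as stated is used.
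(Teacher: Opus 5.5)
Your proposal is correct and follows the paper's proof essentially step for step: the two vanishing pairings together with \eqref{eq:a2} kill all four Cauchy data of $\zeta_2$ at $y=1$; the transmission conditions \eqref{eq:a3}--\eqref{eq:a5} then reduce $\zeta_1$ to the multiple $-(\Re\,w_k\chi/\mu_1)\varphi$; and \eqref{eq:a6} collapses to $w_k\Re(\bar\lambda-2\pi ikU_I)\chi=0$, which forces $\chi=0$ except at $\lambda_0$. Your closing check that $\lambda_0$ is then an eigenvalue of $\cA_k$ is harmless but not needed, since the statement only asserts $\psi=0$ outside the exception.
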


\begin{proof}
By \eqref{eq:pairings} and \eqref{eq:a2}, $\zeta_2(1)=\partial\zeta_2(1)=\E\zeta_2(1)=\partial\E\zeta_2(1)=0$, so \eqref{eq:a1} on $(d,1)$ gives $\zeta_2\equiv0$, and all traces of $\zeta_2$ at $y=d$ vanish. Then \eqref{eq:a3} gives $\zeta_1(d)=\partial\zeta_1(d)=0$, \eqref{eq:a4} gives $\E\zeta_1(d)=0$, and \eqref{eq:a5} gives $\mu_1\partial\E\zeta_1(d)=-\Re w_k\chi$, so $\zeta_1$ is $-\frac{\Re w_k}{\mu_1}\chi\,\varphi$ with $\varphi$ as in \eqref{eq:exception} at the eigenvalue $\lambda$ in place of $\lambda_0$. Equation \eqref{eq:a6} with $\zeta_1(d)=\E\zeta_1(d)=0$ reads $w_k\Re(\bar\lambda-2\pi ikU_I)\chi=0$. If $\lambda\ne\lambda_0$ then $\chi=0$, hence $\zeta_1\equiv0$ and $\psi=0$. If $\lambda=\lambda_0$, then $\chi\ne0$ is possible only if $\zeta_1=-\frac{\Re w_k}{\mu_1}\chi\varphi$ satisfies \eqref{eq:a2} at $y=0$, which is the stated exception.
\end{proof}

\begin{remark}
When $(1-\tfrac1M)U_1'(d)=0$ and $(\rho_2-\rho_1)G=4\pi^2k^2\Sigma$, the vector $(0,0,\eta)$ is an eigenvector of the open-loop plant at $\lambda_0=-2\pi ikU_I$, with vanishing wall traces. It is nonetheless controllable from the wall, Theorem \ref{thm:two}: the feedback acts on $\eta$, and the wall inputs reach $\eta$ through $v_2(d)$ in \eqref{eq:p9}.
\end{remark}

\begin{proposition}[Exception of Theorem \ref{thm:two} cannot occur under Assumption \ref{ass:d}]\label{prop:phi}
Under Assumption \ref{ass:d}, the solution $\varphi$ of \eqref{eq:exc} does not satisfy $\varphi(0)=\partial\varphi(0)=0$. If Assumption \ref{ass:d} fails and it does, then
\begin{equation}\label{eq:phiRe}
\Re\ \ge\ \frac{\sqrt2\,\pi^2\mu_1}{\rho_1\,d^2\,\|U_1'\|_{L^\infty(0,d)}}.
\end{equation}
\end{proposition}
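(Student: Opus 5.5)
The plan is an energy identity. Assume $\varphi(0)=\partial\varphi(0)=0$; then $\varphi$ is clamped at both ends of $(0,d)$. Multiply \eqref{eq:exception} by $\bar\varphi$, integrate over $(0,d)$, and take the imaginary part to prove the first claim and the real part to prove \eqref{eq:phiRe}. Note first that $\varphi\not\equiv0$, because $\partial\E\varphi(d)=1$.

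\emph{Rewriting the equation.} Since $\overline{\lambda_0}=2\pi ikU_I$ and $U_I=U_1(d)$, we have $\E(U_1\varphi)-U_I\E\varphi=\E((U_1-U_I)\varphi)$. So \eqref{eq:exception} becomes
\begin{equation*}
\frac{\mu_1}{\Re}\E^2\varphi+2\pi ik\rho_1\E\big((U_1-U_I)\varphi\big)-2\pi ik\rho_1U_1''\varphi=0 .
\end{equation*}

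\emph{Integrating against $\bar\varphi$.} Because $\varphi$ and $\partial\varphi$ vanish at $y=0$ and $y=d$, every integration by parts is free of boundary terms. This gives
\[
\int_0^d\E^2\varphi\,\bar\varphi=\|\E\varphi\|^2,\qquad
\int_0^d\E\big((U_1-U_I)\varphi\big)\bar\varphi=-\int_0^d(U_1-U_I)\big(|\partial\varphi|^2+4\pi^2k^2|\varphi|^2\big)-\int_0^dU_1'\varphi\,\partial\bar\varphi .
\]
The last integral has real part $\tfrac12\int_0^d U_1'(|\varphi|^2)'=-\tfrac12\int_0^d U_1''|\varphi|^2=4\|\varphi\|^2$, using $\mu_1U_1''=-8$ and $\mu_1=1$.

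\emph{Imaginary part, and the first claim.} Dividing the imaginary part of the identity by $2\pi k\rho_1$ leaves
\[
\int_0^d(U_I-U_1)\big(|\partial\varphi|^2+4\pi^2k^2|\varphi|^2\big)+4\|\varphi\|^2=0 .
\]
The $U_1''$ term contributes $+8\|\varphi\|^2$ and the $\mathrm{Re}\int_0^d U_1'\varphi\,\partial\bar\varphi$ term contributes $-4\|\varphi\|^2$.

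Now use Assumption \ref{ass:d}. The profile $U_1$ in \eqref{eq:profile} is a concave parabola whose maximum lies at $d_*=1/(1+\sqrt M)$. So $d\le d_*$ makes $U_1$ nondecreasing on $[0,d]$, hence $U_I-U_1\ge0$ there. The left side of the identity is then at least $4\|\varphi\|^2>0$, which is a contradiction. Hence $\varphi(0)=\partial\varphi(0)=0$ cannot hold.

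The only point to check with care is the location of the maximum of $U_1$. I would verify $U_1'(d)\ge0$ directly from \eqref{eq:UI}: the sign is that of $1-2d+(1-M)d^2$, whose root in $(0,1)$ is $1/(1+\sqrt M)$. This holds for either sign of $1-M$.

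\emph{Real part, and \eqref{eq:phiRe}.} Suppose Assumption \ref{ass:d} fails and the exception occurs. The real part of the identity is
\[
\frac{\mu_1}{\Re}\|\E\varphi\|^2=\mathrm{Re}\Big(2\pi ik\rho_1\int_0^dU_1'\varphi\,\partial\bar\varphi\Big),
\]
so $\frac{\mu_1}{\Re}\|\E\varphi\|^2\le2\pi|k|\rho_1\|U_1'\|_\infty\|\varphi\|\|\partial\varphi\|$.

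For the left side I would use three facts. First, expanding $\E\varphi$ gives $\|\E\varphi\|^2=\|\partial^2\varphi\|^2+8\pi^2k^2\|\partial\varphi\|^2+16\pi^4k^4\|\varphi\|^2$. Second, $\|\partial\varphi\|^2\le\|\varphi\|\|\partial^2\varphi\|$, by integration by parts with the clamped ends. Third, the Poincaré inequalities $\|\partial\varphi\|\ge(\pi/d)\|\varphi\|$ and $\|\partial^2\varphi\|\ge(\pi/d)\|\partial\varphi\|$. Combining these with AM--GM should give $\|\E\varphi\|^2\ge c\,|k|\,\pi^3d^{-2}\|\varphi\|\|\partial\varphi\|$. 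Comparing with the upper bound, the factor $|k|$ cancels and the constant $\sqrt2\pi^2\mu_1/(\rho_1d^2\|U_1'\|_\infty)$ emerges.

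Tracking this constant exactly, so that $\sqrt2$ appears rather than a weaker number, is the main obstacle. I would first try the pairing $\|\partial^2\varphi\|^2+16\pi^4k^4\|\varphi\|^2\ge8\pi^2k^2\|\varphi\|\|\partial^2\varphi\|\ge8\pi^2k^2\|\partial\varphi\|^2$ together with the middle term. If that does not reproduce the constant, I would keep the weaker constant, which still yields a bound of the form \eqref{eq:phiRe}.
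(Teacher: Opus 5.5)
\textbf{First claim.} This part is correct and is the paper's argument. Pairing with $\bar\varphi$ and taking the imaginary part gives exactly \eqref{eq:phiid}, and $U_1\le U_I$ on $[0,d]$ under \eqref{eq:dbound} gives the contradiction.

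One sentence needs correcting. The vertex of $U_1=4y(G_0-y)$ is at $G_0/2$, which moves with $d$; it is not at $1/(1+\sqrt M)$. What is true, and what your check through \eqref{eq:UI} gives (the denominator $1-(1-M)d$ being positive), is that $U_1'(d)\ge0$ exactly when $d\le1/(1+\sqrt M)$. Concavity then gives $U_1'\ge U_1'(d)\ge0$ on $[0,d]$.

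\textbf{The gap is in \eqref{eq:phiRe}.} Your upper bound $\frac{\mu_1}{\Re}\|\E\varphi\|^2\le2\pi|k|\rho_1\|U_1'\|_\infty\|\varphi\|\|\partial\varphi\|$ is the paper's. But you never establish the lower bound on $\|\E\varphi\|^2$ that fixes the constant.

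The pairing you plan to try first cannot give a bound uniform in $k$. Combined with the middle term it yields $\|\E\varphi\|^2\ge16\pi^2k^2\|\partial\varphi\|^2\ge16\pi^3k^2d^{-1}\|\partial\varphi\|\|\varphi\|$. That gives only $\Re\ge8\pi^2\mu_1|k|/(\rho_1 d\|U_1'\|_\infty)$, which is weaker than \eqref{eq:phiRe} whenever $|k|d<1/(4\sqrt2)$ and tends to zero with $|k|$. Falling back on ``a weaker constant'' would not prove the statement, whose constant is explicit.

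\textbf{How the paper closes it.} The paper uses two separate bounds.
\begin{enumerate}
\item Poincar\'e twice, $\|\partial^2\varphi\|^2\ge\pi^3d^{-3}\|\partial\varphi\|\|\varphi\|$, gives $\Re\ge\pi^2\mu_1/(2\rho_1\|U_1'\|_\infty|k|d^3)$, which is good for small $|k|$.
\item AM--GM on the last two terms, $8\pi^2k^2\|\partial\varphi\|^2+16\pi^4k^4\|\varphi\|^2\ge16\sqrt2\pi^3|k|^3\|\partial\varphi\|\|\varphi\|$, gives $\Re\ge8\sqrt2\pi^2\mu_1k^2/(\rho_1\|U_1'\|_\infty)$, which is good for large $|k|$.
\end{enumerate}
The two bounds cross at $|k|d=2^{-3/2}$, and the larger of them, minimized over $|k|$, is exactly \eqref{eq:phiRe}.

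\textbf{How your own scheme can close it.} Your single-inequality plan also works if you pair $\|\partial^2\varphi\|^2$ with the middle term rather than the last. By AM--GM and your two Poincar\'e inequalities, $\|\partial^2\varphi\|^2+8\pi^2k^2\|\partial\varphi\|^2\ge4\sqrt2\pi|k|\|\partial^2\varphi\|\|\partial\varphi\|\ge4\sqrt2\pi^3|k|d^{-2}\|\partial\varphi\|\|\varphi\|$. So $|k|$ cancels as you anticipated, with $c=4\sqrt2$. This gives $\Re\ge2\sqrt2\pi^2\mu_1/(\rho_1d^2\|U_1'\|_\infty)$, twice the stated bound, with no minimization over $k$.
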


\begin{proof}
With $\overline{\lambda_0}=2\pi ikU_I$ and $\E(U_1\varphi)-U_1''\varphi=U_1\E\varphi+2U_1'\varphi'$, equation \eqref{eq:exception} reads $\frac{\mu_1}{\Re}\E^2\varphi+2\pi ik\rho_1[(U_1-U_I)\E\varphi+2U_1'\varphi']=0$. Suppose $\varphi(0)=\partial\varphi(0)=0$. Pairing with $\bar\varphi$ on $(0,d)$, all boundary terms vanish, since $\varphi$ and $\varphi'$ vanish at both ends, and $\frac{\mu_1}{\Re}\|\E\varphi\|^2+2\pi ik\rho_1J=0$ with $J:=\int_0^d[(U_1-U_I)\E\varphi+2U_1'\varphi']\bar\varphi$. Integrating by parts, $J=-\int_0^d(U_1-U_I)(|\varphi'|^2+4\pi^2k^2|\varphi|^2)+\int_0^dU_1'\varphi'\bar\varphi$, and $\Rea\int_0^dU_1'\varphi'\bar\varphi=-\frac12\int_0^dU_1''|\varphi|^2=4\int_0^d|\varphi|^2$. The imaginary part of the paired equation is $2\pi k\rho_1\Rea J=0$, hence
\begin{equation}\label{eq:phiid}
\int_0^d(U_1-U_I)\big(|\varphi'|^2+4\pi^2k^2|\varphi|^2\big)dy=4\int_0^d|\varphi|^2dy>0 .
\end{equation}
Pairing instead with $(U_1-U_I)\bar\varphi$, the transport term is $-\int_0^d(U_1-U_I)^2(|\varphi'|^2+4\pi^2k^2|\varphi|^2)$, real, so the real part of the paired equation is $\Rea\int_0^d\E\varphi\,\overline{\E((U_1-U_I)\varphi)}=0$, and with $\E((U_1-U_I)\varphi)=(U_1-U_I)\E\varphi+2U_1'\varphi'-8\varphi$ this gives
\begin{equation}\label{eq:phiid2}
\int_0^d(U_1-U_I)|\E\varphi|^2dy=-16\int_0^d|\varphi'|^2dy<0 .
\end{equation}
Under \eqref{eq:dbound}, $1-2d+(1-M)d^2\ge0$, so $U_1'\ge0$ on $[0,d]$, $U_1\le U_I$ there, and the left side of \eqref{eq:phiid} is nonpositive: contradiction. In general, the real part of the paired equation gives $\frac{\mu_1}{\Re}\|\E\varphi\|^2=2\pi|k|\rho_1|\mathrm{Im}\,J|\le2\pi|k|\rho_1\|U_1'\|_\infty\|\varphi'\|\|\varphi\|$; with $\|\E\varphi\|^2=\|\varphi''\|^2+8\pi^2k^2\|\varphi'\|^2+16\pi^4k^4\|\varphi\|^2$, the Poincar\'e inequalities $\|\varphi''\|\ge\frac{\pi}{d}\|\varphi'\|\ge\frac{\pi^2}{d^2}\|\varphi\|$ give $\Re\ge\mu_1\pi^3/(\rho_1\|U_1'\|_\infty2\pi|k|d^3)$, and $8\pi^2k^2\|\varphi'\|^2+16\pi^4k^4\|\varphi\|^2\ge2\sqrt2(2\pi|k|)^3\|\varphi'\|\|\varphi\|$ gives $\Re\ge2\sqrt2\mu_1(2\pi k)^2/(\rho_1\|U_1'\|_\infty)$; the larger of the two bounds, minimized over $|k|$, is \eqref{eq:phiRe}.
\end{proof}

\begin{remark}\label{rem:weak}
As announced in Remark \ref{rem:simple}, Assumption \ref{ass:d} and its restriction \eqref{eq:dbound} on the interface position $d$ are imposed because they are simple to state, and are not what the design needs. What it needs at $k\ne0$ is only that the Cauchy solution of \eqref{eq:exc} not vanish to second order at the unactuated wall, for which \eqref{eq:dbound} is sufficient by Proposition \ref{prop:phi}. Above the bound no failure has been found: none in the Stokes limit, none below \eqref{eq:phiRe}, none in searches over $(d,\Re,k)$, and none can satisfy the two localization identities \eqref{eq:phiid} and \eqref{eq:phiid2} at once. A proof that none exists is not available here.
\end{remark}

\section{Target and Its Stability}\label{sec:target}

The target of the Fredholm transformation \eqref{eq:fred} is the plant with the transport and reaction terms scaled out, the two $U_1'(d)$ terms and the gravity--capillarity term deleted, the kinematic forcing removed, and every block shifted by $q$:
\begin{subequations}\label{eq:target}
\begin{align}
\rho_j\partial_t\E w_j &= \frac{\mu_j}{\Re}\E^2w_j-q\rho_j\E w_j,\qquad j=1,2, \label{eq:t1}\\
0 &= w_1(0)=\partial w_1(0)=w_2(1)=\partial w_2(1), \label{eq:t2}\\
0 &= \big[w_1-w_2\big]_{y=d}=\partial\big[w_1-w_2\big]_{y=d}, \label{eq:t3}\\
\big[\mu_1\E w_1-\mu_2\E w_2\big]_{y=d} &= -8\pi^2k^2(\mu_1-\mu_2)\,w_2(d,t), \label{eq:t4}\\
\big[\mu_1\partial\E w_1-\mu_2\partial\E w_2\big]_{y=d} &= \Re(\rho_1-\rho_2)\Big(\dt{\partial w_2}(d,t)+q\,\partial w_2(d,t)\Big) \nonumber\\
&\quad+8\pi^2k^2\big[\mu_1\partial w_1-\mu_2\partial w_2\big]_{y=d}, \label{eq:t5}\\
\dot\xi &= -\big(2\pi ikU_I+q\big)\xi . \label{eq:t6}
\end{align}
\end{subequations}
Equations \eqref{eq:t4}--\eqref{eq:t5} are the exact transmission conditions of two Stokes fluids; the viscosity jump is kept, to obtain \eqref{eq:Vdot} in the next proposition. In \eqref{eq:target} the subsystem $\xi$ between $w_1$ and $w_2$ is decoupled from both, while $w_1$ and $w_2$ remain coupled to each other through \eqref{eq:t3}--\eqref{eq:t5}.

\begin{proposition}[Target decays at rate $q$ with strain-rate dissipation]\label{prop:target}
Every solution of \eqref{eq:target} satisfies
\begin{equation}\label{eq:Vdot}
\frac{d}{dt}V=-2qV-\frac{2}{\Re}\sum_j\mu_j\int_j\Big(\big|\partial^2w_j+4\pi^2k^2w_j\big|^2+16\pi^2k^2\big|\partial w_j\big|^2\Big)dy,
\end{equation}
where
\begin{equation}\label{eq:V}
V=\sum_j\rho_j\int_j\Big(|\partial w_j|^2+4\pi^2k^2|w_j|^2\Big)dy+|\xi|^2 .
\end{equation}
\end{proposition}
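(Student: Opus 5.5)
The plan is a direct energy computation on \eqref{eq:target}, done separately for the fluids and for the interface. The interface part is immediate: by \eqref{eq:t6}, $\frac{d}{dt}|\xi|^2=2\Rea(\dot\xi\bar\xi)=-2q|\xi|^2$, because $-2\pi ikU_I$ is purely imaginary. So everything reduces to the field part $V_w:=\sum_j\rho_j\int_j(|\partial w_j|^2+4\pi^2k^2|w_j|^2)\,dy$. Write $a:=4\pi^2k^2$, so that $\E=\partial^2-a$.

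For the field part, multiply \eqref{eq:t1} by $-\bar w_j$, integrate over layer $j$, sum over $j$ and take twice the real part. Integrating by parts once turns the left side into $\frac{d}{dt}V_w$ plus the boundary terms $-2\Rea\,\rho_j[\partial_t\partial w_j\,\bar w_j]$. These vanish at the walls by \eqref{eq:t2}. At $y=d$ they combine, by the continuity \eqref{eq:t3}, into $-2(\rho_1-\rho_2)\Rea\big(\dt{\partial w}(d)\,\bar w(d)\big)$. The shift term $-q\rho_j\E w_j$ is handled the same way: it gives $-2qV_w$ and an interface term $2q(\rho_1-\rho_2)\Rea(\partial w(d)\bar w(d))$. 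The viscous term is integrated by parts twice, giving $-\frac{2}{\Re}\sum_j\mu_j\|\E w_j\|^2$ plus the interface terms $\frac{2}{\Re}\Rea\big([\mu_1\partial\E w_1-\mu_2\partial\E w_2]\bar w(d)-[\mu_1\E w_1-\mu_2\E w_2]\,\overline{\partial w(d)}\big)$, with signs to be fixed carefully. Substituting \eqref{eq:t5} into the first bracket, the term $\Re(\rho_1-\rho_2)(\dt{\partial w_2}+q\partial w_2)$, divided by $\Re$, cancels exactly the two density interface terms above. This cancellation is why \eqref{eq:t5} was written with the same time derivative and shift as \eqref{eq:t1}. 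What remains at the interface is $\frac{2a}{\Re}\,2(\mu_1-\mu_2)\Rea(\partial w\,\bar w)(d)$ from \eqref{eq:t5}, plus $\frac{2a}{\Re}\,2(\mu_1-\mu_2)\Rea(w\,\overline{\partial w})(d)$ from \eqref{eq:t4}. Up to sign, these total $\frac{8a}{\Re}(\mu_1-\mu_2)\Rea(\partial w\,\bar w)(d)$.

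The last step converts $\|\E w_j\|^2$ to the strain-rate form. From $\partial^2w-aw=(\partial^2w+aw)-2aw$,
\[
\int_j|\E w_j|^2=\int_j|\partial^2w_j+aw_j|^2+4a\int_j|\partial w_j|^2-4a\,\Rea\big[\partial w_j\bar w_j\big]_{\partial j}
\]
(after one integration by parts of $\Rea\int w''\bar w$ and using $|{-2aw}|^2-4a\Rea((w''+aw)\bar w)$). The wall contributions vanish. The interface contributions, weighted by $\mu_j$ and summed, give $\mp\frac{8a}{\Re}(\mu_1-\mu_2)\Rea(\partial w\bar w)(d)$ after multiplying by $-2/\Re$. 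This must cancel the leftover from \eqref{eq:t4}--\eqref{eq:t5}, and since $4a=16\pi^2k^2$ it then yields exactly \eqref{eq:Vdot}.

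The main obstacle is the bookkeeping of signs and of the orientation of the jump $[\,\cdot\,]_{y=d}$ (layer 1 ends at $d$, layer 2 begins there) across three separate cancellations: the density/time-derivative cancellation, the $q$-shift cancellation, and the viscosity-jump cancellation. If a factor of $2$ mismatches, I would first recheck the $-8\pi^2k^2$ coefficient in \eqref{eq:t4} against the two contributions of $\Rea(\partial w\bar w)$. The point of the computation, as stated before the proposition, is that keeping the viscosity jump in \eqref{eq:t4}--\eqref{eq:t5} is precisely what makes these boundary terms cancel and leaves the strain-rate integral.
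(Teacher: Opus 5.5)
Your proposal is correct and follows the paper's proof essentially step for step: the same energy multiplier (up to an overall sign), the same cancellation of the $\dt{\partial w}(d)\,\bar w(d)$ and $q\,\partial w(d)\,\bar w(d)$ interface terms through \eqref{eq:t5}, the same identity $|\partial^2w+aw|^2=|\E w|^2+4a\Rea(\partial^2w\,\bar w)$ absorbing the viscosity-jump residual $\frac{8a}{\Re}(\mu_1-\mu_2)\Rea(\partial w\,\bar w)(d)$ into the strain-rate integral, and the separate treatment of $\xi$. The one sign you left open resolves cleanly: with the multiplier $-\bar w_j$, the viscous interface form enters with prefactor $-\frac{2}{\Re}$ rather than $+\frac{2}{\Re}$, and with that sign both of the cancellations you claim hold exactly, with the coefficients you computed.
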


\begin{proof}
Multiply \eqref{eq:t1} by $\bar w_j$, integrate over each layer, sum, and take twice the real part. With $P:=\partial w_2(d)$, $W:=w_2(d)$, the left side is $-\frac{d}{dt}V_{\rm field}+2(\rho_1-\rho_2)\Rea\big(\dt{P}\bar W\big)$ after two integrations by parts, the wall terms vanishing by \eqref{eq:t2} and the interface terms combining by \eqref{eq:t3}. On the right, $\int\E^2w_j\,\bar w_j=[\partial\E w_j\,\bar w_j-\E w_j\,\partial\bar w_j]+\|\E w_j\|^2$ and $-q\rho_j\int\E w_j\,\bar w_j=q\rho_j\int(|\partial w_j|^2+4\pi^2k^2|w_j|^2)-q\rho_j[\partial w_j\bar w_j]$; the interface contributions are, by \eqref{eq:t4}--\eqref{eq:t5},
\begin{align}\label{eq:bdry}
\tfrac{2}{\Re}\,\Rea\Big\{\big[\mu_1\partial\E w_1-\mu_2\partial\E w_2\big]\bar W-\big[\mu_1\E w_1-\mu_2\E w_2\big]\bar P\Big\}
\nonumber\\
=2(\rho_1-\rho_2)\,\Rea\big((\dt{P}+qP)\bar W\big)+\tfrac{32\pi^2k^2}{\Re}(\mu_1-\mu_2)\,\Rea(P\bar W),
\end{align}
and the $q$-terms give $2qV_{\rm field}-2q(\rho_1-\rho_2)\Rea(P\bar W)$. The $\dt{P}\bar W$ and $qP\bar W$ terms cancel between the two sides, and
\begin{align}
&\sum_j\mu_j\|\E w_j\|^2+16\pi^2k^2(\mu_1-\mu_2)\Rea(P\bar W) \nonumber\\
&\qquad=\sum_j\mu_j\int_j\Big(|\partial^2w_j+4\pi^2k^2w_j|^2+16\pi^2k^2|\partial w_j|^2\Big)dy \label{eq:strain}
\end{align}
by $|\partial^2w+4\pi^2k^2w|^2=|\E w|^2+16\pi^2k^2\Rea(\partial^2w\,\bar w)$ integrated by parts on each slab with \eqref{eq:t2}--\eqref{eq:t3}. Equation \eqref{eq:t6} gives $\frac{d}{dt}|\xi|^2=-2q|\xi|^2$.
\end{proof}

\section{Spectrum of the Two-Fluid Stokes Transmission Operator}\label{sec:spectrum}

\subsection{Stokes spectrum}

\begin{proposition}[Plant is a bounded perturbation of a self-adjoint operator]\label{prop:ops}
Let $\cS_k$ be the operator of \eqref{eq:t1}--\eqref{eq:t5} at $q=0$ on $\cH_k$ with the $\eta$-row zero, and $\cB_k:=\cA_k-\cS_k$. Then $\cS_k$ is self-adjoint with compact resolvent and bounded above, and $\cB_k$ is bounded on $\cH_k$, after the change of variables $v_1\mapsto v_1-2\pi ik(1-\tfrac1M)U_1'(d)\,\eta\,\ell(y)$, with $\ell\in C^\infty[0,d]$, $\ell(0)=\ell'(0)=\ell(d)=0$, $\ell'(d)=1$. Hence $\cA_k$ and $\cA_k^*$ generate analytic semigroups on $\cH_k$.
\end{proposition}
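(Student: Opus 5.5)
The plan is to realize $\cS_k$ through a closed, symmetric, lower-semibounded sesquilinear form on $\cH_k$, and then to show that whatever $\cA_k$ contains beyond $\cS_k$ is either bounded or can be made bounded by the lifting $\ell$.

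\emph{Self-adjointness of $\cS_k$.} The inner product \eqref{eq:ip} restricted to the fields is exactly the quantity $\langle \rho\,\cdot,\cdot\rangle$ obtained by integrating $-\rho_j\E w_j\,\bar w_j$ by parts. So $\cS_k$ is the operator whose graph pairing, for $w$ in its domain and $w'\in\cH_k$, reads
\[
\langle \cS_kw,w'\rangle_k=-\frac1\Re\,\mathfrak a(w,w').
\]
I take $\mathfrak a$ to be the strain-rate form suggested by \eqref{eq:strain}:
\[
\mathfrak a(w,w')=\sum_j\mu_j\int_j\big(\E w_j\,\overline{\E w'_j}\big)\,dy+16\pi^2k^2(\mu_1-\mu_2)\,\Rea\text{-free version }\,\partial w_2(d)\overline{w'_2(d)} .
\]
The boundary symmetrization comes from the computation in the proof of Proposition \ref{prop:target} at $q=0$, read bilinearly. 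The form domain is
\[
V_k=\{w:\ w_j\in H^2,\ \eqref{eq:t2},\ \eqref{eq:t3}\},
\]
which is continuously and compactly embedded in the field part of $\cH_k$.

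\emph{Handling the $(\rho_1-\rho_2)$ trace term.} The time derivative of $\partial w_2(d)$ in \eqref{eq:t5} must be absorbed into the mass pairing. I will check that \eqref{eq:ip} already does this: integrating $\rho_j\E w_j\bar w'_j$ by parts produces exactly the jump $(\rho_1-\rho_2)\partial w(d)\overline{w'(d)}$ that \eqref{eq:t5} carries. Hence the evolution $\rho\partial_t\E w=\dots$ with \eqref{eq:t5} is precisely $\dot w=\cS_kw$ in $\langle\cdot,\cdot\rangle_k$.

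\emph{Closedness and semiboundedness.} By \eqref{eq:strain}, $\mathfrak a(w,w)$ equals the nonnegative strain integral, and it bounds $\|w\|_{H^2}^2$ up to lower-order terms by interpolation and trace inequalities. The form is therefore closed and coercive. Kato's representation theorem then gives $\cS_k$ self-adjoint and bounded above, and the compact embedding $H^2\subset H^1$ gives a compact resolvent. The $\eta$-row is zero, so on $\cH_k$, $\cS_k$ is self-adjoint with an extra eigenvalue $0$ in the $\eta$ direction.

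\emph{Boundedness of $\cB_k$.} The interior terms of $\cA_k-\cS_k$ are
\[
-2\pi ik\rho_jU_j\E v_j+2\pi ik\rho_jU_j''v_j .
\]
In the $\cH_k$ geometry the operator $\E^{-1}$ (with the transmission conditions) maps them back to $H^1$, so they are bounded. The same holds for the lower-order boundary terms in \eqref{eq:p8}: $v_2(d)$, $\partial v(d)$ and $\eta$ enter as rank-one functionals. These are bounded once one checks that, in the pairing, they appear multiplied by $\overline{w'(d)}$ and are controlled by $\|w\|_{H^1}$. The term $2\pi ik\Re\frac{\rho_2-\rho_1}{M}U_1'(d)v_2(d)$ is of that type, as is the gravity–capillarity $\eta$ term, which is bounded because $\eta$ carries the weight $\varpi^+_k$. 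The $\dot\eta$ row \eqref{eq:p9} is bounded because $|v_2(d)|\lesssim\|v\|_{H^1}$.

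\emph{The main obstacle: the Yih term.} The one genuinely unbounded piece is the inhomogeneous jump in \eqref{eq:p6}, which puts $\eta$ into the domain condition. The substitution
\[
\tilde v_1=v_1-c\,\eta\,\ell,\qquad c=2\pi ik\big(1-\tfrac1M\big)U_1'(d),
\]
makes $\partial[\tilde v_1-v_2]_{y=d}=0$ and leaves the other traces at $0$ and $d$ untouched, since $\ell(0)=\ell'(0)=\ell(d)=0$. This map is a bounded isomorphism of $\cH_k$, an identity plus a rank-one operator.

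In the new variables the equation acquires $c\,\ell\,\dot\eta$, which is bounded via \eqref{eq:p9}, and the smooth forcing $\frac{\mu_1}{\Re}c\,\E^2\ell\,\eta$, which is also bounded. It also changes $\E v_1(d)$ and $\partial\E v_1(d)$ in \eqref{eq:p7}–\eqref{eq:p8} by multiples of $\eta$. These become rank-one boundary terms in $\eta$, and I expect them to be bounded by the same trace argument. This is the step I would check most carefully: the boundary functionals must be paired against $w'(d)$ and $\partial w'(d)$, which are $H^1$- and $H^2$-traces respectively. The $\partial w'(d)$ pairing is only form-bounded, so if necessary I would settle for showing that $\cB_k$ is $\cS_k$-bounded with relative bound $0$.

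\emph{Conclusion.} A bounded (or relatively bounded with bound $0$) perturbation of a self-adjoint operator that is bounded above generates an analytic semigroup. The same holds for $\cA_k^*$, with the adjoint perturbation.
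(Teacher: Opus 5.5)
Your plan follows the paper's proof. Self-adjointness of $\cS_k$ comes from the symmetric identity behind Proposition \ref{prop:target}, with the pencil term as the interface term of the mass form. $\cB_k$ is bounded piece by piece. The lift $\ell$ moves the Yih term out of the domain.

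\textbf{The gap.} The piece-by-piece step fails exactly where the paper's proof does its work. The functional $v\mapsto\partial v_2(d)$ is an $H^2$-trace and is not bounded on $\cH_k$. So the term $\Re(\rho_1-\rho_2)\,2\pi ikU_I\,\partial v_2(d)$ of \eqref{eq:p8} is not ``controlled by the $H^1$ norm'' as you assert.

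Nor is the interior transport term bounded by itself. Pairing $-2\pi ik\rho_jU_j\E v_j$ with $\bar w'_j$ and integrating by parts leaves an interface term proportional to $U_I\,[\rho_1\partial v_1-\rho_2\partial v_2]_{y=d}\,\overline{w'(d)}$. When $\rho_1\ne\rho_2$ this is again an $H^2$-trace of $v$. So the claim that $\E^{-1}$ maps the transport term back to $H^1$ is false, unless the interface condition of that inversion carries the compensating term from \eqref{eq:p8}.

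\textbf{The missing idea.} These two unbounded pieces cancel exactly. $U_1(d)=U_2(d)=U_I$, and after the lift $\partial v_1(d)=\partial v_2(d)$, so $[\rho\partial v]_{y=d}=(\rho_1-\rho_2)\partial v_2(d)$. The $U_I$-term of \eqref{eq:p8}, the convective half of the interfacial material derivative $\dt{\partial v_2}+2\pi ikU_I\partial v_2$, enters the Green identity with the opposite sign. What survives of transport is the form $-2\pi ik\sum_j\rho_j\int_j\big(\partial v_j\,\partial(U_j\bar w'_j)+4\pi^2k^2U_jv_j\bar w'_j\big)dy$, which is bounded on $H^1\times H^1$.

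Without this cancellation you get at most relative boundedness. That is not the statement, and it does not support the later uses: $|\lambda'_n-\lambda_n|\le\|\cB_k\|$ and $\|P_n-P^0_n\|\le4\|\cB_k\|/d_n$ in Proposition \ref{prop:riesz}, and $\|\cB_k\|=O(|k|^{3/2})$ in Lemma \ref{lem:tail}.

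\textbf{The lift.} The issue you flagged has a simple fix, so you need not retreat to relative boundedness. Since $\ell(d)=0$, the lift shifts \eqref{eq:p7} by a multiple of $\eta\,\E\ell(d)=\eta\,\ell''(d)$, which pairs against $\overline{\partial w'(d)}$. Choose $\ell$ with also $\ell''(d)=0$; the statement constrains only $\ell(0),\ell'(0),\ell(d),\ell'(d)$. Then \eqref{eq:p7} is untouched and the transformed domain is exactly that of $\cS_k$. The only interface condition that moves is \eqref{eq:p8}, by multiples of $\eta$ (through $\partial\E\ell(d)$ and $\ell'(d)=1$) paired against $\overline{w'(d)}$. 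These are bounded rank-one terms, as are the interior contributions of the lift you identified.

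\textbf{The form.} The interface term of $\mathfrak a$ must be the Hermitian $8\pi^2k^2(\mu_1-\mu_2)\big(\partial w(d)\overline{w'(d)}+w(d)\overline{\partial w'(d)}\big)$. This is the polarization of $16\pi^2k^2(\mu_1-\mu_2)\Rea(P\bar W)$ in \eqref{eq:strain}. With that correction your Kato-representation argument for $\cS_k$ is sound.
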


\begin{proof}
Self-adjointness of $\cS_k$ is the symmetry of the identity in the proof of Proposition \ref{prop:target}: with the exact transmission conditions every interface term completes the strain-rate square, and the pencil term $\Re(\rho_1-\rho_2)\dt{P}$ is the boundary term of the mass form. For $\cB_k$: the interior transport term acts on $w\in\cH_k$ as $-\sum_j\rho_j\int\partial v_j\,\partial(U_j\bar w_j)-4\pi^2k^2\int U_jv_j\bar w_j$ after integration by parts, its boundary term at $y=d$ cancelling against the $2\pi ikU_I(\rho_1-\rho_2)\partial v_2(d)$ term of \eqref{eq:p8} because $U_1(d)=U_2(d)$, so it is bounded on $H^1$; the reaction term is bounded; the coefficient of $\eta$ in \eqref{eq:p8} and the $\eta$-row \eqref{eq:p9} have finite rank; the trace $v_2(d)$ in \eqref{eq:p8} is bounded on $H^1$. The term of \eqref{eq:p6} sits in the slope condition, an essential condition of the domain; the change of variables makes the domain that of $\cS_k$ and turns the term into a finite-rank bounded coupling, its time derivative being bounded through \eqref{eq:p9}.
\end{proof}

\begin{theorem}[High eigenvalues are simple and uniformly separated]\label{thm:gap}
Let $\alpha=2\pi|k|$, $\nu_j=\mu_j/\rho_j$, $\kappa=\sqrt{\nu_1/\nu_2}$, $\gamma=\rho_1/\rho_2$, and write the eigenvalues of $\cS_k$ as $\lambda=-\frac{\nu_1}{\Re}(\alpha^2+b^2)$. For $b$ large the eigenvalues are the roots of
\begin{equation}\label{eq:Phi}
\sin(bd)\cos\big(\kappa b(1-d)\big)+\sqrt{\frac{\rho_1\mu_1}{\rho_2\mu_2}}\,\cos(bd)\sin\big(\kappa b(1-d)\big)=O\Big(\frac1b\Big),
\end{equation}
For each $d\in(0,1)$ and $\rho_j,\mu_j>0$ there are $c_0>0$ and $n_1$ such that the eigenvalues $\lambda_n$ with $n\ge n_1$, indexed in increasing order of $|\lambda|$, are simple and satisfy
\begin{equation}\label{eq:gap}
d_n:=\mathrm{dist}\big(\lambda_n,\sigma(\cS_k)\setminus\{\lambda_n\}\big)\ge\frac{c_0\,n}{\Re}\qquad(n\ge n_1).
\end{equation}
\end{theorem}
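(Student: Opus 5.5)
Our plan is to derive the characteristic equation of $\cS_k$ by solving \eqref{eq:t1}--\eqref{eq:t5} at $q=0$ explicitly in each layer, extract its leading high-frequency part \eqref{eq:Phi}, and then show that the leading function has uniformly separated simple zeros, with Rouch\'e transferring simplicity and separation to the true eigenvalues.

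\emph{Step 1 (characteristic determinant).} On layer $j$ the equation $\frac{\mu_j}{\Re}\E^2w_j=\lambda\rho_j\E w_j$ has solutions spanned by $e^{\pm\alpha y}$ and $e^{\pm i\beta_jy}$, with $\beta_j^2=-\alpha^2-\lambda\Re/\nu_j$. Writing $\lambda=-\frac{\nu_1}{\Re}(\alpha^2+b^2)$ gives $\beta_1=b$ and $\beta_2=\sqrt{\kappa^2b^2+(\kappa^2-1)\alpha^2}=\kappa b+O(1/b)$. We impose the two wall conditions of \eqref{eq:t2} in each layer; this leaves a two-dimensional family per layer. The four transmission conditions \eqref{eq:t3}--\eqref{eq:t5} then give a $4\times4$ determinant $D(b)$. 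Self-adjointness (Proposition \ref{prop:ops}) makes $\lambda$ real, so the large eigenvalues correspond to real $b\to\infty$.

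\emph{Step 2 (asymptotics).} In each layer the hyperbolic modes are determined to $O(1/b)$. Relative to the oscillatory modes, they contribute only lower-order corrections to the traces $w,\partial w$. The quantities $\E w_j=-(\beta_j^2+\alpha^2)\,(\cdot)$ and $\partial\E w_j$ carry the top powers $b^2$ and $b^3$. Condition \eqref{eq:t4} at leading order forces $\mu_1 b^2 A_1=\mu_2\kappa^2 b^2 A_2$ for the oscillatory amplitudes. The pencil term in \eqref{eq:t5} is $\Re(\rho_1-\rho_2)\lambda\,\partial w_2(d)\sim b^2\cdot b$, which is of the same order as $\partial\E w$. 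We therefore keep it and check that it combines with $\mu_j\beta_j^3$ into the form $\rho_j\nu_j\beta_j(\beta_j^2\mp\ldots)$. Dividing by the top power, we expect the leading determinant to reduce to the statement that the oscillatory parts $\sin(bd)$ on layer $1$ (vanishing at the wall together with the decaying exponential) and $\sin(\kappa b(1-d))$ on layer $2$ satisfy impedance matching. The impedance ratio of this matching is $\mu_1 b/(\mu_2\kappa b)\cdot$(mass correction), which should equal $\sqrt{\rho_1\mu_1/(\rho_2\mu_2)}$. The net result is $\Phi(b):=\sin(bd)\cos(\kappa b(1-d))+r\cos(bd)\sin(\kappa b(1-d))=O(1/b)$ with $r=\sqrt{\rho_1\mu_1/(\rho_2\mu_2)}$, which is \eqref{eq:Phi}.

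\emph{Step 3 (uniform separation of zeros of $\Phi$; main obstacle).} The key point is that $\Phi$ has no near-double zeros even when $d$ and $\kappa(1-d)$ are nearly commensurate. We show that $\Phi(b)=0$ and $|\Phi'(b)|$ small cannot happen together. Set $\theta_1=bd$ and $\theta_2=\kappa b(1-d)$. Then
\[
\Phi'=d\cos\theta_1\cos\theta_2-\kappa(1-d)\sin\theta_1\sin\theta_2-rd\sin\theta_1\sin\theta_2+r\kappa(1-d)\cos\theta_1\cos\theta_2 .
\]
At a zero, $\tan\theta_1=-r\tan\theta_2$, so the factors $\sin\theta_1\sin\theta_2$ and $\cos\theta_1\cos\theta_2$ have opposite signs. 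All four terms of $\Phi'$ then share one sign, and $|\Phi'|\ge c\,(|\cos\theta_1\cos\theta_2|+|\sin\theta_1\sin\theta_2|)$. Since $\sin\theta_1\cos\theta_2=-r\cos\theta_1\sin\theta_2$, both products cannot be small simultaneously. This gives $|\Phi'|\ge c_1>0$ at every zero, with $c_1$ depending only on $d,\kappa,r$. This is the coupling effect: for $r\ne0$ the zeros are transversal, whereas the uncoupled products $\sin\theta_1\sin\theta_2$ have double zeros on a dense set. Since $\Phi''$ is bounded, zeros of $\Phi$ are separated by at least $\delta>0$ in $b$, and each is simple.

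\emph{Step 4 (transfer and conversion).} We apply Rouch\'e on discs of radius $\delta/3$ around each zero of $\Phi$, which is legitimate because $|\Phi|\ge c$ on their boundaries and the error is $O(1/b)$. For $b\ge b_1$ this yields exactly one simple eigenvalue per disc and none elsewhere, and a counting argument aligns the index $n$ with $b_n\sim n\pi/(d+\kappa(1-d))$. Finally, $|\lambda_n-\lambda_m|=\frac{\nu_1}{\Re}|b_n^2-b_m^2|\ge\frac{\nu_1}{\Re}(b_n+b_m)\delta/3\ge c_0n/\Re$, which is \eqref{eq:gap}. We expect the careful bookkeeping of the pencil term in Step 2 and the transversality estimate in Step 3 to be where the real work lies.
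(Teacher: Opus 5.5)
Your architecture is the paper's: extract the leading characteristic function, prove transversality of $\Phi$ along the line $(bd,\kappa b(1-d))$ in $\mathbb{T}^2$, apply Rouch\'e, and convert the $b$-gap into \eqref{eq:gap}. Step 3 is correct and is the paper's transversality computation in a tidier form. At a zero, $\sin\theta_1\sin\theta_2$ and $\cos\theta_1\cos\theta_2$ have opposite signs, so $|\Phi'|=(d+r\kappa(1-d))|\cos\theta_1\cos\theta_2|+(\kappa(1-d)+rd)|\sin\theta_1\sin\theta_2|$. To make ``both products cannot be small simultaneously'' uniform along a line that may be dense in the torus, you need the paper's compactness argument. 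The two products vanish together only where $\sin\theta_1=\cos\theta_2=0$ or $\cos\theta_1=\sin\theta_2=0$, and there $|\Phi|=r$ or $1$. Step 4 is fine, with self-adjointness turning a simple zero of the determinant into a simple eigenvalue.

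The gap is Step 2. That is where the first assertion of the theorem, \eqref{eq:Phi}, has to be proved, and its stated mechanism is false: the evanescent modes are not lower order. Write $\E w_1=P_1\cos b_1y+Q_1\sin b_1y$, $\E w_2=P_2\cos b_2(1-y)+Q_2\sin b_2(1-y)$, and $m_j=b_j^2+\alpha^2$. The clamped-wall conditions force the hyperbolic part of $w_1$ to be $\big(P_1\cosh\alpha y+\tfrac{b_1}{\alpha}Q_1\sinh\alpha y\big)/m_1$, and similarly for $w_2$. This contributes $O(|Q_1|/b)$ to $w_1(d)$, which dominates the oscillatory contribution $O((|P_1|+|Q_1|)/b^2)$. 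It also contributes a term in $Q_1$ to $\partial w_1(d)$ of the same order as the oscillatory one. If you drop these terms, continuity of $w$ reads $\nu_1\E w_1(d)=\nu_2\E w_2(d)$. Together with \eqref{eq:t4} this forces $\E w(d)=0$ whenever $\rho_1\ne\rho_2$, so the leading system is wrong.

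What actually happens is the following. The velocity condition of \eqref{eq:t3} reduces at leading order to $\nu_1b_1\sinh(\alpha d)Q_1=\nu_2b_2\sinh(\alpha(1-d))Q_2$. Taking $\rho_1$ times the slope condition minus \eqref{eq:t5} reduces at leading order to $\mu_1b_1\cosh(\alpha d)Q_1+\mu_2b_2\cosh(\alpha(1-d))Q_2=0$. This $2\times2$ system has determinant a positive multiple of $b_1b_2[\gamma\cosh\alpha d\,\sinh\alpha(1-d)+\cosh\alpha(1-d)\,\sinh\alpha d]$. It therefore forces $Q_j=O(b^{-1})(|P_1|+|P_2|)$, so the high modes are cosines from the walls in $\E w$, not the sines vanishing at the wall that you describe. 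Only then do \eqref{eq:t4} and \eqref{eq:t5} become $\mu_1P_1\cos p=\mu_2P_2\cos r$ and $\mu_1b_1P_1\sin p=-\gamma\mu_2b_2P_2\sin r$, with $p=b_1d$ and $r=b_2(1-d)$. Their determinant is $b_1\sin p\cos r+\gamma b_2\cos p\sin r$.

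The pencil term's only effect is to turn $\mu_2$ into $\gamma\mu_2=\rho_1\nu_2$ in the last relation, which is why $A=\gamma\kappa$. Note that your ``impedance ratio'' $\mu_1/(\mu_2\kappa)$ already equals $\gamma\kappa$, so any nontrivial mass correction on top of it would give the wrong constant.

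The missing idea is this factorization of the leading $4\times4$ determinant as a nonvanishing hyperbolic bracket times the oscillatory factor. Without it you do not have $D(b)=c_k\,b^{m}\big(\Phi+O(1/b)\big)$ with $c_k\ne0$, uniformly on the Rouch\'e disks, and your Step 4 rests on exactly that.
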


\begin{proof}
Multiplicity. An eigenfunction is determined by its two free Cauchy data at $y=0$, so $\dim\ker(\cS_k-\lambda)\le2$, and self-adjointness makes algebraic multiplicity equal geometric.

Characteristic function. Put $\omega_j=\E v_j$; then $\omega_j''=\beta_j^2\omega_j$ with $\beta_j^2=\alpha^2+\Re\lambda/\nu_j=-b_j^2$, $b_2=\kappa b_1+O(1/b_1)$, and $v_j=\omega_j/(\beta_j^2-\alpha^2)+A_je^{\alpha y}+B_je^{-\alpha y}$. The wall conditions fix $A_j,B_j$. At $y=d$, to leading order in $1/b$, \eqref{eq:t4} gives $\mu_1\omega_1(d)=\mu_2\omega_2(d)$; \eqref{eq:t5} gives $\mu_1\omega_1'(d)-\mu_2\omega_2'(d)=(\gamma-1)\mu_2[\omega_2'(d)-\omega_2'(1)\cosh\alpha(1-d)]$, the inertial term being of leading order since $\Re\lambda\asymp b^2$ and $\partial v_2(d)\asymp\omega/b$; and \eqref{eq:t3} gives $\nu_1\omega_1'(0)\sinh\alpha d+\nu_2\omega_2'(1)\sinh\alpha(1-d)=0$ and $\nu_1[\omega_1'(d)-\omega_1'(0)\cosh\alpha d]=\nu_2[\omega_2'(d)-\omega_2'(1)\cosh\alpha(1-d)]$. With $\omega_1=P_1\cos b_1y+Q_1\sin b_1y$, $\omega_2=P_2\cos b_2(1-y)+Q_2\sin b_2(1-y)$, $p=b_1d$, $r=b_2(1-d)$, the $4\times4$ determinant of these conditions is
\begin{align}
&b_1b_2\mu_1^2\mu_2^2\gamma^{-2}\Big[\gamma\cosh\alpha d\,\sinh\alpha(1-d)+\cosh\alpha(1-d)\,\sinh\alpha d\Big] \nonumber\\
&\qquad\times\big(b_1\sin p\cos r+\gamma b_2\cos p\sin r\big), \label{eq:det}
\end{align}
whose bracket is positive, giving \eqref{eq:Phi} with $A:=\gamma\kappa$.

Transversality. Let $\Phi(p,r)=\sin p\cos r+A\cos p\sin r$ on $\mathbb{T}^2$ and $D=u\partial_p+v\partial_r$ with $u=d$, $v=\kappa(1-d)$, the derivative along the line $r=\kappa'p$. Where $\cos p\cos r\ne0$, $\Phi=0$ gives $\tan p=-A\tan r$ and then $Df/(\cos p\cos r)=u+Av-(Au+v)\tan p\tan r=u+Av+A(Au+v)\tan^2r>0$. Where $\cos p=0$, $\Phi=0$ forces $\cos r=0$ and $Df=-(Au+v)\sin p\sin r\ne0$; likewise where $\cos r=0$. So $\Phi$ and $Df$ have no common zero on $\mathbb{T}^2$, and by compactness $m_2:=\min_{\mathbb{T}^2}(|\Phi|+|Df|)>0$.

Gaps. By $m_2>0$, the zeros of $\Phi$ along the line are simple and separated by at least $cm_2$ in $b$, uniformly. The determinant \eqref{eq:det} is entire in $b$, and the reductions above give $b^{-m}D(b)=\Phi+R(b)$ with $|R|\le C/|b|$ uniformly on the disks of radius $cm_2/2$ about the zeros of $\Phi$ along the line and on their complement in a fixed strip, where $|\Phi|\ge m_2/2$; so Rouch\'e on each disk places exactly one exact root there, simple, and $D$ has no root outside the disks, and $\lambda=-\frac{\nu_1}{\Re}(\alpha^2+b^2)$ converts the fixed separation in $b$ into \eqref{eq:gap}.
\end{proof}

\begin{remark}
The density contrast enters the high spectrum only through $A=\sqrt{\rho_1\mu_1/(\rho_2\mu_2)}$, by the inertial interface term. At $A=1$ the characteristic function is $\sin(bd+\kappa b(1-d))$, which is the one-fluid case $\sin b$ when also $\kappa=1$. The two decoupled slab sequences, $\sin p=0$ and $\sin r=0$, collide for a dense set of $\kappa(1-d)/d$; the coupled sequence does not.
\end{remark}

\begin{lemma}[High eigenfunctions are cosines from their walls; wall coefficients grow linearly]\label{lem:doublet}
In the notation of the proof of Theorem \ref{thm:gap}, every eigenfunction of $\cS_k$ at a high eigenvalue, normalized by $\Xi:=\sum_j\int_j|\omega_j|^2=1$, satisfies
\begin{align}\label{eq:cosines}
\omega_1&=P_1\cos b_1y+O(b^{-1}),\qquad \omega_2=P_2\cos b_2(1-y)+O(b^{-1}), \nonumber\\ 1&=\tfrac d2|P_1|^2+\tfrac{1-d}2|P_2|^2+O(b^{-1}),
\end{align}
with $|P_2|\ge c>0$, and its wall traces are $\omega_2(1)=P_2$, $\omega_2'(1)=-b_2Q_2=O(1)$. For the energy-normalized eigenfunction the wall coefficients \eqref{eq:pairings} are of size $\asymp b\asymp n$.
\end{lemma}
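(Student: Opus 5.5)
I would work inside the asymptotic reduction already set up in the proof of Theorem \ref{thm:gap}. There $\omega_j=\E v_j$ solves $\omega_j''=-b_j^2\omega_j$, with $\omega_1=P_1\cos b_1y+Q_1\sin b_1y$ and $\omega_2=P_2\cos b_2(1-y)+Q_2\sin b_2(1-y)$. The four conditions \eqref{eq:t3}--\eqref{eq:t5} give a $4\times4$ linear system in $(P_1,Q_1,P_2,Q_2)$, whose coefficients are known to leading order in $1/b$. The plan has three steps. First, show that $Q_1$ and $Q_2$ are $O(b^{-1})$ relative to $(P_1,P_2)$. Second, identify the null vector of the reduced $2\times2$ system in $(P_1,P_2)$. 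Third, normalize and read off the wall traces.

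\emph{Step one.} Use the two wall-derived relations from \eqref{eq:t3}:
\[
\nu_1\omega_1'(0)\sinh\alpha d+\nu_2\omega_2'(1)\sinh\alpha(1-d)=0,
\]
together with the slope relation
\[
\nu_1[\omega_1'(d)-\omega_1'(0)\cosh\alpha d]=\nu_2[\omega_2'(d)-\omega_2'(1)\cosh\alpha(1-d)].
\]
These involve $\omega_1'(0)=b_1Q_1$ and $\omega_2'(1)=-b_2Q_2$. Combined with the inertial condition \eqref{eq:t5}, whose bracket was shown to be nonzero in \eqref{eq:det}, they form a nondegenerate linear system for $(b_1Q_1,b_2Q_2)$ whose right side is built from $\omega_j(d)$ and $\omega_j'(d)$, all of size $O(b)\cdot(|P|+|Q|)$. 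I expect to solve it as $b_jQ_j=O(1)\cdot(|P_1|+|P_2|)$, i.e.\ $Q_j=O(b^{-1})$, which gives the first line of \eqref{eq:cosines}. Subtracting the exact relations from their leading-order versions introduces errors of relative size $O(b^{-1})$, as in the Rouch\'e step of Theorem \ref{thm:gap}, so these estimates persist for the exact eigenfunctions.

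\emph{Step two.} With $Q_j$ negligible, condition \eqref{eq:t4} reads $\mu_1P_1\cos p=\mu_2P_2\cos r+O(b^{-1})$, where $p=b_1d$ and $r=b_2(1-d)$. Normalizing, $\int_0^d\cos^2b_1y=\frac d2+O(b^{-1})$ and similarly on $(d,1)$, which gives the third relation of \eqref{eq:cosines}. The wall traces then follow directly: $\omega_2(1)=P_2$ and $\omega_2'(1)=-b_2Q_2=O(1)$.

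\emph{Main obstacle: the bound $|P_2|\ge c$.} I would argue by contradiction. If $P_2\to0$ along a subsequence, then $|P_1|^2\to2/d$. The condition $\mu_1P_1\cos p=\mu_2P_2\cos r+o(1)$ then forces $\cos p\to0$, so $|\sin p|\to1$. The second interface condition (the $\omega'$ balance from \eqref{eq:t5}, scaled by $1/b$) gives
\[
\mu_1b_1P_1\sin p\approx(\text{const})\,b_2P_2\sin r,
\]
whose left side is of size $b$ while the right side is $o(b)$. This is a contradiction. The argument is uniform because $(p,r)$ lies on the compact torus and the transversality constant $m_2$ of Theorem \ref{thm:gap} controls the nondegeneracy.

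\emph{Wall coefficients.} Insert into \eqref{eq:pairings}. We have $\E\psi_2(1)=\omega_2(1)=P_2\asymp1$. The identity $\partial\E\psi_2(1)=\omega_2'(1)$ is $O(1)$ in the $\Xi$-normalization. Passing to energy normalization in \eqref{eq:ip} rescales by the ratio of the energy norm to $\Xi^{1/2}$. Since $\|\partial v\|\asymp\|\omega\|/b$, this ratio is $\asymp1/b$, so the normalization factor is $\asymp b$. The wall coefficients are therefore of size $\asymp b$, and $b\asymp n$ by the uniform spacing in Theorem \ref{thm:gap}. For $B_{U_c}$ the factor $2\pi k$ is fixed, so the lower bound comes from $|P_2|\ge c$; for $B_{V_c}$ only the upper bound $O(b)$ is needed.
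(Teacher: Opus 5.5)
Your Step one carries the whole lemma, and as written it does not go through. You treat the velocity relation, the slope relation of \eqref{eq:t3} and \eqref{eq:t5} as a nondegenerate system for $(b_1Q_1,b_2Q_2)$ whose right side, built from $\omega_j'(d)$, has size $O(b)(|P|+|Q|)$. An $O(1)$-invertible system with $O(b)$ forcing only gives $b_jQ_j=O(b)$, that is, $Q_j=O(1)$. Then the $\omega_j$ are not cosines. Your later use of \eqref{eq:t5} also fails, because $b_1Q_1\cos p$ and $b_2Q_2\cos r$ inside $\omega_1'(d)$, $\omega_2'(d)$ would be as large as the terms $b_1P_1\sin p$, $b_2P_2\sin r$ you keep.

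Getting $b_jQ_j=O(1)$ requires the $O(b)$ interface quantities $Y:=\omega_1'(d)$ and $W:=\omega_2'(d)$ to drop out identically. That is the missing idea. Write $H_1=\cosh\alpha d$, $H_2=\cosh\alpha(1-d)$, $S_1=\sinh\alpha d$, $S_2=\sinh\alpha(1-d)$. The leading-order form of \eqref{eq:t5} is $\mu_1Y=\gamma\mu_2W-(\gamma-1)\mu_2H_2\,\omega_2'(1)$. The slope relation multiplied by $\rho_1$ is $\mu_1Y-\mu_1H_1\,\omega_1'(0)=\gamma\mu_2W-\gamma\mu_2H_2\,\omega_2'(1)$, because $\rho_1\nu_1=\mu_1$ and $\rho_1\nu_2=\gamma\mu_2$. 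Subtracting removes $Y$ and $W$ exactly and leaves $\mu_1b_1H_1Q_1+\mu_2b_2H_2Q_2=0$. Together with the velocity relation $\nu_1b_1S_1Q_1-\nu_2b_2S_2Q_2=0$, this is a homogeneous $2\times2$ system with determinant $-b_1b_2(\mu_1\nu_2H_1S_2+\mu_2\nu_1H_2S_1)=-b_1b_2\frac{\mu_1\mu_2}{\rho_1}\big(\gamma H_1S_2+H_2S_1\big)$. So the bracket of \eqref{eq:det} you invoke is the right nondegeneracy, but as the determinant of this homogeneous system, not of one with $O(b)$ forcing. For the exact eigenfunction, the same combination is forced only by the neglected terms: the $\omega$-values in the traces ($\omega_j(d)$, $\omega_1(0)$, $\omega_2(1)$) and the $\alpha^2$ interface terms. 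These are $O(1)(|P|+|Q|)$ against matrix entries of size $b$, and inverting gives $b_jQ_j=O(1)(|P_1|+|P_2|)$, which is what the paper does.

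With that repair the rest of your plan works. Your lower bound on $|P_2|$ is a legitimate alternative to the paper's case split on $|\cos r|$ at zeros of $\Phi$. It uses only $\mu_1P_1\cos p=\mu_2P_2\cos r+O(b^{-1})$ and $\mu_1b_1P_1\sin p=-\gamma\mu_2b_2P_2\sin r+O(1)$. It can even be made direct: divide the second relation by $b$, square and add to get $|P_1|^2\le C|P_2|^2+O(b^{-1})$, then insert this into the normalization. The appeal to $m_2$ is therefore unnecessary. But this bound, like $\omega_2'(1)=O(1)$, depends on $b_jQ_j=O(1)$ from the corrected Step one.
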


\begin{proof}
The conditions \eqref{eq:t5} and the slope condition of \eqref{eq:t3}, at leading order in the notation of Theorem \ref{thm:gap}, are $\mu_1b_1Y=\gamma\mu_2W+(\gamma-1)\mu_2H_2b_2Q_2$ and $\nu_1(b_1Y-b_1Q_1H_1)=\nu_2(W+H_2b_2Q_2)$ with $W:=b_2(P_2\sin r-Q_2\cos r)$ and $H_j$, $S_j$ the hyperbolic cosine and sine of $\alpha d$, $\alpha(1-d)$; multiplying the second by $\rho_1$ and subtracting the first gives $\mu_1b_1H_1Q_1+\mu_2b_2H_2Q_2=0$, and the velocity condition gives $\nu_1b_1S_1Q_1-\nu_2b_2S_2Q_2=0$. In the exact transmission system these two relations hold with right-hand sides $O(b^{-1})(|P_1|+|P_2|+|Q_1|+|Q_2|)$, the corrections coming from $b_2-\kappa b_1=O(b^{-1})$, the $\omega/\kappa_j$ contributions to the traces, and the lower-order interface terms, and their matrix has determinant $-b_1b_2(\mu_1\nu_2H_1S_2+\mu_2\nu_1H_2S_1)$ with entries $O(b)$, so its inverse is $O(b^{-1})$ uniformly in the phases; hence $Q_1,Q_2=O(b^{-1})(|P_1|+|P_2|)$, which is \eqref{eq:cosines} with the normalization from the averaging of $\cos^2$; then \eqref{eq:t4} gives $\mu_1P_1\cos p=\mu_2P_2\cos r$ and \eqref{eq:t5} gives $\mu_1b_1P_1\sin p=-\gamma\mu_2b_2P_2\sin r$. At a zero of $\Phi$, $\cos r$ and $\sin r$ are not both small: where $|\cos r|\ge\frac12$ the first relation gives $|P_1|\le C|P_2|$, and where $|\cos r|<\frac12$, $\Phi=0$ forces $|\cos p|\le C|\cos r|$, hence $|\sin p|\ge\frac12$ for $b$ large, and the second relation gives $|P_1|\le C|P_2|$ with $b_2/b_1\to\kappa$; in both cases the normalization bounds $|P_2|$ below. The wall traces follow, $\omega_2'(1)=-b_2Q_2$ being $O(1)$ since $Q_2=O(b^{-1})$. Energy normalization multiplies $\omega$ by a factor $\asymp b$, since $v\sim\omega/b^2$ and $\partial v\sim\omega/b$, which gives the size of the wall coefficients.
\end{proof}

\subsection{Plant spectrum and wall coefficients}

\begin{proposition}[Root vectors of the plant form a Riesz basis with parentheses]\label{prop:riesz}
Let $n_1$ be such that $d_n>4\|\cB_k\|$ for $n\ge n_1$. For each such $n$, $\cA_k$ has exactly one eigenvalue $\lambda'_n$ in $\{|\lambda-\lambda_n|<d_n/2\}$, simple, with $|\lambda'_n-\lambda_n|\le\|\cB_k\|$, and its Riesz projection satisfies $\|P_n-P^0_n\|\le4\|\cB_k\|/d_n$. The remaining spectrum is finite and carried by $P_{\rm low}$. The family $\{P_{\rm low}\}\cup\{P_n\}_{n\ge n_1}$ is a Riesz basis with parentheses of $\cH_k$, $P_{\rm low}x+\sum_nP_nx=x$ unconditionally, and the same holds for $\cA_k^*$.
\end{proposition}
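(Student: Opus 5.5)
The plan is the standard perturbation argument for a bounded perturbation of a self-adjoint operator with compact resolvent whose eigenvalue gaps grow. The inputs are three earlier results.
\begin{itemize}
\item Proposition \ref{prop:ops}: $\cA_k=\cS_k+\cB_k$ with $\cS_k$ self-adjoint, bounded above, with compact resolvent, and $\cB_k$ bounded. The variable change there is a bounded isomorphism with bounded inverse, so it preserves Riesz-basis properties, and I work in the changed variables throughout.
\item Theorem \ref{thm:gap}: the eigenvalues $\lambda_n$, $n\ge n_1$, are simple with gaps $d_n\ge c_0n/\Re\to\infty$.
\item Self-adjointness of $\cS_k$, which gives the resolvent bound $\|(\lambda-\cS_k)^{-1}\|=1/\mathrm{dist}(\lambda,\sigma(\cS_k))$.
\end{itemize}

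\emph{Localization.} Let $n\ge n_1$, so that $d_n>4\|\cB_k\|$. Outside $\bigcup_m\{|\lambda-\lambda_m|\le\|\cB_k\|\}$ together with a bounded region containing the low spectrum, the Neumann series
\[
(\lambda-\cA_k)^{-1}=(\lambda-\cS_k)^{-1}\sum_{l}\big(\cB_k(\lambda-\cS_k)^{-1}\big)^l
\]
converges, so the resolvent set contains that complement. On the circle $\Gamma_n=\{|\lambda-\lambda_n|=d_n/2\}$ the distance to $\sigma(\cS_k)$ is at least $d_n/2$, hence $\|\cB_k(\lambda-\cS_k)^{-1}\|\le2\|\cB_k\|/d_n<1/2$. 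Replacing $\cB_k$ by $s\cB_k$ for $s\in[0,1]$ keeps $\Gamma_n$ in the resolvent set, so the Riesz projection $P_n(s)$ depends continuously on $s$. Its rank is therefore constant and equal to the rank of $P_n^0$, which is $1$. This gives exactly one eigenvalue $\lambda_n'$ inside $\Gamma_n$. It is algebraically simple, and it lies in the disk of radius $\|\cB_k\|$ about $\lambda_n$.

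\emph{Projection estimate.} Subtracting the two contour integrals,
\[
P_n-P_n^0=\frac1{2\pi i}\oint_{\Gamma_n}(\lambda-\cA_k)^{-1}\cB_k(\lambda-\cS_k)^{-1}\,d\lambda .
\]
On $\Gamma_n$ each resolvent is at most $2/d_n$ up to the factor $(1-1/2)^{-1}$, and the contour has length $\pi d_n$. This gives $\|P_n-P_n^0\|\le4\|\cB_k\|/d_n$, possibly after adjusting constants or using the sharper resolvent bound on the contour.

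\emph{Low spectrum.} Take a large contour $\Gamma_{\rm low}$ enclosing $\lambda_1,\dots,\lambda_{n_1-1}$ and separated from $\lambda_{n_1}$ by at least $d_{n_1}/2$. Since $\cS_k$ is bounded above, the region to the right of all eigenvalues contributes nothing, and the resolvent of $\cA_k$ exists there for $\mathrm{Re}\,\lambda$ large. The same Neumann argument shows that the spectrum inside $\Gamma_{\rm low}$ is finite, of total multiplicity $n_1-1$, and is carried by the projection $P_{\rm low}$.

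\emph{Basis property.} The step I expect to be the main obstacle is that $\sum_n\|P_n-P_n^0\|^2\le C\sum_n n^{-2}<\infty$ alone does not suffice: I need a quadratic-closeness theorem for projections. I would invoke the Kato/Gohberg--Krein theorem on quadratically close families. The family $\{P^0_{\rm low}\}\cup\{P^0_n\}$ consists of orthogonal projections onto mutually orthogonal subspaces whose sum is the whole space, by the spectral theorem and compact resolvent. Suppose $\{P_{\rm low}\}\cup\{P_n\}$ is a complete family of disjoint projections (pairwise products zero, since they come from disjoint contours) and is quadratically close to it. Then it is a Riesz basis of subspaces, and this is the Riesz basis with parentheses. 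Completeness comes from the same argument, alternatively from Keldysh's theorem via the Schatten-class resolvent of $\cS_k$. It can also be obtained directly, as in \cite{Kr26}: the operator $\sum_{n\ge N}(P_n-P_n^0)$ has small norm for $N$ large, because quadratic closeness with orthogonality gives a norm bound by $(\sum\|P_n-P^0_n\|^2)^{1/2}$. One then compares the finite-dimensional remainders $P_{\rm low}+\sum_{n<N}P_n$ and $P^0_{\rm low}+\sum_{n<N}P^0_n$ through the large-contour projection, which is a Fredholm-index/dimension count. I would follow that route, citing \cite{Kr26} for the resulting unconditional expansion $P_{\rm low}x+\sum_nP_nx=x$.

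\emph{Adjoint.} For $\cA_k^*=\cS_k+\cB_k^*$ the same argument applies verbatim, with $\|\cB_k^*\|=\|\cB_k\|$ and conjugate eigenvalues. The resulting projections are $P_n^*$, so the adjoint family is the biorthogonal Riesz basis with parentheses.
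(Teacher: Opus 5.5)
Your proposal is correct and takes the paper's route: the Neumann series on $\Gamma_n$ with $\|\cB_k(\lambda-\cS_k)^{-1}\|\le2\|\cB_k\|/d_n<\frac12$, the localization $\sigma(\cA_k)\subset\{\mathrm{dist}(z,\sigma(\cS_k))\le\|\cB_k\|\}$, the contour bound $\|P_n-P^0_n\|\le4\|\cB_k\|/d_n$, square-summability from \eqref{eq:gap}, and the Bari--Markus lemma of \cite{Kr26}, with the adjoint handled identically. The contour bound needs no adjustment of constants, since resolvent bounds $2/d_n$ and $4/d_n$ on a contour of length $\pi d_n$ give exactly $4\|\cB_k\|/d_n$. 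Where the paper simply cites the Bari--Markus lemma, you spell out its completeness and dimension-count ingredient.
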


\begin{lemma}[Wall coefficients of the plant's high modes]\label{lem:wallplant}
Let $\psi_n$ be the adjoint eigenvector of $\cA_k$ at $\overline{\lambda'_n}$, normalized in $\cH_k$, and $b_n:=(\langle B_{V_c},\psi_n\rangle,\langle B_{U_c},\psi_n\rangle)$ with the pairings \eqref{eq:pairings}. Then $|(b_n)_2|\asymp n$ and $|b_n|\asymp n$ for $n\ge n_1$.
\end{lemma}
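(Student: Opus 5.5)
The argument transfers the Stokes wall coefficients of Lemma \ref{lem:doublet} to the plant by perturbation, using the projection estimate of Proposition \ref{prop:riesz}. Let $\psi^0_n$ be the normalized eigenvector of $\cS_k$ at $\lambda_n$; since $\cS_k$ is self-adjoint, it is also the adjoint eigenvector. Lemma \ref{lem:doublet} gives, for $\psi^0_n$, $|\langle B_{U_c},\psi^0_n\rangle|=2\pi|k|\frac{\mu_2}{\Re}|\E\psi^0_{n,2}(1)|\asymp b\asymp n$, because $|P_2|\ge c$ and energy normalization multiplies $\omega$ by a factor $\asymp b$. It also gives $|\langle B_{V_c},\psi^0_n\rangle|=\frac{\mu_2}{\Re}|\omega_2'(1)|\cdot O(b)$, which is $O(n)$. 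Hence $|b^0_n|\asymp n$ and $|(b^0_n)_2|\asymp n$. It remains to show that passing from $\psi^0_n$ to $\psi_n$ changes these pairings by $o(n)$.

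The adjoint statement of Proposition \ref{prop:riesz} gives $\|P^*_n-P^{0*}_n\|\le4\|\cB_k\|/d_n=O(1/n)$. Set $\tilde\psi_n:=P_n^*\psi^0_n/\|P_n^*\psi^0_n\|$. Then $\|\tilde\psi_n-\psi^0_n\|_{\cH_k}=O(1/n)$, and $\tilde\psi_n$ is, up to a unimodular phase, the normalized $\psi_n$. The pairings \eqref{eq:pairings} involve $\E\psi_2(1)$ and $\partial\E\psi_2(1)$, which are not bounded on $\cH_k$, so a bound in $\cH_k$ alone is not enough. This is the main obstacle. To get around it, I would write the difference through the eigen-equations. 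Put $e_n:=\tilde\psi_n-\psi^0_n$. Then
\[
(\cS_k-\overline{\lambda_n})e_n=(\overline{\lambda'_n}-\overline{\lambda_n})\tilde\psi_n-\cB_k^*\tilde\psi_n ,
\]
and the right-hand side is $O(1)$ in $\cH_k$, since $|\lambda'_n-\lambda_n|\le\|\cB_k\|$ and $\cB_k^*$ is bounded after the change of variables of Proposition \ref{prop:ops}.

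Next I would use the Riesz decomposition. The component of $e_n$ along $\psi^0_n$ is $O(1/n)$. On the complement, the reduced resolvent has norm $\lesssim \sup_m|\lambda_m-\lambda_n|^{-1}$, so
\[
\|(\cS_k-\overline{\lambda_n})e_n\|\ \ge\ \mathrm{dist}\big(\lambda_n,\sigma(\cS_k)\setminus\{\lambda_n\}\big)\,\|e_n^\perp\| ,
\]
which is consistent with $\|e_n\|=O(1/n)$. For the wall traces I would expand $e_n^\perp=\sum_{m\ne n}c_m\psi^0_m$, with $c_m=\langle(\cS_k-\overline{\lambda_n})e_n,\psi^0_m\rangle/(\lambda_m-\lambda_n)$. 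Then $B^*e_n^\perp=\sum_m c_m\,b^0_m$. The Stokes coefficients satisfy $|b^0_m|\lesssim m$ by Lemma \ref{lem:doublet}, and $|\lambda_m-\lambda_n|\gtrsim|m^2-n^2|/\Re$ by Theorem \ref{thm:gap}, since $\lambda_m\asymp -m^2$. Cauchy--Schwarz then gives
\[
|B^*e_n^\perp|\lesssim\Big(\sum_{m\ne n}\frac{m^2}{(m^2-n^2)^2}\Big)^{1/2}\|(\cS_k-\overline{\lambda_n})e_n\| .
\]
The sum is $O(\log n)$ at worst, because near $m=n$ it behaves like $\sum 1/(m-n)^2$ and is therefore $O(1)$. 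The finitely many low modes contribute $O(1/n)$. Hence $|b_n-b^0_n|=O(\sqrt{\log n})=o(n)$.

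Combining the $o(n)$ correction with $|(b^0_n)_2|\asymp n$ and $|b^0_n|\asymp n$ gives $|(b_n)_2|\asymp n$ and $|b_n|\asymp n$ for all $n\ge n_1$, after enlarging $n_1$ if needed. One point needs checking: the change of variables in Proposition \ref{prop:ops} modifies $v_1$ only near the interface through $\ell$, and leaves the traces of $\psi_2$ at $y=1$ untouched, so the pairings are unaffected by it.
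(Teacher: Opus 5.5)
Your route differs from the paper's. The paper does not perturb from the Stokes eigenvector. It reruns the asymptotics of Lemma~\ref{lem:doublet} on the adjoint eigenfunction itself:
\begin{itemize}
\item with $\omega_j=\E\zeta_j$, the transport terms become a forcing in $\omega_j''+b_j^2\omega_j=g_j$;
\item variation of constants is launched at $y=1$, so that $\E\zeta_2(1)=P_2$ and $\partial\E\zeta_2(1)=-b_2Q_2$ hold exactly;
\item \eqref{eq:a6} gives $\chi=O(1/n)$;
\item the interface relations of Lemma~\ref{lem:doublet} then hold up to relative $O(1/n)$, which gives $Q_2=O(\|\omega\|/n)$ and $|P_2|\asymp n$, and hence both bounds at once.
\end{itemize}
Your reduced-resolvent argument avoids redoing that interface algebra. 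For the tangential coefficient it even gives the sharper $|(b_n)_2-(b^0_n)_2|=O(1)$. But this needs the identity $B^*e_n^\perp=\sum_m c_m b^0_m$, which you assert without argument, and this is exactly where the unboundedness you flagged must be dealt with. For $(b_n)_2$ the identity is true. Since $(\cS_k-\lambda_n)e_n^\perp=f\in\cH_k$, the partial sums converge in the graph norm of $\cS_k$. The domain of $\cS_k$ lies in $H^3$ on each slab, because its form domain is $H^2$ in an $H^1$-type space. And $\E\psi_2(1)=\partial^2\psi_2(1)$ is continuous in that norm. Your closing remark about the change of variables is also correct on the adjoint side: the adjoint of that map alters only the $\chi$-component.

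The gap is the normal coefficient, which you need for the upper bound $|b_n|\lesssim n$.
\begin{itemize}
\item The functional $\psi\mapsto\partial\E\psi_2(1)=\partial^3\psi_2(1)$ is a third-derivative trace. It is not continuous in the graph norm of $\cS_k$; it needs roughly $D(|\cS_k|^{s})$ with $s>5/4$.
\item For $e_n^\perp$ to lie there, $f$ would have to lie in $D(|\cS_k|^{1/4+\varepsilon})$, which includes the clamped condition $\partial f_2(1)=0$.
\item But $f=(\overline{\lambda'_n}-\lambda_n)\tilde\psi_n-\cB_k^*\tilde\psi_n$. Here $\cB_k^*$ is an adjoint in the $H^1$-type product \eqref{eq:ip}, obtained by inverting $\E$ with only $v_2(1)=0$ imposed, and in general it produces a nonzero wall slope.
\end{itemize}
So the series $\sum_m c_m(b^0_m)_1$ is absolutely convergent by your Cauchy--Schwarz bound, but it need not equal $(B^*e_n^\perp)_1$. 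The Stokes eigenfunctions have anomalously small third-derivative wall traces: $O(m)$ rather than the generic $O(m^2)$ in the energy normalization. The analogue is that every $\sin(m\pi y)$ has vanishing second derivative at $y=1$, so differentiating a sine series twice termwise gives $0$ there whatever $u''(1)$ is. As written, you have bounded a spectral sum, not the wall trace.

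To close the gap, do one of two things:
\begin{itemize}
\item show that the defect between the two is $O(n)$; it is linear in the incompatibilities of $\cS_ke_n$ with the clamped and slope-continuity conditions;
\item or bound $\partial\E\zeta_2(1)=-b_2Q_2$ directly, as the paper does, which requires the interface relations that give $Q_2=O(\|\omega\|/n)$.
\end{itemize}
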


\section{Fredholm Design}\label{sec:design}

\subsection{Pre-feedback and pre-compensated plant}

The design has the structure of \cite{Kr26}: the normal wall velocity carries a static feedback of finite rank, whose only task is to make the finitely many low modes controllable from the tangential wall velocity, and the tangential wall velocity carries the Fredholm functional that assigns the whole spectrum. Both inputs are used; only one enters the infinite-dimensional construction.

The target of Theorem \ref{thm:cl} is the shifted Stokes operator $\cA^0_k-Q$ on the high modes and, on the low block, its eigenvalues made distinct: with $\lambda^0_1,\dots,\lambda^0_N$ the low Stokes eigenvalues counted with multiplicity and $0\le\epsilon_1<\dots<\epsilon_N\le1$ such that
\begin{equation}\label{eq:lowtarget}
\tilde\lambda_j:=\lambda^0_j-Q-\epsilon_j\qquad(j=1,\dots,N)\quad\text{are distinct},
\end{equation}
the target is $\tilde{\cA}_k:=\cA^0_k-Q-E$ with $E:=\sum_{j\le N}\epsilon_j\,e^0_j\otimes e^0_j$, self-adjoint and nonnegative of finite rank, so that Proposition \ref{prop:target} holds for $\tilde{\cA}_k$ with $\frac{d}{dt}V\le-2QV$.

Let $\ell\in C^\infty[d,1]$ vanish identically near $y=d$, with $\ell(1)=1$ and $\ell'(1)=0$, and $e:=(0,\ell,0)\in\cH'_k$; with the trace $\tau x:=v_2(1)$ and $P:=I-e\,\tau$, every $x\in\cH'_k$ is $x=Px+(\tau x)e$ with $Px\in\cH_k$. Vanishing near $d$ makes every interface trace of $e$ zero, so the maximal operator $\cA_{m,k}$ --- the differential operator of \eqref{eq:plant} with the tangential wall condition and all interface conditions imposed and $v_2(1)$ free --- applies to $e$; write $a_k:=\cA_{m,k}e\in\cH'_k$. Since $\ell'(1)=0$, $P$ leaves the tangential wall trace unchanged, so $B_{U_c}$ is the same operator on $\cH_k$ and on $\cH'_k$. The inner product \eqref{eq:ip} is defined on all of $\cH'_k$, so the coordinate functionals $x\mapsto\langle x,\psi_m\rangle_k$ of the Riesz basis of Proposition \ref{prop:riesz}, $\psi_m$ the adjoint root vectors, extend to $\cH'_k$. For $z$ in the resolvent set of $\cA_k$ and $c\in\mathbb{C}^2$, the response $R(z)c\in\cH'_k$ is the solution of \eqref{eq:plant} with $\partial_t$ replaced by $z$ and wall data $v_2(1)=c_1$, $\partial v_2(1)=-2\pi ikc_2$; pairing that boundary value problem with $\psi_m$ through the Green identity of Lemma \ref{lem:adjoint} gives
\begin{equation}\label{eq:response}
\big(z-\lambda'_m\big)\big\langle R(z)c,\psi_m\big\rangle_k=b_m\cdot c,\quad b_m:=\big(\langle B_{V_c},\psi_m\rangle,\langle B_{U_c},\psi_m\rangle\big),
\end{equation}
the boundary-control equation $(z-\cA_k)R(z)c=[B_{V_c}\ B_{U_c}]c$ read in the extrapolated sense. Let $A_{\rm low}:=\cA_k|_{P_{\rm low}\cH_k}$ and $B_{\rm low}=[B^V_{\rm low}\ B^U_{\rm low}]:=P_{\rm low}[B_{V_c}\ B_{U_c}]$ in the sense of \eqref{eq:response}.

\begin{lemma}[Pre-feedback through the normal velocity]\label{lem:heymann}
Let Assumption \ref{ass:d} hold. There is $K_V:P_{\rm low}\cH_k\to\mathbb{C}$ such that, with $A^{H}_{\rm low}:=A_{\rm low}+B^V_{\rm low}K_V$, every eigenvalue of $A^H_{\rm low}$ is either controllable from $B^U_{\rm low}$ or equal to one of the low target values \eqref{eq:lowtarget} with an eigenvector annihilated by $B^U_{\rm low}$, and in the latter case $A^H_{\rm low}$ is diagonalizable.
\end{lemma}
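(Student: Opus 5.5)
This is a finite-dimensional statement about the pair $(A_{\rm low},B_{\rm low})$, and I will prove it by a Heymann-type argument carried out eigenvalue by eigenvalue.

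\emph{Joint controllability.} By Theorem \ref{thm:two} and Proposition \ref{prop:phi}, under Assumption \ref{ass:d} no adjoint eigenvector annihilates both pairings \eqref{eq:pairings}. Restricted to $P_{\rm low}\cH_k$ through \eqref{eq:response}, this says the Hautus test holds for $(A_{\rm low},[B^V_{\rm low}\ B^U_{\rm low}])$ at every low eigenvalue. Hence the finite-dimensional pair is controllable from the two inputs jointly.

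\emph{Splitting the uncontrollable eigenvalues.} Next I identify the eigenvalues of $A_{\rm low}$ that fail the Hautus test for $B^U_{\rm low}$ alone. At such a $\mu$, let $\Psi_\mu:=\ker(A_{\rm low}^*-\bar\mu)\cap\ker (B^U_{\rm low})^*$. The argument depends on the dimension of this space.

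If $\dim\Psi_\mu=1$, spanned by $\psi$, then $\langle B_{V_c},\psi\rangle\ne0$. A rank-one feedback $K_V=\kappa\langle\cdot,\phi\rangle$ then moves $\mu$. To control where it goes, I pass to the Kalman decomposition of $(A_{\rm low},B^U_{\rm low})$. Writing $\mathcal{R}$ for the reachable subspace of $B^U_{\rm low}$, which is $A_{\rm low}$-invariant, the quotient $A_{\rm low}$ on $P_{\rm low}\cH_k/\mathcal{R}$ carries exactly the $B^U$-uncontrollable spectrum. The image of $B^V_{\rm low}$ in this quotient makes the quotient pair controllable, by joint controllability. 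Because only one input acts on the quotient, that quotient pair is cyclic.

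\emph{Choice of $K_V$.} I choose $K_V$ vanishing on $\mathcal{R}$, so that it acts only through the quotient. By single-input pole placement on the quotient, the quotient eigenvalues can be assigned to distinct values among the low targets $\tilde\lambda_j$ of \eqref{eq:lowtarget}. The targets are distinct by construction, and there are enough of them provided $N$ is at least the quotient dimension; I may enlarge $N$ if needed. Distinct assigned values make the quotient block diagonalizable.

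\emph{The main obstacle.} After this feedback the new operator $A^H_{\rm low}$ is block triangular with respect to $\mathcal{R}$. On $\mathcal{R}$ it is unchanged, since $K_V$ vanishes there, so its restriction there remains controllable from $B^U_{\rm low}$. Its eigenvalues on the quotient are the assigned targets. An eigenvalue of $A^H_{\rm low}$ can be $B^U$-uncontrollable only if it is one of these targets, with an adjoint eigenvector annihilating $B^U_{\rm low}$.

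I expect the hardest step to be the remaining claim: diagonalizability of $A^H_{\rm low}$ in that case, and in particular excluding a Jordan coupling between $\mathcal{R}$ and the quotient. I will first try to choose the targets for the quotient disjoint from $\sigma(A_{\rm low}|_{\mathcal{R}})$. This is possible because the $\epsilon_j$ can be perturbed within $[0,1]$ while \eqref{eq:lowtarget} is kept distinct. With that choice, the Sylvester equation decouples the triangular blocks, the quotient block becomes an invariant complement, and the uncontrollable part is diagonal.

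If $\dim\Psi_\mu\ge2$, the quotient pair has one input and a repeated eigenvalue, so it would not be cyclic. I therefore expect joint controllability to force $\dim\Psi_\mu\le1$ per eigenvalue. The reason is that two independent vectors in $\Psi_\mu$ would admit a combination annihilating $B_{V_c}$, contradicting Theorem \ref{thm:two}. This settles the case and completes the plan.
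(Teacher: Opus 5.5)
Your joint-controllability step and the Kalman splitting are sound. When $B^U_{\rm low}=0$, your construction ($\mathcal R=\{0\}$, single-input placement of the whole low block at distinct targets) is exactly the paper's. The gap is the mixed case $\{0\}\ne\mathcal R\ne P_{\rm low}\cH_k$.

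There your $K_V$ vanishes on $\mathcal R$, so $A^H_{\rm low}|_{\mathcal R}=A_{\rm low}|_{\mathcal R}$ is untouched. The quotient eigenvalues you place are $B^U$-uncontrollable by construction, so the ``latter case'' occurs, and the lemma then requires all of $A^H_{\rm low}$ to be diagonalizable. Your Sylvester step only removes coupling between $\mathcal R$ and the quotient. It does nothing about Jordan blocks inside $A_{\rm low}|_{\mathcal R}$, which a cyclic single-input pair may well have, and which the low block of this non-self-adjoint plant is not excluded from having (the proof of Lemma \ref{lem:complete} explicitly allows Jordan poles in the low block). For instance, take $A_{\rm low}=J_2(\mu)\oplus(\nu)$, with $J_2(\mu)$ a $2\times2$ Jordan block, $B^U_{\rm low}=e_2$ and $B^V_{\rm low}=e_3$. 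This pair is jointly controllable with $\mathcal R=\mathrm{span}\{e_1,e_2\}$. Every $K_V$ vanishing on $\mathcal R$ leaves the Jordan block next to a $B^U$-uncontrollable eigenvalue.

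The missing idea is Heymann's lemma, which eliminates the mixed case rather than parking it at targets. Since $(A_{\rm low},[B^V_{\rm low}\ B^U_{\rm low}])$ is controllable and $B^U_{\rm low}\ne0$, there is $F=(F_V,F_U)$ with $(A_{\rm low}+B^V_{\rm low}F_V+B^U_{\rm low}F_U,\,B^U_{\rm low})$ controllable. Feedback through $B^U_{\rm low}$ does not affect controllability from $B^U_{\rm low}$, so $K_V:=F_V$ already makes $(A^H_{\rm low},B^U_{\rm low})$ controllable; in the example, $K_Vx=x_2$ does it when $\mu\ne\nu$. The latter case then arises only when $B^U_{\rm low}=0$, where your argument applies.

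This dichotomy is also what the subsequent design consumes. The single row $\varsigma$ of Definition \ref{def:design} must place the entire low spectrum. Lemma \ref{lem:complete} uses $\tilde\lambda_j\notin\sigma(A^H_{\rm low})$ and the spanning of $\{(A^H_{\rm low})^jB^U_{\rm low}\}$ whenever $B^U_{\rm low}\ne0$. Both fail for your mixed-case $K_V$, even when $A_{\rm low}|_{\mathcal R}$ happens to be diagonalizable.

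A minor point: no enlargement of $N$ or perturbation of the $\epsilon_j$ is needed. The quotient has dimension $r\le N$, and at most $N-r$ of the $N$ distinct targets can lie in $\sigma(A_{\rm low}|_{\mathcal R})$.
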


\begin{proof}
$(A_{\rm low},B_{\rm low})$ is controllable by Theorem \ref{thm:two} and Proposition \ref{prop:phi}. If $B^U_{\rm low}\ne0$, Heymann's lemma gives $F=(F_V,F_U)$ with $(A_{\rm low}+B^V_{\rm low}F_V+B^U_{\rm low}F_U,\,B^U_{\rm low})$ controllable, and since feedback through $B^U_{\rm low}$ does not change controllability from $B^U_{\rm low}$, $K_V:=F_V$ makes $(A^H_{\rm low},B^U_{\rm low})$ controllable. If $B^U_{\rm low}=0$, then $(A_{\rm low},B^V_{\rm low})$ is controllable and $K_V$ is chosen to place the eigenvalues of $A^H_{\rm low}$ at the low target values of \eqref{eq:lowtarget}, which are distinct, so $A^H_{\rm low}$ is diagonalizable; every eigenvector is then annihilated by $B^U_{\rm low}=0$.
\end{proof}

The pre-feedback is $V_c=K_Vx$, extended to $\cH'_k$ by $K_Vx:=K_VP_{\rm low}Px$, a functional of finite rank with $K_Ve=0$; the pre-fed-back plant is $\cA'_k:=\cA_k+B_{V_c}K_V$ on the hyperplane $\cH^K_k:=\{x\in\cH'_k:\ v_2(1)=K_Vx\}$, which is fixed once $K_V$ is. The map $\Xi_k:=I+e\,K_V$ is a bounded bijection of $\cH_k$ onto $\cH^K_k$ with inverse $P$: since $K_Ve=0$, $y=\Xi_kx_0$ has $\tau y=K_Vx_0=K_Vy$, so $y\in\cH^K_k$ and $Py=x_0$; conversely $\Xi_kPy=y$ for $y\in\cH^K_k$.

\begin{proposition}[Pre-fed-back plant keeps the spectral structure]\label{prop:prefed}
$\cA'_k$ generates an analytic semigroup on $\cH^K_k$ with compact resolvent, its eigenvalues $\lambda'_n$, $n\ge n_1$, are simple with $|\lambda'_n-\lambda_n|\le C$, its root vectors form a Riesz basis with parentheses of $\cH^K_k$, $\{P'_{\rm low}\}\cup\{\phi'_n\}$, with $\|P'_n-\Xi_kP^0_n\Xi_k^{-1}\|\le C/d_n$, and the same holds for its adjoint, whose high eigenvectors $\psi'_n$ satisfy $|b'_n|\asymp n$ with $b'_n:=\langle B_{U_c},\psi'_n\rangle$.
\end{proposition}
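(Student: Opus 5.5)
The plan is to conjugate $\cA'_k$ back to the fixed space $\cH_k$ through $\Xi_k$ and to show there that it is a bounded perturbation of $\cS_k$. Then Propositions \ref{prop:ops} and \ref{prop:riesz} apply almost verbatim.

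For $x_0\in\cH_k$ put $y=\Xi_kx_0=x_0+e\,K_Vx_0$. Since $e$ lies in the domain of the maximal operator, we have $\cA_{m,k}y=\cA_{m,k}x_0+a_k\,K_Vx_0$. Applying $P=\Xi_k^{-1}$ gives
\begin{equation*}
\Xi_k^{-1}\cA'_k\Xi_k=\cA_k+P\,a_k\,K_V-e\,\tau\cA_{m,k}(\cdot)\big|_{\text{corr}},
\end{equation*}
which I would write cleanly as $\cA_k+\cK$ with $\cK:=(Pa_k-e\,K_V\cA_k)K_V$-type terms. The second term needs care, because $\tau\cA_kx_0$ is not bounded. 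The point to check is that $K_V=K_VP_{\rm low}P$ factors through the finite-dimensional range of $P_{\rm low}$. Hence $K_V\cA_kx_0=K_VA_{\rm low}P_{\rm low}x_0$ plus the contribution of the control operator, and this is bounded, since $P_{\rm low}$ commutes with $\cA_k$ and $A_{\rm low}$ is a matrix. The conjugated operator is therefore $\cA_k$ plus a bounded operator of finite rank on $\cH_k$, hence $\cS_k$ plus a bounded operator. Analyticity of the semigroup and compactness of the resolvent then follow exactly as in Proposition \ref{prop:ops}, and they transfer to $\cH^K_k$ because $\Xi_k$ is a bounded bijection.

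For the spectral statements I would rerun the argument of Proposition \ref{prop:riesz} with $\|\cB_k\|$ replaced by $C:=\|\cB_k\|+\|\cK\|$. Enlarge $n_1$ so that $d_n>4C$. Theorem \ref{thm:gap} then gives one simple eigenvalue in each disk $|\lambda-\lambda_n|<d_n/2$, with $|\lambda'_n-\lambda_n|\le C$. The usual resolvent estimate on the circle $|\lambda-\lambda_n|=d_n/2$ gives the projection bound $\|\tilde P_n-P^0_n\|\le 4C/d_n$. Since $d_n\gtrsim n$, the sum $\sum(C/d_n)^2$ converges, and Bari's theorem (the quadratic-closeness form, as in Proposition \ref{prop:riesz}) gives the Riesz basis with parentheses. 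Conjugating by $\Xi_k$ yields $P'_n=\Xi_k\tilde P_n\Xi_k^{-1}$ and the stated bound. The adjoint is handled identically, since $\cK^*$ is bounded.

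The main obstacle is $|b'_n|\asymp n$, which means controlling the adjoint eigenvectors $\psi'_n$ rather than the plain ones. Since $\|\tilde P_n^*-P_n^{0*}\|\le C/d_n=O(1/n)$, the normalized $\psi'_n$ (after pulling back through $\Xi_k^*$) differs from the Stokes eigenvector $\psi^0_n$ by $O(1/n)$ in $\cH_k$. The functional $B_{U_c}$ is not bounded on $\cH_k$, though, so this closeness does not pass to $b'_n$ directly. I would first try to read $B_{U_c}\psi'_n$ through the eigen-equation: by \eqref{eq:pairings}, $\langle B_{U_c},\psi\rangle$ is a wall trace of $\E\zeta_2$, and $(\cA^*_k+\cK^*-\bar\lambda'_n)\psi'_n=0$ gives $\frac{\mu_2}{\Re}\E^2\zeta_2=\rho_2\bar\lambda'_n\E\zeta_2+$ lower-order terms, with $\cK^*\psi'_n$ of finite rank and bounded. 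Writing $\psi'_n=c_n\psi^0_n+r_n$, the remainder $r_n$ solves a resolvent equation at distance $\gtrsim d_n$ from the rest of the spectrum, with a forcing of size $O(1)$. Elliptic regularity for the fourth-order problem then bounds $\E r_{n,2}(1)$ by $O(1)$, one power of $n$ below the trace $\asymp n$ of $\psi^0_n$ given by Lemma \ref{lem:doublet}. This is the same mechanism as Lemma \ref{lem:wallplant}. Since $c_n\to1$, this yields $|b'_n|\asymp n$.
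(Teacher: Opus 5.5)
Your first two steps match the paper's. The paper conjugates by $\Xi_k$ to $\cA_k+\Theta_k$ with $\Theta_k=(Pa_k)\otimes K_V$ of rank one, and reruns Proposition \ref{prop:riesz} with $\|\cB_k\|+\|\Theta_k\|$. No $e\,\tau\cA_k$ correction appears there, because $P$ fixes $\cA_kx_0\in\cH_k$. Your displayed formula is garbled, but a term of the form $e\,K_V\cA_k$ would be bounded anyway, by the factorization of $K_V$ through $P_{\rm low}$ that you point out. So nothing is lost there.

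For $|b'_n|\asymp n$ your route is genuinely different from the paper's. The paper never compares $\psi'_n$ with the Stokes eigenvector. It notes that $(\cA''_k)^*=\cA_k^*+\Theta_k^*$ keeps the domain and transmission conditions of Lemma \ref{lem:adjoint}. It notes that $\Theta_k^*\psi=k_K\langle\psi,Pa_k\rangle_k$, with $k_K$ smooth and in the fixed finite-dimensional range of $P^*_{\rm low}$. It then reruns the direct asymptotic ODE analysis of Lemma \ref{lem:wallplant} with this extra forcing: variation of constants launched from $y=1$, so that $\E\zeta_2(1)=P_{2,n}$ exactly, and the two-case interface argument of Lemma \ref{lem:doublet} to get $|P_{2,n}|\asymp n$. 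That lemma is this direct analysis, not the perturbative mechanism you attribute to it.

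Your route inherits the lower bound from the Stokes eigenfunctions of Lemma \ref{lem:doublet} and needs only an upper bound on the trace of the remainder. It uses nothing about the perturbation beyond boundedness on $\cH_k$. So it needs neither the smoothness of $k_K$ nor a redo of the interface asymptotics, and it would reprove Lemma \ref{lem:wallplant} along the way.

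Two points must be made explicit.

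\emph{Domain.} The adjoint eigenvector obeys \eqref{eq:a5}--\eqref{eq:a6}, with $\chi$ and the Yih coupling, not the Stokes conditions. So the resolvent equation for $r_n$ has to be posed in the variables of Proposition \ref{prop:ops}, where the plant is $\cS_k$ plus a bounded operator. On the adjoint side that change of variables moves only the $\chi$-component, so the wall trace is unaffected.

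\emph{Uniformity.} ``Elliptic regularity'' at fixed parameter is not uniform in $\lambda'_n\asymp n^2$, so it must be combined with the resolvent bound. Take $r_n$ orthogonal to $\psi^0_n$. Then $\|r_n\|_k=O(1/n)$, so $\|\cS_kr_n\|_k\le|\lambda'_n|\|r_n\|_k+O(1)=O(n)$. Fixed-$k$ regularity gives $\|r_n\|_{H^3}=O(n)$ on each layer, and interpolation gives $\|r_n\|_{H^2}=O(1)$. The trace inequality then yields $|\E r_{n,2}(1)|=O(\sqrt n)$, using $r_{n,2}(1)=0$. This is weaker than the $O(1)$ you assert, but $o(n)$ is all that $|b'_n|\asymp n$ requires.
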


For $n\ge n_1$ the eigenvalues of $\cA'_k$ are controllable from $B_{U_c}$ by $b'_n\ne0$; the low eigenvalues are handled by Lemma \ref{lem:heymann}.

Fix a nonresonant shift $Q\ge q$: for each $k$ the set of $Q$ for which some target value --- $\lambda^0_n-Q$ for $n\ge n_1$, or $\lambda^0_j-Q-\epsilon_j$ for $j\le N$ in \eqref{eq:lowtarget} --- coincides with an eigenvalue of $\cA'_k$ is countable, being indexed by pairs of eigenvalues, so it cannot cover $[q,q+1]$. Let $R'(z)c$ be the response of the pre-fed-back plant to tangential data $\partial v_2(1)=-2\pi ikc$, a solution in $\cH^K_k$, so that $(z-\lambda'_m)\langle R'(z)c,\psi'_m\rangle_k=b'_mc$.

\begin{definition}[Design vectors]\label{def:design}
For $n\ge n_1$, with target $\tilde\lambda_n:=\lambda^0_n-Q$,
\begin{equation}\label{eq:sylvester}
c_n:=\frac{\tilde\lambda_n-\lambda'_n}{b'_n},\qquad X_n:=R'(\tilde\lambda_n)c_n\in\cH^K_k,
\end{equation}
so that $\langle X_n,\psi'_n\rangle_k=1$ and $\langle X_n,\psi'_m\rangle_k=b'_mc_n/(\tilde\lambda_n-\lambda'_m)$ for $m\ne n$. For the low block in the case $B^U_{\rm low}\ne0$: with $\varsigma:P'_{\rm low}\cH^K_k\to\mathbb{C}$ the row placing the spectrum of $A^H_{\rm low}+B^U_{\rm low}\varsigma$ at the distinct values \eqref{eq:lowtarget}, $Z$ its eigenvector matrix, and $c_j:=\varsigma Ze_j$,
\begin{equation}\label{eq:lowdesign}
X_j:=R'(\tilde\lambda_j)c_j\qquad(j=1,\dots,N),
\end{equation}
whose low coordinates are $(\tilde\lambda_j-A^H_{\rm low})^{-1}B^U_{\rm low}c_j=Ze_j$ and whose high coordinates are $b'_mc_j/(\tilde\lambda_j-\lambda'_m)$. In the case $B^U_{\rm low}=0$, $c_j:=0$ and $X_j:=\phi'_j$, the eigenvectors placed by $K_V$.
\end{definition}

\subsection{Fredholm design and closed-loop stability}

\begin{lemma}[Design vectors are quadratically close to the root vectors]\label{lem:qc}
\begin{equation}\label{eq:qc}
\sum_n\big\|X_n-\phi'_n\big\|_k^2<\infty,\qquad \sum_n|c_n|^2<\infty .
\end{equation}
\end{lemma}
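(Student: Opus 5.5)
The plan is to bound the two sums through the coordinates of $X_n$ in the Riesz basis of Proposition \ref{prop:prefed}. I would first show $\sum_n|c_n|^2<\infty$ and then use the explicit coordinates from Definition \ref{def:design} to control $\|X_n-\phi'_n\|_k$.

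\emph{Step 1: bound $c_n$.} Proposition \ref{prop:prefed} gives $|\lambda'_n-\lambda_n|\le C$. The target is $\tilde\lambda_n=\lambda^0_n-Q$, and here $\lambda^0_n=\lambda_n$ are the Stokes eigenvalues of $\cS_k$ (the target operator $\cA^0_k$). Hence $|\tilde\lambda_n-\lambda'_n|\le Q+C$ uniformly in $n$. Combined with $|b'_n|\asymp n$ from Proposition \ref{prop:prefed}, this gives $|c_n|\le C'/n$, so $\sum|c_n|^2<\infty$. The low indices $j\le N$ are finitely many and do not affect convergence.

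\emph{Step 2: coordinates of $X_n-\phi'_n$.} Take $\phi'_n$ normalized so that $\langle\phi'_n,\psi'_n\rangle_k=1$. Then the $n$-th coordinate of $X_n-\phi'_n$ vanishes, and its low-block coordinate is $P'_{\rm low}R'(\tilde\lambda_n)c_n$. The $m$-th high coordinate ($m\ne n$) is $b'_mc_n/(\tilde\lambda_n-\lambda'_m)$. Since the family is a Riesz basis with parentheses, and the high blocks are one-dimensional with $\|\phi'_m\|$ bounded above and below by the projection estimate of Proposition \ref{prop:prefed}, the norm is equivalent to the $\ell^2$ norm of the coordinates:
\[
\|X_n-\phi'_n\|_k^2\le C\Big(\|P'_{\rm low}R'(\tilde\lambda_n)c_n\|^2+|c_n|^2\sum_{m\ne n}\frac{|b'_m|^2}{|\tilde\lambda_n-\lambda'_m|^2}\Big).
\]

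\emph{Step 3: the low-block term.} The low coordinates are $(\tilde\lambda_n-A'_{\rm low})^{-1}B^U_{\rm low}c_n$ with a fixed finite matrix. Because $|\tilde\lambda_n|\to\infty$, this is $O(|c_n|/n^2)$, so it is square-summable.

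\emph{Step 4: the off-diagonal sum, which is the main obstacle.} By \eqref{eq:gap} and Theorem \ref{thm:gap}, $\lambda_m\approx-\frac{\nu_1}{\Re}b_m^2$ with $b_m\asymp m$ and separations $b_{m+1}-b_m\ge c$. Then $|\tilde\lambda_n-\lambda'_m|\ge c|b_n^2-b_m^2|-C\ge c'|n-m|(n+m)$ for $m\ne n$, once $n_1$ is large enough that the $O(1)$ perturbations are absorbed; this is also where nonresonance of $Q$ is used for the finitely many small $|n-m|$ and low indices. With $|b'_m|\asymp m$ this gives
\[
\sum_{m\ne n}\frac{m^2}{(n-m)^2(n+m)^2}\le\sum_{m\ne n}\frac{1}{(n-m)^2}\le\frac{\pi^2}{3},
\]
uniformly in $n$. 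Hence $\|X_n-\phi'_n\|_k^2\le C|c_n|^2\le C/n^2$, which is summable.

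\emph{Low indices.} For $j\le N$ the terms $X_j-\phi'_j$ are finitely many and finite because $\tilde\lambda_j\notin\sigma(\cA'_k)$ by nonresonance, so they do not affect \eqref{eq:qc}.

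The delicate point is the uniform lower bound $|\tilde\lambda_n-\lambda'_m|\gtrsim|n-m|(n+m)$. It needs the gap of Theorem \ref{thm:gap} in the $b$ variable, not just \eqref{eq:gap} as written. I would take that from the uniform $b$-separation $cm_2$ established in its proof.
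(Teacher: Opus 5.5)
Your proposal is correct and follows essentially the paper's route. You get $|c_n|\le C/n$ from $|\tilde\lambda_n-\lambda'_n|\le C$ and $|b'_n|\asymp n$, then control the Riesz coordinates $b'_mc_n/(\tilde\lambda_n-\lambda'_m)$ by $|\tilde\lambda_n-\lambda'_m|\gtrsim|m^2-n^2|/\Re$. The paper splits that sum into $m\le n/2$, $|m-n|<n/2$, $m\ge2n$, whereas you just use $m\le n+m$.

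Your closing caveat is unnecessary, though. The high $\lambda_n$ are real, simple and ordered, so summing \eqref{eq:gap} over consecutive indices already gives $|\lambda_n-\lambda_m|\ge c_0|m-n|(m+n-1)/(2\Re)$, with no need to go back to the $b$-separation inside the proof of Theorem \ref{thm:gap}.
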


\begin{lemma}[Design family is complete]\label{lem:complete}
$\{X_n\}_{n\ge n_1}$ together with the low design vectors is complete in $\cH^K_k$.
\end{lemma}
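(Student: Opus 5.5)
We prove completeness by duality. Suppose $\psi\in\cH^K_k$ is orthogonal in $\langle\cdot,\cdot\rangle_k$ to every $X_n$, $n\ge n_1$, and to every low design vector $X_j$. We show that $\psi=0$.

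\emph{Expansion in the adjoint basis.} By Proposition \ref{prop:prefed}, the adjoint root vectors $\{\psi'_m\}$, together with the low adjoint block, form a Riesz basis with parentheses. We therefore expand $\psi$ as the sum of its low part and $\sum_{m\ge n_1}\beta_m\psi'_m$, with $(\beta_m)\in\ell^2$ up to the weights of the basis. We then use the coordinates of the design vectors given in Definition \ref{def:design}: $\langle X_n,\psi'_m\rangle_k=b'_mc_n/(\tilde\lambda_n-\lambda'_m)$, and the low coordinates of $X_j$ are $Ze_j$. With these, the orthogonality conditions become
\begin{equation*}
0=\overline{c_n}\,\Big(\sum_{m\ge n_1}\frac{\overline{b'_m}\,\beta_m}{\overline{\tilde\lambda_n-\lambda'_m}}+\ell(\overline{\tilde\lambda_n})\Big)\qquad(n\ge n_1),
\end{equation*}
where $\ell$ is a rational function built from the low block through $(\bar z-A^{H*}_{\rm low})^{-1}$.

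\emph{Reduction to a meromorphic function.} Define
\begin{equation*}
F(z):=\sum_{m\ge n_1}\frac{\overline{b'_m}\beta_m}{z-\overline{\lambda'_m}}+\ell(z).
\end{equation*}
This function is meromorphic, with at most simple poles at the $\overline{\lambda'_m}$ and at the low eigenvalues. The series converges because $|b'_m|\asymp m$, $\beta\in\ell^2$, and $|\lambda'_m|\asymp m^2$. Every $c_n$ is nonzero: by the choice of a nonresonant $Q$ we have $\tilde\lambda_n\ne\lambda'_n$, and $b'_n\ne0$. The same holds for the low $c_j$ in the case $B^U_{\rm low}\ne0$, where $Z$ is invertible. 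Hence $F$ vanishes at every target value $\overline{\tilde\lambda_n}$. The residue of $F$ at $\overline{\lambda'_m}$ is $\overline{b'_m}\beta_m$, so it suffices to prove $F\equiv0$.

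\emph{Key step: $F\equiv0$, and the main obstacle.} Multiply $F$ by the canonical product $\Pi'(z)$ over the plant eigenvalues and divide by the canonical product $\tilde\Pi(z)$ over the targets. The quotient $G:=\Pi'F/\tilde\Pi$ is entire. Both sequences lie along the negative real axis, grow like $n^2$, and differ by bounded shifts: $|\lambda'_n-\lambda^0_n|\le C$ by Proposition \ref{prop:prefed}, while $\tilde\lambda_n=\lambda^0_n-Q$. Both products are therefore of order $1/2$, and their ratio is bounded above and below away from small discs about the zeros. The term $\Pi'F$ cancels the poles of $F$.

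We then show that $G$ tends to $0$ along rays off the negative axis. For this we use the estimate $\sum_m|b'_m\beta_m|/|z-\overline{\lambda'_m}|\to0$ on such rays, which follows from $\sum|\beta_m|^2<\infty$ and $|b'_m|^2/|z-\lambda'_m|^2$ being summable with a sum tending to $0$. A Phragm\'en--Lindel\"of argument for functions of order $1/2$ then gives $G\equiv0$, hence $F\equiv0$.

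This growth comparison of the two products is the step we expect to be the main obstacle. It requires the uniform gap \eqref{eq:gap} and the precise bounded-shift asymptotics. We would first try to import the corresponding estimate from \cite{Kr26}, since the eigenvalue asymptotics $\lambda^0_n\sim-\nu_1 b_n^2/\Re$ with uniformly separated $b_n$ (Theorem \ref{thm:gap}) match the one-fluid case up to a rescaling.

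\emph{Conclusion.} Since $F\equiv0$, all residues vanish, and so $\beta_m=0$ for $m\ge n_1$. The low part of $\psi$ then satisfies $\ell\equiv0$. Controllability of $(A^H_{\rm low},B^U_{\rm low})$ and invertibility of $Z$ give that the low part vanishes as well. In the case $B^U_{\rm low}=0$, the $X_j=\phi'_j$ are eigenvectors spanning $P'_{\rm low}\cH^K_k$ (Lemma \ref{lem:heymann}), so the low coordinates of $\psi$ vanish directly, while the high part is handled as above with $\ell\equiv0$. Thus $\psi=0$.
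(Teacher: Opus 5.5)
Your argument is correct and follows the paper's architecture. Your $F$ is the reflection $F(z)=\overline{G(\bar z)}$ of the paper's $G(z)=\langle R'(z)1,\zeta\rangle_k$, with $\beta_m=\overline{z_m}$ under biorthogonal normalization. It vanishes at the targets and is made entire by the same ratio of canonical products. The residues $b'_mz_m$ with $b'_m\ne0$, together with controllability of the low block, finish the proof exactly as in the paper.

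The genuine difference is how the entire function is shown to vanish. The paper bounds $H=GP/Q$ on circles $\Gamma_N$ placed midway between consecutive spectral radii. There $\sum_m m^2|z-\lambda'_m|^{-2}=O(1)$, because consecutive eigenvalues are $\gtrsim N$ apart by Theorem~\ref{thm:gap}. Liouville and the limit along the positive axis then give $H\equiv0$ with no order estimate at all.

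You work instead on rays off the negative axis. There $|z-\overline{\lambda'_m}|\gtrsim|z|+m^2$ uses only bounded imaginary parts and $|\lambda'_m|\asymp m^2$, so, contrary to your own remark, your route does not need \eqref{eq:gap} in this lemma. The price is two items you should write out.
\begin{itemize}
\item An a priori bound showing that $G$ has order at most $1/2$. For example, $|\Pi'F|\le C(1+|z|)e^{c|z|^{1/2}}$ follows by splitting the sum at $|\lambda'_m|\approx2|z|$ and using the maximum principle on unit discs about the poles. Hadamard's theorem then shows that dividing by $\tilde\Pi$ leaves a genus-zero product.
\item The sector bookkeeping. Order $1/2$ is exactly critical for an angle of opening $2\pi$, so Phragm\'en--Lindel\"of cannot be applied on the slit plane. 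Take two rays $\arg z=\pm\theta_0$ with $0<\theta_0<\pi$, apply Phragm\'en--Lindel\"of in each of the two sectors they bound (openings $2\theta_0$ and $2\pi-2\theta_0$), and conclude by Liouville and the decay on a ray.
\end{itemize}

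Two small corrections. First, in the case $B^U_{\rm low}\ne0$ the matrix $A^H_{\rm low}$ is cyclic but may have Jordan blocks. The low poles of $F$ therefore need not be simple, and $\Pi'$ must carry algebraic multiplicities, as the paper's $p_{\rm low}(z)=\det(zI-A^H_{\rm low})$ does. Second, $c_j\ne0$ at the low targets follows from $Ze_j\ne0$ together with the nonresonance $\tilde\lambda_j\notin\sigma(A^H_{\rm low})$, not from invertibility of $Z$ alone.
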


\begin{proof}
Let $\zeta\in\cH^K_k$ be orthogonal to every design vector, and for $z\in\rho(\cA'_k)$ set $G(z):=\langle R'(z)1,\zeta\rangle_k$, with the high expansion
\begin{equation}\label{eq:Gexp}
G(z)=G_{\rm low}(z)+\sum_{m\ge n_1}\frac{b'_mz_m}{z-\lambda'_m},\qquad z_m:=\langle\phi'_m,\zeta\rangle_k\in\ell^2,
\end{equation}
$G_{\rm low}$ the rational contribution of the finite block, whose poles are the eigenvalues of $A^H_{\rm low}$ with their algebraic multiplicities. By $X_n=R'(\tilde\lambda_n)c_n$ and $c_n\ne0$,
\begin{equation}\label{eq:Gzeros}
G(\tilde\lambda_n)=0,\qquad n\ge n_1 .
\end{equation}
If $B^U_{\rm low}=0$, the low design vectors are eigenvectors of $\cA'_k$ and orthogonality gives $\zeta_{\rm low}=0$, so $G_{\rm low}\equiv0$. If $B^U_{\rm low}\ne0$, then $(\tilde\lambda_j-A^H_{\rm low})Ze_j=B^U_{\rm low}c_j$ with $\tilde\lambda_j\notin\sigma(A^H_{\rm low})$ by nonresonance, so $c_j\ne0$ and \eqref{eq:Gzeros} holds at the low targets too.

Products. With $p_{\rm low}(z)=\det(zI-A^H_{\rm low})$,
\begin{equation}\label{eq:PQ}
P(z)=p_{\rm low}(z)\prod_{m\ge n_1}\Big(1-\frac{z}{\lambda'_m}\Big),\qquad
Q(z)=\prod_{j\le N}(z-\tilde\lambda_j)\prod_{m\ge n_1}\Big(1-\frac{z}{\tilde\lambda_m}\Big),
\end{equation}
both convergent since $|\lambda'_m|,|\tilde\lambda_m|\asymp m^2$. Then $H:=GP/Q$ is entire: $P$ cancels every pole of $G$, $p_{\rm low}$ those of $G_{\rm low}$ including Jordan ones, and the zeros \eqref{eq:Gzeros} cancel those of $Q$, which are simple by \eqref{eq:lowtarget} and the nonresonance.

Circles. Choose $R_N\to\infty$ midway between consecutive spectral radii. Pairing the factors, $\frac{1-z/\lambda'_m}{1-z/\tilde\lambda_m}=\frac{\tilde\lambda_m}{\lambda'_m}\cdot\frac{z-\lambda'_m}{z-\tilde\lambda_m}$; the products $\prod|\tilde\lambda_m/\lambda'_m|$ converge since $(\tilde\lambda_m-\lambda'_m)/\lambda'_m=O(m^{-2})$, and $\log|(z-\lambda'_m)/(z-\tilde\lambda_m)|=O(|\lambda'_m-\tilde\lambda_m|/|z-\tilde\lambda_m|)$ sums uniformly on $\Gamma_N=\{|z|=R_N\}$ by $|\lambda'_m-\tilde\lambda_m|\le C$ and the spacing $\gtrsim m$. Hence $\sup_N\sup_{\Gamma_N}|P/Q|<\infty$. For $G$, by Cauchy--Schwarz it suffices to bound $S_N(z)=\sum_mm^2|z-\lambda'_m|^{-2}$: for $|m-N|\le N/2$, $|z-\lambda'_m|\gtrsim N(1+|m-N|)$ and the sum is $O(1)$; for $m\le N/2$, $|z-\lambda'_m|\gtrsim N^2$ and the sum is $O(1/N)$; for $m\ge2N$, $|z-\lambda'_m|\gtrsim m^2$ and the sum is $O(1/N)$. With $G_{\rm low}$ bounded on large circles, $\sup_N\sup_{\Gamma_N}|H|<\infty$, so $H$ is a bounded entire function, hence constant.

The constant. On $z=x>0$, $\sum_mm^2(x+cm^2)^{-2}=O(x^{-1/2})$ gives $G_{\rm high}(x)=O(x^{-1/4})$, and $G_{\rm low}(x)=O(x^{-1})$, so $G(x)\to0$; $|P/Q|$ is bounded there by the same pairing, so $H\equiv0$ and $G\equiv0$.

Conclusion. The residue of $G$ at $\lambda'_m$ is $b'_mz_m$, and $b'_m\ne0$ by Proposition \ref{prop:prefed}, so $z_m=0$ for $m\ge n_1$. If $B^U_{\rm low}=0$ we already have $\zeta_{\rm low}=0$. Otherwise $G\equiv0$ reduces to $\langle(z-A^H_{\rm low})^{-1}B^U_{\rm low},\zeta_{\rm low}\rangle\equiv0$, whose expansion at infinity gives $\langle(A^H_{\rm low})^jB^U_{\rm low},\zeta_{\rm low}\rangle=0$ for $j<N$; these vectors span by controllability, so $\zeta_{\rm low}=0$. Hence $\zeta=0$.
\end{proof}

\begin{corollary}[Design family is a Riesz basis]\label{cor:basis}
$J:\phi'_n\mapsto X_n$ on the Riesz basis of $\cA'_k$, of the form $I+\mathcal{K}$ with $\mathcal{K}$ Hilbert--Schmidt by Lemma \ref{lem:qc}, is boundedly invertible, and the design family is a Riesz basis with parentheses of $\cH^K_k$.
\end{corollary}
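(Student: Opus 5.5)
The map $J$ is defined on the root vectors of $\cA'_k$, so I will build it as a bounded operator on $\cH^K_k$ and then show it is injective and surjective.

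\textbf{Definition and Hilbert--Schmidt perturbation.} The Riesz basis with parentheses of Proposition \ref{prop:prefed} consists of the low block $P'_{\rm low}\cH^K_k$ and the simple high eigenvectors $\phi'_n$, $n\ge n_1$, with biorthogonal family $\psi'_n$. On the low block, $J$ sends a basis of $P'_{\rm low}\cH^K_k$ (the eigenvectors $\phi'_j$, or in the case $B^U_{\rm low}\ne0$ a basis $\phi'_j$ adapted to $Z$) to the low design vectors $X_j$. On the high part it sends $\phi'_n\mapsto X_n$. I then write $J=I+\mathcal{K}$ with
\[
\mathcal{K}x=\sum_{n\ge n_1}\langle x,\psi'_n\rangle_k\,(X_n-\phi'_n)+\mathcal{K}_{\rm low}x ,
\]
where $\mathcal{K}_{\rm low}$ has finite rank. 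The coordinate functionals $x\mapsto\langle x,\psi'_n\rangle_k$ are bounded from $\cH^K_k$ to $\ell^2$, with norm at most the Riesz constant. Combined with $\sum_n\|X_n-\phi'_n\|_k^2<\infty$ from Lemma \ref{lem:qc}, this makes $\mathcal{K}$ Hilbert--Schmidt, hence compact. The estimate is the standard one: $\|\mathcal{K}x\|\le(\sum_n\|X_n-\phi'_n\|^2)^{1/2}\|(\langle x,\psi'_n\rangle)\|_{\ell^2}$, and applying it on tails gives approximation by finite-rank operators. So $J$ is bounded and $J-I$ is compact.

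\textbf{Invertibility by the Fredholm alternative.} Since $I+\mathcal{K}$ is Fredholm of index zero, it suffices to show that $J$ has dense range, or equivalently that $J^*$ is injective. The range of $J$ contains every design vector, and Lemma \ref{lem:complete} shows that the design family is complete in $\cH^K_k$. So the range is dense. A Fredholm operator has closed range, so the range is all of $\cH^K_k$. With index zero, the kernel is then trivial, and $J$ is boundedly invertible.

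The one point needing care is that density comes from the images of the basis vectors, not only from their span. This is immediate, because $J$ maps the basis onto the design family.

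\textbf{Transfer of the basis property.} A bounded invertible image of a Riesz basis with parentheses is again one, with the same bracketing. Here the bracketing is the finite low block together with the singletons $n\ge n_1$. The design family is exactly $J$ applied to the basis $\{P'_{\rm low}\}\cup\{\phi'_n\}$, so the unconditional expansion $x=J\big(P'_{\rm low}J^{-1}x+\sum_n\langle J^{-1}x,\psi'_n\rangle\phi'_n\big)$ holds, with dual family $(J^{-1})^*\psi'_n$.

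\textbf{Main obstacle.} I expect the main obstacle to be the low block. One has to verify that the chosen low basis of $P'_{\rm low}\cH^K_k$ is mapped onto a set of the correct cardinality $N$. In the case $B^U_{\rm low}\ne0$, this means identifying the low coordinates $Ze_j$ as the images of a basis, which uses the invertibility of $Z$ since the $\tilde\lambda_j$ are distinct. Once that is in place, the rest is the standard Bari-type argument.
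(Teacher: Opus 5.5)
Your argument is correct and is the paper's proof: $J=I+\mathcal{K}$ is Fredholm of index zero, Lemma \ref{lem:complete} makes its closed range dense and hence all of $\cH^K_k$, and index zero then gives $\ker J=\{0\}$. You also supply routine details the paper leaves implicit: the Hilbert--Schmidt bound through the bounded coordinate map into $\ell^2$, and the transfer of the bracketed basis under a bounded invertible map. The low-block point you flag is needed only for $J$ to be well defined, namely $N$ design vectors matched to an $N$-dimensional low block. It is not needed for invertibility, which index zero supplies without invoking invertibility of $Z$.
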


\begin{theorem}[Closed loop is similar to the target]\label{thm:cl}
Let $U_c=F^U_kx$ with $F^U_k:\cH^K_k\to\mathbb{C}$ the bounded functional $F^U_kX_n:=c_n$ on the design basis, extended to $\cH'_k$ by $F^U_kx:=F^U_k(\Xi_kPx)$, and $V_c=K_Vx$. Then the closed loop generates an analytic semigroup on $\cH^K_k$, $\cT:\cH^K_k\to\cH_k$ defined by $\cT X_n:=e^0_n$ on the design basis, the low design vectors mapped to the eigenvectors of $\tilde{\cA}_k$ at \eqref{eq:lowtarget}, is bounded with bounded inverse, and
\begin{equation}\label{eq:intertwine}
\cT\big(\cA'_k+B_{U_c}F^U_k\big)=\tilde{\cA}_k\,\cT,
\end{equation}
where $\tilde{\cA}_k=\cA^0_k-Q-E$ is the target of \eqref{eq:lowtarget} and $\cA^0_k$ the operator of \eqref{eq:target} at $q=0$. Consequently every solution of the closed loop at wavenumber $k$ satisfies $\|x(t)\|_k\le\|\cT\|\|\cT^{-1}\|e^{-Qt}\|x(0)\|_k$.
\end{theorem}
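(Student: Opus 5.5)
The plan is to verify the intertwining identity \eqref{eq:intertwine} on the design basis, extend it by density, and then read off generation and decay from the target.

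\emph{Boundedness of $\cT$ and $\cT^{-1}$.} By Corollary \ref{cor:basis} the design family is a Riesz basis with parentheses of $\cH^K_k$, so $\cT$ maps it onto a Riesz basis of $\cH_k$. The high part goes onto $\{e^0_n\}_{n\ge n_1}$ and the low part onto the eigenvectors of $\tilde{\cA}_k$ at \eqref{eq:lowtarget}. These eigenvectors span the finite low block $\mathrm{span}\{e^0_j\}_{j\le N}$, since $E$ acts only there and the values are distinct. The target family is orthonormal in the high range and a basis of a finite-dimensional space in the low range. Hence $\cT$ is bounded with bounded inverse. Boundedness of $F^U_k$ follows from $\sum|c_n|^2<\infty$ (Lemma \ref{lem:qc}) and the Riesz property: $F^U_kx=\sum_n\langle x,X_n^*\rangle c_n$ with biorthogonal $X_n^*$ bounded in $\ell^2$.

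\emph{Intertwining on each design vector.} I check that $X_n\in D(\cA'_k+B_{U_c}F^U_k)$ and that
\[
(\cA'_k+B_{U_c}F^U_k)X_n=\tilde\lambda_nX_n .
\]
By Definition \ref{def:design}, $X_n=R'(\tilde\lambda_n)c_n$ solves the pre-fed-back boundary value problem at $z=\tilde\lambda_n$ with tangential wall datum $\partial v_2(1)=-2\pi ikc_n$. Since $F^U_kX_n=c_n$, this datum is exactly the one the closed-loop boundary condition \eqref{eq:p4b} prescribes. The normal datum is $v_2(1)=K_VX_n$, because $X_n\in\cH^K_k$. So $X_n$ lies in the closed-loop domain and is an eigenvector with eigenvalue $\tilde\lambda_n$.

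For the low block with $B^U_{\rm low}\ne0$ the argument is the same, using $c_j=\varsigma Ze_j$. In the case $B^U_{\rm low}=0$, $X_j=\phi'_j$ are eigenvectors of $\cA'_k$ at $\tilde\lambda_j$ by Lemma \ref{lem:heymann}, and $c_j=0=F^U_kX_j$. On the target side, $\tilde{\cA}_k e^0_n=(\lambda^0_n-Q)e^0_n$ because $E e^0_n=0$ for $n>N$, and the low images are eigenvectors by construction. So both sides of \eqref{eq:intertwine} agree on every design vector.

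\emph{Extension to the domain — the main obstacle.} The closed-loop domain is defined by boundary conditions, not as a span, so agreement on finite combinations does not immediately give the operator identity. I would instead define $\cA_{cl}:=\cT^{-1}\tilde{\cA}_k\cT$ on $\cT^{-1}D(\tilde{\cA}_k)$; this operator generates an analytic semigroup because $\tilde{\cA}_k$ is self-adjoint and bounded above. The point is then to show $\cA_{cl}=\cA'_k+B_{U_c}F^U_k$.

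First, $\cA_{cl}$ is contained in the closed-loop operator. Take $x=\sum x_nX_n$ in its domain, so $\sum|\tilde\lambda_n x_n|^2<\infty$. Its partial sums converge in graph norm to $x$ and satisfy the closed-loop boundary conditions, since $F^U_k$ and $K_V$ are bounded. The closed-loop differential operator with fixed traces is closed, so the limit does too.

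Second, equality. For $z$ large real, $z-\cA_{cl}$ is onto. I would show $z-(\cA'_k+B_{U_c}F^U_k)$ is injective by representing it as $(z-\cA'_k)\big(I-R'(z)F^U_k\big)$. Here $R'(z)F^U_k$ is rank one with norm $\to0$ as $z\to\infty$, because $\|R'(z)\|=O(z^{-1/4})$ by the estimate used in Lemma \ref{lem:complete}. A standard closed-operator argument then gives equality, and the closed loop generates an analytic semigroup on $\cH^K_k$.

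\emph{Decay.} With equality in hand, $e^{t\cA_{cl}}=\cT^{-1}e^{t\tilde{\cA}_k}\cT$. Proposition \ref{prop:target} in its form for $\tilde{\cA}_k$ gives $\frac{d}{dt}V\le-2QV$, and hence $\|e^{t\tilde{\cA}_k}\|\le e^{-Qt}$. This yields $\|x(t)\|_k\le\|\cT\|\|\cT^{-1}\|e^{-Qt}\|x(0)\|_k$.

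The delicate point remains the domain identification. The fallback is to compare resolvents directly via \eqref{eq:response}: show that $(z-\cA_{cl})^{-1}$ and the closed-loop resolvent agree on the basis, and both are bounded.
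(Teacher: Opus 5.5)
Your proof is correct. The eigenvector check on the design family, the boundedness of $\cT$, $\cT^{-1}$ and $F^U_k$, and the transported decay are the same as in the paper. The difference lies in identifying the boundary realization $\cA_{{\rm cl},k}$ with the operator defined on the basis.

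The paper does not redo that step. It takes finite combinations of design vectors as a core and cites the three steps of \cite[Theorem 4]{Kr26}: a graph estimate on the core, the codimension-one description of $D(\cA_{{\rm cl},k})$ inside the maximal domain, and the identification.

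You instead set $\cA_{\rm cl}:=\cT^{-1}\tilde\cA_k\cT$ and use the fact that $A\subset B$, $z-A$ onto and $z-B$ injective force $A=B$. Injectivity comes from the rank-one factorization $z-\cA_{{\rm cl},k}=(z-\cA'_k)(I-R'(z)F^U_k)$ together with the bound $\|R'(x)1\|_k=O(x^{-1/4})$, which underlies the estimate $G_{\rm high}(x)=O(x^{-1/4})$ in Lemma~\ref{lem:complete}. Your factorization is the codimension-one description in resolvent form. The paper's route buys reuse of the one-fluid argument. Yours buys a self-contained proof with an explicit closed-loop resolvent and no need for the core claim.

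The factorization also supplies the closedness of $\cA_{{\rm cl},k}$ that you assert in the inclusion step. Once $|F^U_kR'(z)1|<1$, the map $I-R'(z)F^U_k$ sends $D(\cA_{{\rm cl},k})$ bijectively onto $D(\cA'_k)$, with inverse $I+(1-F^U_kR'(z)1)^{-1}R'(z)F^U_k$. Hence $z\in\rho(\cA_{{\rm cl},k})$. Your fallback then finishes without the inclusion step at all: the two bounded resolvents agree on the complete design family, so they coincide.

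Finally, $\tilde\cA_k$ is normal rather than self-adjoint, because $\cA^0_k$ carries the $\xi$-row eigenvalue $-2\pi ikU_I$. This does not affect the analytic semigroup or the bound $\|e^{t\tilde\cA_k}\|\le e^{-Qt}$.
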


\begin{proof}
By \eqref{eq:sylvester} and \eqref{eq:lowdesign}, every design vector solves the pre-fed-back plant at its target value with tangential data equal to its feedback value, so it is a closed-loop eigenvector; the low targets are distinct, so the closed-loop low block is diagonalizable and similar to the low block of $\tilde{\cA}_k$, and in the case $B^U_{\rm low}=0$ the low eigenvectors are those placed by $K_V$, by Lemma \ref{lem:heymann}. By Corollary \ref{cor:basis} the design family is a Riesz basis with parentheses of $\cH^K_k$, so $\cT$ is bounded with bounded inverse and $F^U_k$ is bounded by the second sum in \eqref{eq:qc} against the dual basis. Equation \eqref{eq:intertwine} holds on the design family, whose finite combinations are a core for the closed loop because they are one for $\tilde{\cA}_k$ and $\cT$ intertwines the graphs; the graph estimate on that core, the codimension-one description of $D(\cA_{{\rm cl},k})$ inside the maximal domain, and the resulting identification of the boundary realization with the operator defined on the basis are the three steps of \cite[Theorem 4]{Kr26}, with \eqref{eq:p3} in place of the single-fluid wall conditions and \eqref{eq:p5}--\eqref{eq:p9} carried unchanged. Hence \eqref{eq:intertwine} holds on $D(\cA_{{\rm cl},k})=\cT^{-1}D(\tilde{\cA}_k)$, and analyticity transports from $\tilde{\cA}_k$. The decay is Proposition \ref{prop:target} at rate $Q$ transported by $\cT$.
\end{proof}

\section{Closed Loop in Physical Variables}\label{sec:phys}

\subsection{Uncontrolled wavenumbers}

\begin{lemma}[Tail splits into a slow interface mode and fast bulk modes]\label{lem:tail}
Let $\Sigma>0$. There are $K_0$, $c$, $C$ depending on $\mu_j,\rho_j,d,\Re,G,\Sigma$ such that for $|k|\ge K_0$:
\begin{enumerate}
\item[(i)] exactly one eigenvalue $\lambda_{\rm int}(k)$ of $\cA_k$ lies in $\{|\lambda|<ck^2/\Re\}$, it is simple, and every other eigenvalue satisfies $\Rea\lambda\le-ck^2/\Re$;
\item[(ii)] $\displaystyle \Rea\,\lambda_{\rm int}(k)=-\frac{\pi\Re\,\Sigma}{\mu_1+\mu_2}\,|k|+O(1)$.
\end{enumerate}
\end{lemma}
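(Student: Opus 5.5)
The plan is to eliminate the fluids and reduce the eigenvalue problem of $\cA_k$, in the region $\Rea\lambda>-ck^2/\Re$, to one scalar equation for $\eta$, a Schur complement onto the interface row. Set $\alpha=2\pi|k|$.

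\textbf{Step 1 (fluid block is strongly dissipative).} Let $\mathcal{F}_k(\lambda)$ be the fluid part of \eqref{eq:plant} with $\eta$ treated as a given forcing: the operator $\cS_k+\cB_k$ with the $\eta$-row removed, the Yih term of \eqref{eq:p6} and the capillary--gravity term of \eqref{eq:p8} moved to the right-hand side as data. By the strain-rate identity \eqref{eq:Vdot}--\eqref{eq:strain} with $q=0$, the dissipation $\frac{2}{\Re}\sum_j\mu_j\int_j(|\partial^2w+\alpha^2w|^2+4\alpha^2|\partial w|^2)$ is at least $c_1\alpha^2/\Re$ times the field energy in \eqref{eq:ip}. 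The transport and reaction terms of $\cB_k$ are of size $O(\alpha)$ relative to that energy, as in the proof of Proposition \ref{prop:ops}. The density pencil term in \eqref{eq:p8} is absorbed exactly as in Proposition \ref{prop:target}. Hence for $|k|\ge K_0$ the numerical range of the fluid block lies in $\{\Rea\lambda\le-c\alpha^2/\Re\}$. So $\lambda-\mathcal{F}_k$ is invertible for $\Rea\lambda>-c\alpha^2/\Re$, with resolvent bound $O\big(1/(\Rea\lambda+c\alpha^2/\Re)\big)$.

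\textbf{Step 2 (scalar dispersion relation).} For such $\lambda$, an eigenvector has fluid part $v=\eta\,\Psi(\lambda,k)$, where $\Psi(\cdot,k)$ is the fluid response to unit interface displacement. Substituting into the kinematic condition \eqref{eq:p9} gives
\[
\lambda+2\pi ikU_I=m(\lambda,k):=\Psi_2(d;\lambda,k).
\]
This settles the location claim in (i): eigenvalues in $\{\Rea\lambda>-c\alpha^2/\Re\}$ are exactly the roots of this equation, and any eigenvalue outside that half-plane already satisfies $\Rea\lambda\le-ck^2/\Re$.

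\textbf{Step 3 (leading order of $m$).} I would compute $m$ with the same Cauchy-data/exponential representation used in Theorem \ref{thm:gap}. For $|\lambda|\ll\alpha^2/\Re$, one has $\beta_j^2=\alpha^2+\Re\lambda/\nu_j=\alpha^2(1+o(1))$, so the layer solutions are combinations of $e^{\pm\alpha y}$ and $ye^{\pm\alpha y}$. Wall reflections are $O(e^{-\alpha\min(d,1-d)})$, so to leading order each layer is a Stokes half-space.

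\begin{itemize}
\item The classical two-half-space Stokes problem, driven by the normal-stress jump $\Re\,\alpha^4\Sigma\,\eta$ in \eqref{eq:p8}, gives $v(d)=-\frac{\Re\Sigma\alpha}{2(\mu_1+\mu_2)}\eta$.
\item The corrections must be shown to be $O(1)$. They are: the gravity term $O(\Re G/\alpha)$; the Yih slope jump in \eqref{eq:p6}, which is $O(\alpha)$ in slope data and hence $O(1)$ in $v(d)$; the transport and pencil terms, which are $O(1)$ since $|\lambda|=O(\alpha)$ on the root; and the exponentially small wall terms.
\end{itemize}

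This yields $m(\lambda,k)=-\frac{\pi\Re\Sigma}{\mu_1+\mu_2}|k|+O(1)$, uniformly for $|\lambda|\le c\alpha^2/\Re$. I would also aim for a crude bound $|m|=O(\alpha)$ on all of $\{\Rea\lambda>-c\alpha^2/\Re\}$, from the resolvent bound of Step 1 applied to the $O(\alpha^2)$ interface data.

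\textbf{Step 4 (count the root).} The crude bound gives $|\lambda|\le C\alpha$ at every root, so every root lies in a disk of radius $C\alpha$. That disk is contained in $\{|\lambda|<ck^2/\Re\}$ for $|k|\ge K_0$. On a disk of radius $C'$ around $\lambda_*:=-2\pi ikU_I-\frac{\pi\Re\Sigma}{\mu_1+\mu_2}|k|$, Rouché's theorem applied to $\lambda-\lambda_*$ versus $\lambda+2\pi ikU_I-m(\lambda,k)$, with $\partial_\lambda m$ small by Cauchy estimates, gives exactly one simple root. On the complement of that small disk the two functions cannot agree. Simplicity of the root as an eigenvalue of $\cA_k$ follows from simplicity of the scalar root together with the invertibility of the fluid block. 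This proves (i) and (ii).

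\textbf{Main obstacle.} I expect the difficulty to be Step 3 done uniformly. The asymptotics of $m$ have to hold on the whole region $\Rea\lambda>-c\alpha^2/\Re$, not only near the root. The density pencil term in \eqref{eq:p8} must be kept under control, since it is of the same order as viscous stresses when $|\lambda|\sim\alpha^2/\Re$. The Yih contribution must also be shown to enter only at $O(1)$. I would first try to handle both by writing the exact $4\times4$ transmission determinant, as in \eqref{eq:det}, and expanding it in $\lambda\Re/(\nu_j\alpha^2)$.
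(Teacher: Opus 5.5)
Your argument is correct. For (ii) it is essentially the paper's half-space computation, but for (i) it takes a different route.

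\textbf{How the paper argues.} It proves (i) by perturbation. It writes $\cA_k=\cS_k+\cB_k$, where $\cS_k$ is self-adjoint and the simple eigenvalue $0$ of the decoupled $\eta$-row is separated by $c_1k^2/\Re$ from the field spectrum. It bounds $\|\cB_k\|=O(|k|^{3/2})$: the capillary coupling after a boundary lift, and $v_2(d)\mapsto\eta$ measured against the weight $w_k^{1/2}\asymp k^2$. It then runs the Riesz-projection argument of Proposition \ref{prop:riesz} on $|\lambda|=c_1k^2/(2\Re)$.

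For (ii) the paper does your two-half-space Stokes computation, but in the variable $s=(\lambda+2\pi ikU_I)/\alpha$. It takes the Schur complement onto the normal-stress condition \eqref{eq:p8} after substituting $\eta=\varepsilon V/s$. This gives $D_0(s)=-2(\mu_1+\mu_2)-\Re\Sigma/s$, and Rouch\'e on a disk of radius $O(1/\alpha)$ about $s_0$. That this root is $\lambda_{\rm int}$ is imported from (i).

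\textbf{What your route buys and costs.} Your Schur complement onto the kinematic row \eqref{eq:p9} gives a dispersion function $\lambda+2\pi ikU_I-m(\lambda,k)$. It is analytic on the whole half-plane where the fluid block is invertible, with no singularity at $s=0$. So one function delivers the count, the simplicity and the location, and you never have to treat the capillary boundary coupling as a bounded operator. The price is the a priori bound on $m$ and a Frobenius--Schur factorization for algebraic simplicity. The paper's route instead reuses existing machinery and produces the projection estimates that Lemma \ref{lem:sector} needs.

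\textbf{Two statements to tighten.} Neither breaks the argument.
\begin{itemize}
\item The $O(1)$ remainder in Step 3 is not uniform on $|\lambda|\le c\alpha^2/\Re$. The inertial correction has relative size $\Re|\lambda+2\pi ikU_I|/(\nu_j\alpha^2)$, which is $O(c)$ there, so the error is as large as $m$ itself. It is uniform on $\{|\lambda+2\pi ikU_I|\le C\alpha\}$. That is exactly where your crude bound puts the roots, so Step 4 goes through once restricted to that region.
\item The crude bound $|m|\le C\alpha$ does not follow from the half-plane estimate $O\big(1/(\Rea\lambda+c\alpha^2/\Re)\big)$. The Yih slope jump is an essential condition, so its lift creates a forcing proportional to $\lambda$, and that estimate then grows with $|\mathrm{Im}\,\lambda|$. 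Use instead the sectorial bound $\|(\lambda-\mathcal{F}_k)^{-1}\|\lesssim(|\lambda|+\alpha^2/\Re)^{-1}$. It holds because the imaginary part of the fluid form is only $O(\alpha)$ relative to the energy. Take it on $\Rea\lambda\ge-c\alpha^2/(2\Re)$ (that is, shrink $c$) so that it does not degenerate at the edge of the half-plane.
\end{itemize}
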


\begin{proof}
(i) In the decomposition of Proposition \ref{prop:ops}, $\sigma(\cS_k)$ consists of the field eigenvalues, all $\le-c_1k^2/\Re$ by Theorem \ref{thm:gap} and the Poincar\'e inequality on each slab, and the simple eigenvalue $0$ of the decoupled $\eta$-row. The norm of $\cB_k$ is $O(|k|^{3/2})$: the transport term is $O(|k|)$, the reaction term $O(1)$, the coupling of $\eta$ into the fields through \eqref{eq:p8} is $O(|k|^{3/2})$ after the boundary lift, and the coupling $v_2(d)\mapsto\eta$ of \eqref{eq:p9} is $O(|k|^{3/2})$ by $|v_2(d)|\le\|v_2\|^{1/2}\|v_2\|_{H^1}^{1/2}\le C\|x\|_k/|k|^{1/2}$ against the weight $w_k^{1/2}\asymp k^2\Sigma^{1/2}$. Hence for $|k|\ge K_0$ the gap $c_1k^2/\Re$ between $0$ and the field spectrum exceeds $4\|\cB_k\|$, and the argument of Proposition \ref{prop:riesz} on the circle $\{|\lambda|=c_1k^2/(2\Re)\}$ gives exactly one eigenvalue inside, simple, with the rest within $\|\cB_k\|$ of the field spectrum.

(ii) Stretch. Let $\varepsilon=\alpha^{-1}$, $Y=\alpha(y-d)$, $L=\partial_Y^2-1$, and parametrize the branch by
\begin{equation}\label{eq:stretch}
\lambda=-i\,\mathrm{sgn}(k)\,\alpha U_I+\alpha s,\qquad s\in\overline{B(s_0,r)},\qquad s_0:=-\frac{\Re\Sigma}{2(\mu_1+\mu_2)} .
\end{equation}
Dividing \eqref{eq:p1} by $\alpha^4$ and using $U_j(d+\varepsilon Y)-U_I=\varepsilon U_j'(d)Y+O(\varepsilon^2Y^2)$, the field equation on each half-line is
\begin{equation}\label{eq:stretchfield}
\frac{\mu_j}{\Re}L^2V_j-\varepsilon\rho_j\,sLV_j+O(\varepsilon^2)=0,
\end{equation}
a regular perturbation of $L^2V_j=0$ in the exponentially weighted norm, uniformly for $s$ in the disk; the base-shear terms enter at $O(\varepsilon^2)$ because the profile is smooth and the modes decay.

Leading problem. The decaying solutions of $L^2V=0$ are $V_1=(a+b_1Y)e^Y$ on $Y<0$ and $V_2=(a+b_2Y)e^{-Y}$ on $Y>0$, $V:=a$. The kinematic condition \eqref{eq:p9} gives $\eta=\varepsilon V/s$, so the Yih term in \eqref{eq:p6} is $O(\varepsilon)$ and the slope condition reads $b_2-b_1=2V$ at leading order; \eqref{eq:p7} divided by $\alpha^2$ reads $\mu_1b_1+\mu_2b_2=(\mu_2-\mu_1)V$. Hence $b_1=-V$, $b_2=V$, and $V_1'(0)=V_2'(0)=0$.

Determinant. Write $Z_j=(V_j,V_j',LV_j,(LV_j)')^{\mathrm T}$, so that \eqref{eq:stretchfield} is $Z_j'=M_0Z_j+\varepsilon M_{1,j}(Y,s,\varepsilon)Z_j$ with $M_{1,j}$ bounded in the weighted norm, uniformly on the disk. Variation of constants gives bases of the decaying subspaces with $Z_j=Z_j^0+O(\varepsilon)e^{-c|Y|}$, hence interface trace matrices $T_j=T_j^0+O(\varepsilon)$; replacing the half-lines by $(-\alpha d,0)$ and $(0,\alpha(1-d))$ with the exact wall conditions changes them by $O(e^{-c\alpha})$, $c=\min\{d,1-d\}$. Assembling the interface conditions, eliminating $\eta=\varepsilon V/s$ and the four field amplitudes through the first three of them --- whose $4\times4$ matrix is nonsingular at $s_0$, hence uniformly invertible on the disk for large $\alpha$ --- leaves the scalar Schur complement of \eqref{eq:p8} divided by $\alpha^3$,
\begin{equation}\label{eq:Dalpha}
D_\alpha(s)=D_0(s)+O(\alpha^{-1}),\qquad D_0(s)=-2(\mu_1+\mu_2)-\frac{\Re\,\Sigma}{s},
\end{equation}
uniformly on $\overline{B(s_0,r)}$: the capillary term of \eqref{eq:p8} contributes $\Re\Sigma\,V/s$, the pencil term $O(\varepsilon^2)$ since $V_2'(0)=0$, and the gravity, $U_1'(d)$, and $8\pi^2k^2$ terms $O(\varepsilon)$ or smaller.

Rouch\'e. $D_0$ has the single zero $s_0$ in the disk, with $D_0'(s_0)=\Re\Sigma/s_0^2\ne0$. On $|s-s_0|=R/\alpha$ one has $|D_0(s)|\ge|D_0'(s_0)|R/(2\alpha)$ for large $\alpha$; choosing $R$ with $|D_0'(s_0)|R/2$ exceeding the constant in \eqref{eq:Dalpha}, Rouch\'e gives exactly one zero $s_\alpha$ with $|s_\alpha-s_0|\le R/\alpha$. By (i) that zero is $\lambda_{\rm int}$, so
\begin{equation}\label{eq:branch2}
\lambda_{\rm int}(k)=-2\pi ikU_I-\frac{\Re\,\Sigma}{2(\mu_1+\mu_2)}\,\alpha+O(1),
\end{equation}
which is (ii) with $\alpha=2\pi|k|$. With $\Sigma=0$ the elimination $\eta=\varepsilon V/s$ is singular at $s=0$ and $D_0$ has no zero, so this scaling determines nothing; the interfacial branch is then governed by gravity and by the shear coupling \cite{HB83} at a slower scale, which is not analyzed here.
\end{proof}

\begin{lemma}[Unactuated tail is uniformly exponentially stable]\label{lem:sector}
Let $\Sigma>0$ and $K\ge K_0$. The direct sum $\cA_\infty$ of the blocks $\cA_k$, $|k|\ge K$, generates an analytic semigroup on the $\ell^2$-sum of the $\cH_k$, and
\begin{equation}\label{eq:tailsg}
\|e^{t\cA_k}\|\le C\,e^{-\gamma|k|t},\qquad \|\cA_ke^{t\cA_k}\|\le\frac{C}{t}\,e^{-\gamma|k|t}\quad(t>0),\qquad \gamma=\frac{\pi\Re\,\Sigma}{2(\mu_1+\mu_2)},
\end{equation}
with $C$ independent of $k$.
\end{lemma}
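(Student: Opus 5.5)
We will prove the bound block by block for $|k|\ge K\ge K_0$, splitting each $\cA_k$ by the Riesz projection $P_{\rm int}(k)$ onto the simple interface eigenvalue of Lemma \ref{lem:tail}(i). The contour is the circle $\Gamma_k=\{|\lambda|=ck^2/(2\Re)\}$, and $P_{\rm int}(k)=\frac1{2\pi i}\oint_{\Gamma_k}(\lambda-\cA_k)^{-1}d\lambda$. The complementary part $I-P_{\rm int}(k)$ carries the bulk spectrum, which lies in $\{\Rea\lambda\le-ck^2/\Re\}$.

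\textbf{Uniformly bounded projection.} On $\Gamma_k$ we have $\mathrm{dist}(\lambda,\sigma(\cS_k))\ge ck^2/(2\Re)$, since $\sigma(\cS_k)$ is the eigenvalue $0$ of the $\eta$-row together with field eigenvalues $\le-c_1k^2/\Re$. Also $\|\cB_k\|=O(|k|^{3/2})$, and $\cS_k$ is self-adjoint. A Neumann series therefore gives $\|(\lambda-\cA_k)^{-1}\|\le 2\Re/(ck^2)$ on $\Gamma_k$. This yields $\|P_{\rm int}(k)\|$ bounded, and more precisely $\|P_{\rm int}(k)-P^0_\eta\|\le C|k|^{-1/2}$, where $P^0_\eta$ is the orthogonal projection onto the $\eta$-row. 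Hence $\|P_{\rm int}(k)\|\le C$ uniformly in $k$.

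\textbf{Interface part.} On the range of $P_{\rm int}(k)$ we have $e^{t\cA_k}P_{\rm int}=e^{\lambda_{\rm int}t}P_{\rm int}$. By Lemma \ref{lem:tail}(ii), $\Rea\lambda_{\rm int}\le-2\gamma|k|+C'$, which is at most $-\gamma|k|$ once $K_0$ is enlarged so that $\gamma K_0\ge C'$. Moreover $|\lambda_{\rm int}|\le C|k|$, so $\|\cA_ke^{t\cA_k}P_{\rm int}\|\le C|k|e^{-2\gamma|k|t+C't}\le (C/t)e^{-\gamma|k|t}$, because $\sup_{t>0}\,t|k|e^{-\gamma|k|t/2}$ is finite uniformly in $k$.

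\textbf{Bulk part.} This is the main obstacle: we need a sectorial resolvent bound for the restriction to $(I-P_{\rm int})\cH_k$ that is uniform in $k$. We will write $\cA_k=\cS_k+\cB_k$ and use the self-adjoint sector bound $\|(\lambda-\cS_k)^{-1}\|\le 1/\mathrm{dist}(\lambda,\sigma(\cS_k))$. Restricted to the complement of the $\eta$-row, the field part is bounded above by $-c_1k^2/\Re$. On the sector $\Sigma_k=\{\lambda\ne-ck^2/(2\Re):|\arg(\lambda+ck^2/(2\Re))|\le 3\pi/4\}$ minus the disk enclosed by $\Gamma_k$, the distance to the field spectrum is $\gtrsim k^2+|\lambda|$. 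This dominates $\|\cB_k\|=O(|k|^{3/2})$ once $K_0$ is large, so $\|(\lambda-\cA_k)^{-1}(I-P_{\rm int})\|\le C/(k^2+|\lambda|)$ there.

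A Dunford integral over the boundary of the shifted sector then gives $\|e^{t\cA_k}(I-P_{\rm int})\|\le Ce^{-ck^2t/(4\Re)}$ and $\|\cA_ke^{t\cA_k}(I-P_{\rm int})\|\le (C/t)e^{-ck^2t/(4\Re)}$. Both are dominated by the claimed rate, since $ck^2/(4\Re)\ge\gamma|k|$ for $|k|\ge K_0$.

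The delicate point is the Neumann argument inside the sector. The perturbation $\cB_k$ couples the $\eta$-row to the fields, so the resolvent near $0$ is excluded only by removing the disk. Since the circle $\Gamma_k$ separates the two spectral pieces, the estimate on the sector minus the disk is enough for the restricted resolvent. The boundary lift of Proposition \ref{prop:ops} (the $\ell$ change of variables) also has to be made with constants uniform in $k$; if needed, $\ell$ is scaled to a layer of width $1/|k|$, which keeps its norm $O(|k|^{1/2})$.

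\textbf{Assembling.} Adding the two parts gives \eqref{eq:tailsg} with $C$ independent of $k$. The direct sum $\cA_\infty$ has resolvent bounds uniform in $k$ on a common sector, so it generates an analytic semigroup on the $\ell^2$-sum of the $\cH_k$.
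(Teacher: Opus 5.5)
Your proof is correct, and the bulk estimate follows a different route from the paper's. The paper first conjugates $\cA_k$ by the Kato intertwiner $W_k$, built from $P_k-P^0_k=O(\alpha^{-1/2})$, to an exactly block-diagonal operator $\lambda_{\rm int}(k)\oplus A_{f,k}$. It then treats the fast block as a relatively small perturbation of the Stokes fast block $S_{f,k}$. Controlling the commutator $[\cS_k,W_k]$ in that step requires the additional graph-norm estimate $\|\cS_k(P_k-P^0_k)\|\le4M\alpha^{3/2}$.

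You skip the intertwiner. You bound the full resolvent $(\lambda-\cA_k)^{-1}$ by a Neumann series on the sector minus the disk, where $\mathrm{dist}(\lambda,\sigma(\cS_k))\gtrsim k^2+|\lambda|$ dominates $\|\cB_k\|=O(|k|^{3/2})$, and then multiply by $I-P_{\rm int}$. This needs only the perturbation bound and $\|P_{\rm int}\|\le C$, but one step should be stated rather than gestured at. The operator $(\lambda-\cA_k)^{-1}(I-P_{\rm int})$ is the resolvent of the part of $\cA_k$ in $(I-P_{\rm int})\cH_k$, whose spectrum excludes $\lambda_{\rm int}$, so it is analytic in the excised disk. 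The maximum modulus principle on $\Gamma_k$ then bounds it by $C/k^2$ there. That gives a sectorial bound $C/|\lambda-\omega_k|$, with $\omega_k=-ck^2/(2\Re)$, on the whole sector. The standard theorem then yields your two estimates with $k$-independent constants, the term $|\omega_k|e^{\omega_kt}$ being absorbed by halving the exponent.

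The paper's route buys a near-identity similarity to a block-diagonal operator, which makes the assembly immediate. Yours is shorter and more elementary.

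Two small points. For the $C/t$ bound on the interface part you need a margin such as $\Rea\lambda_{\rm int}\le-\tfrac32\gamma|k|$, that is $\gamma K_0\ge2C'$ rather than $\gamma K_0\ge C'$; the paper takes exactly this margin, and your supremum argument uses it implicitly. Your remark that the boundary lift of Proposition~\ref{prop:ops} must be uniform in $k$ addresses something the paper's proof passes over by quoting $\|\cB_k\|\le M\alpha^{3/2}$.
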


\subsection{Pressure and mean flow}

\begin{proposition}[Pressure]\label{prop:pressure}
Let $k\ne0$, $\beta=2\pi k$, $\alpha=|\beta|$, let $x=(v_1,v_2,\eta)$ lie in the domain of the maximal realization $\cA_{m,k}$ of \eqref{eq:plant} --- $v_1\in H^4(0,d)$, $v_2\in H^4(d,1)$, the interface conditions \eqref{eq:p5}--\eqref{eq:p9} and the unactuated-wall conditions \eqref{eq:p3} imposed, the two traces at $y=1$ free --- and write $W:=\cA_{m,k}x=(w_1,w_2,\omega)$. With $u_j=\frac{i}{\beta}\partial v_j$, define
\begin{equation}\label{eq:P1}
i\beta p_j:=-\rho_j\frac{i}{\beta}\partial w_j-i\beta\rho_jU_j\frac{i}{\beta}\partial v_j-\rho_jU_j'v_j+\frac{\mu_j}{\Re}\E\Big(\frac{i}{\beta}\partial v_j\Big).
\end{equation}
Then $(u_1,u_2,v_1,v_2,p_1,p_2,\eta)$ satisfies both momentum equations \eqref{eq:ns} and the interface stress conditions linearized at \eqref{eq:profile}. If moreover $v_2(1)=\partial v_2(1)=0$, there are $r$, $\alpha_0$, $C$ independent of $k$ with
\begin{equation}\label{eq:P3}
\sum_j\|p_j\|_{L^2}\le C(1+\alpha)^r\big(\|W\|_k+\|x\|_k\big),\qquad \alpha\ge\alpha_0 .
\end{equation}
For every $t>0$ and every datum in $X$, the pressure of the closed-loop solution is in $L^2$ of the channel, together with the mean-flow pressure-gradient multiplier $P'(t)$.
\end{proposition}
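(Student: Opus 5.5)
The plan is to verify the momentum equations directly from the construction \eqref{eq:P1}, then obtain the stress conditions by reversing the elimination that produced \eqref{eq:plant}, and finally bound the pressure by counting derivatives.

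\emph{Momentum equations.} The streamwise equation \eqref{eq:mx} holds by construction: with $\partial_t$ replaced by the generator, so that $\partial_tv_j$ becomes $w_j$ and $\partial_tu_j$ becomes $\frac{i}{\beta}\partial w_j$, \eqref{eq:P1} is exactly the linearized streamwise momentum balance solved for $i\beta p_j$. The linearized balance is
\begin{equation*}
\rho_j(\partial_tu_j+i\beta U_ju_j+U_j'v_j)=-i\beta p_j+\frac{\mu_j}{\Re}\E u_j .
\end{equation*}
For the normal equation \eqref{eq:my}, I would compute $\partial p_j$ from \eqref{eq:P1}. Substituting $\rho_j\E w_j$ from \eqref{eq:p1} should yield
\begin{equation*}
-\partial p_j=\rho_j(w_j+i\beta U_jv_j)-\frac{\mu_j}{\Re}\E v_j .
\end{equation*}
This is routine, since \eqref{eq:p1} is precisely $\partial$ of the streamwise equation minus $i\beta$ times the normal one, i.e.\ the curl. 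Continuity holds by $u_j=\frac{i}{\beta}\partial v_j$.

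\emph{Interface stress conditions.} The tangential stress balance linearized at $y=d$ is $\mu_1(\partial u_1+i\beta v_1)-\mu_2(\partial u_2+i\beta v_2)$ plus the base-shear jump term carried by $\eta$, which vanishes because $\mu_1U_1''=\mu_2U_2''$. Multiplying by $\beta$, this reduces to \eqref{eq:p7} after using \eqref{eq:p5}--\eqref{eq:p6}.

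The normal stress balance is the jump of $-p_j+2\frac{\mu_j}{\Re}\partial v_j$, with the hydrostatic correction $(\rho_2-\rho_1)G\eta$ and the capillary term $\Sigma\beta^2\eta$ from \eqref{eq:curv}. Inserting \eqref{eq:P1} for $p_j$ at $y=d$ and multiplying by $i\beta$, the jump of $i\beta p_j$ involves:
\begin{itemize}
\item the jump of $\rho_j\partial w_j$, which gives the pencil term $\dt{\partial v_2}$, since $\partial w_1(d)-\partial w_2(d)$ is the time derivative of \eqref{eq:p6}, handled with \eqref{eq:p9};
\item the transport terms $U_I\partial v_j$;
\item $\rho_jU_j'v_j$, which gives the $U_1'(d)/M$ term;
\item $\mu_j\E\partial v_j$, which gives $\partial\E v_j$.
\end{itemize}
Comparing with \eqref{eq:p8} term by term confirms the balance; it is this identity that \eqref{eq:p8} was derived from, so the check is bookkeeping.

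\emph{Estimate \eqref{eq:P3}.} $p_j$ is $\beta^{-2}$ times a combination of $\partial w_j$, $\beta\partial v_j$, $v_j$ and $\partial^3v_j$, $\beta^2\partial v_j$. The terms in $w$ and in lower derivatives of $v$ are bounded by $C(\|W\|_k+\|x\|_k)$. The only real work is $\|\partial^3v_j\|_{L^2}$.

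For this I would use elliptic regularity of the fourth-order problem $\frac{\mu_j}{\Re}\E^2v_j=\rho_j\E w_j+(\text{lower order})$. Because $W$ is only in $H^1$, $\E w_j$ lies in $H^{-1}$, which gives $v_j\in H^3$ with bounds polynomial in $\alpha$. The boundary conditions used are \eqref{eq:p3}, the homogeneous wall conditions $v_2(1)=\partial v_2(1)=0$, and the transmission conditions. The inhomogeneous terms in $\eta$, $v_2(d)$ and $\dt{\partial v_2}(d)$ are lifted by smooth profiles such as $\ell$ of Proposition \ref{prop:ops}, at polynomial cost in $\alpha$.

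Uniformity in $k$ follows by rescaling $y$ on each slab, or alternatively from the explicit Green function built from $e^{\pm\alpha y}$ and $ye^{\pm\alpha y}$. This yields the power $(1+\alpha)^r$; I expect the uniform-in-$k$ transmission regularity to be the main obstacle.

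\emph{Closed-loop statement.} For $0<|k|<K$, analyticity from Theorem \ref{thm:cl} places $x(t)$ in the domain of the generator with $\|\cA x(t)\|\le Ct^{-1}\|x(0)\|$. The domain condition there is $v_2(1)=V_c$ rather than $0$, which is handled by subtracting $e\,V_c$, a bounded finite-rank correction. For $|k|\ge K$, \eqref{eq:tailsg} gives $\|\cA_kx\|\le Ct^{-1}e^{-\gamma|k|t}\|x(0)\|$, and this exponential factor beats $(1+\alpha)^r$. Summing over $k$ gives the pressure in $L^2$.

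The mean-flow multiplier $P'(t)$ comes from the $k=0$ heat equation with the bounded feedback $F_0$, again analytic, so $P'(t)$ is finite for $t>0$.
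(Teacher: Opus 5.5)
Your verification of the momentum equations and of the two stress conditions follows the paper, and so does your closed-loop summability argument. The gap is in the estimate \eqref{eq:P3}.

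You list $\dt{\partial v_2}(d)$ among the interface data to be lifted at polynomial cost in $\alpha$. On the maximal domain that datum is $\partial w_2(d)$, a trace of a derivative of $W$, and the $H^1$-type norm $\|W\|_k$ does not control it. It cannot be bounded afterwards either. By \eqref{eq:p8} it equals, up to lower-order terms, the jump of $\frac{\mu_j}{\Re}\partial\E v_j(d)$, which are the very third-order traces you are estimating, and it sits at top order in that condition, so it cannot be absorbed. The same point is hidden in ``$\E w_j\in H^{-1}$''. That holds slab by slab against test functions vanishing at $y=d$, but the transmission problem is tested against functions that do not vanish there, and on those $\rho_j\E w_j$ also produces $[\rho_1\partial w_1-\rho_2\partial w_2]_{y=d}\,\bar\phi(d)$.

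The repair is to keep the pencil term inside the variational structure instead of lifting it. Test \eqref{eq:p1} against $\phi$ clamped at both walls and continuous with continuous slope at $y=d$. The interface term of $\frac{\mu_j}{\Re}\E^2v_j$ brings in, through \eqref{eq:p8}, the term $(\rho_1-\rho_2)\partial w_2(d)\bar\phi(d)$, which cancels the same term on the left exactly. The remainder $\rho_1\partial(w_1-w_2)(d)$ is a multiple of $\omega$, by \eqref{eq:p6} differentiated in time. What survives is the mass form $\sum_j\rho_j\int_j(\partial w_j\,\partial\bar\phi_j+\alpha^2w_j\bar\phi_j)$ plus data in $\eta$, $\omega$ and lower-order traces of $v$. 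This is bounded by $C\alpha(\|W\|_k+\|x\|_k)\|\phi\|_{H^1}$ modulo absorbable terms. That is what the paper's bound on $r$ ``with $\eta$ and $\omega$ appearing as data'' means, and it is the remark in Proposition \ref{prop:ops} that the pencil term is the boundary term of the mass form.

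The uniformity in $k$, which you leave as the main obstacle, also needs a step you do not supply: nondegeneracy of the limiting problems. With $Y=\alpha y$ the walls recede to distances $\alpha d$ and $\alpha(1-d)$ from the interface. The problem for $(\partial_Y^2-1)^2V=F$ then splits, up to $O(e^{-c\alpha})$ coupling with $c=\min\{d,1-d\}$, into two clamped half-line problems and one interface problem. The clamped conditions kill the decaying solutions $(a+bY)e^{-Y}$. The four transmission conditions on $V_1=(a_1+b_1Y)e^{Y}$ and $V_2=(a_2+b_2Y)e^{-Y}$ form a block-triangular system with determinant of modulus $16\mu_1\mu_2\ne0$. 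This Lopatinskii--Shapiro check is what makes the scaled solution operator uniformly bounded and yields the factor $(1+\alpha)^r$. An explicit Green function built from $e^{\pm\alpha y}$ and $ye^{\pm\alpha y}$ would have to verify the same determinant.

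Two smaller points:
\begin{itemize}
\item In the band $0<|k|<K$, subtracting $eV_c$ still leaves the tangential trace $\partial v_2(1)=-2\pi ikF^U_kx$, because $\ell'(1)=0$. You need a second lift, or the paper's fixed-$k$ elliptic regularity with the bounded feedback traces, which also covers $\alpha<\alpha_0$.
\item Passing from $\|x_k(0)\|_k$ to the $X$-norm by \eqref{eq:A1} costs one more factor $2\pi|k|$. This is harmless against $e^{-\gamma|k|t}$.
\end{itemize}
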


\begin{lemma}[Mean mode]\label{lem:mean}
At $k=0$ the streamwise mean obeys
\begin{subequations}\label{eq:meanmode}
\begin{align}
\rho_j\partial_tu_j&=-P'(t)+\frac{\mu_j}{\Re}\partial^2u_j\ \ \text{on each layer},\qquad u_1(0)=0,\quad u_2(1)=U_c(0,t), \label{eq:mean}\\
0&=[u_1-u_2]_{y=d}=[\mu_1\partial u_1-\mu_2\partial u_2]_{y=d}, \label{eq:mean2}
\end{align}
\end{subequations}
with $P'(t)$ the multiplier of $\int_0^1u\dd y=0$. Its operator $\cA_0$ on $\{u:\int_0^1u=0\}$ with the inner product $\sum_j\rho_j\langle u_j,w_j\rangle$ is self-adjoint, negative, with compact resolvent and eigenvalues $\sigma_1>\sigma_2>\dots\to-\infty$, $\sigma_1\le-\gamma_0:=-\pi^2\nu_{\min}\rho_{\min}/(\rho_{\max}\Re)$, $\nu_{\min}=\min_j\nu_j$, $\rho_{\min}=\min_j\rho_j$, $\rho_{\max}=\max_j\rho_j$. An eigenvalue $\sigma$ is controllable from $U_c(0,\cdot)$ if and only if no eigenfunction at $\sigma$ has $\partial u_2(1)=0$. For every $q<q_{\rm mean}$, with $\phi_1,\dots,\phi_N$ the eigenfunctions of $\cA_0$ for the eigenvalues $\sigma_j>-q$, normalized in $\sum_j\rho_j\langle\cdot,\cdot\rangle$, the feedback
\begin{equation}\label{eq:F0}
U_c(0,t)=F_0u:=\sum_{j=1}^{N}c_j\sum_i\rho_i\langle u_i,\phi_{j,i}\rangle,
\end{equation}
with $c_1,\dots,c_N$ the solution of the placement system that assigns the eigenvalues of $\cA_0+B_0c^{\mathrm T}$ on ${\rm span}\{\phi_j\}$ to prescribed values below $-q$, makes the closed loop decay in $\sum_j\rho_j\|u_j\|^2$ at rate $q$.
\end{lemma}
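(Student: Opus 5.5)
The proof proceeds in four steps: derive \eqref{eq:meanmode} from the $x$-average, establish the spectral properties of $\cA_0$ by an energy identity, reduce controllability to a Hautus test on wall traces, and finally handle the feedback by a triangular (low/high) decomposition.

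\emph{Reduction and spectral properties.} Averaging \eqref{eq:ns}--\eqref{eq:ifc} linearized at \eqref{eq:base} over $\mathbb{T}_L$ isolates the $k=0$ component. Continuity \eqref{eq:cont} gives $\partial_y v_j(0)=0$, and with \eqref{eq:walls} and the transmission of $v$ this forces $v\equiv0$ at $k=0$. The streamwise momentum equation \eqref{eq:mx} then reduces to \eqref{eq:mean}, with the pressure gradient being the multiplier $P'(t)$ of the held flux. The tangential component of the stress balance in \eqref{eq:interface}, transferred to $y=d$, gives the flux continuity in \eqref{eq:mean2}.

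For $u,w$ in the domain of $\cA_0$, pairing with the weight $\rho_j$ cancels the $1/\rho_j$. The multiplier term contributes $P'\int_0^1\bar w\dd y=0$ by the zero-mean constraint. Integrating by parts on each layer, the interface terms cancel by \eqref{eq:mean2} and the wall terms vanish, leaving
\[
\langle\cA_0u,w\rangle=-\sum_j\frac{\mu_j}{\Re}\int_j\partial u_j\,\overline{\partial w_j}\dd y .
\]
This identity is symmetric and nonpositive. Together with the compact embedding of $H^1$ into $L^2$, it yields self-adjointness with compact resolvent, by the Friedrichs extension of the closed form on $\{u\in H^1_0(0,1):\int_0^1u=0\}$.

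For the bound on $\sigma_1$, write $\mu_j=\nu_j\rho_j\ge\nu_{\min}\rho_{\min}$. Since $u$ is continuous with $u(0)=u(1)=0$, Poincar\'e on $(0,1)$ gives $\int_0^1|u'|^2\ge\pi^2\int_0^1|u|^2$. Combined with $\sum_j\rho_j\|u_j\|^2\le\rho_{\max}\int_0^1|u|^2$, this gives the stated $-\gamma_0$ bound.

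\emph{Controllability.} The input enters only through $u_2(1)=U_c(0,t)$. Green's identity with an eigenfunction $\phi$ yields the pairing $B_0^*\phi=-\frac{\mu_2}{\Re}\overline{\partial\phi_2(1)}$. The Fattorini--Hautus test therefore says that $\sigma$ is controllable if and only if no eigenfunction at $\sigma$ has $\partial\phi_2(1)=0$.

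Eigenspaces can be at most two-dimensional, because of the constraint multiplier: the Cauchy data at $y=0$ together with $P'$, minus the two conditions. If some eigenspace with $\sigma>-q$ were two-dimensional, a combination of its eigenfunctions would have vanishing wall slope, so $-\sigma\ge q_{\rm mean}>q$, a contradiction. Hence every $\sigma_j>-q$ is simple and controllable, with $b_j:=B_0^*\phi_j\ne0$.

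\emph{Feedback and decay.} Since each $\sigma_j>-q$ is simple with $b_j\ne0$, the finite single-input system on ${\rm span}\{\phi_1,\dots,\phi_N\}$ is controllable by the Kalman rank condition (diagonal matrix, distinct entries, nonzero input entries). A Vandermonde/partial-fraction placement system then gives $c_1,\dots,c_N$ assigning prescribed distinct values below $-q$. To avoid a secular factor $te^{-qt}$ at an eigenvalue exactly at $-q$, I would include every $\sigma_j\ge-q-\delta$ in the low block for some small $\delta>0$. The closed loop is then block triangular:
\begin{itemize}
\item the low coordinates evolve autonomously and decay faster than $e^{-(q+\delta')t}$;
\item the high coordinates satisfy $\dot u_n=\sigma_nu_n+b_nU_c(0,t)$ with $\sigma_n<-q-\delta$, driven by the exponentially decaying scalar input.
\end{itemize}

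\emph{Main obstacle.} The step I expect to be hardest is the rigorous estimate of the high-mode response to a boundary input, since $b_n$ grows like $n$ while $|\sigma_n|\asymp n^2$. I would handle it first by the lifting $u=\tilde u+U_c(0,t)\theta(y)$, where $\theta$ is smooth, satisfies all conditions of $\cA_0$ except $\theta(1)=1$, and has zero mean (corrected by a multiple of a mean-zero bump). This turns the problem into an interior forcing by $U_c$ and $\dot U_c$, both finite combinations of decaying exponentials. Variation of constants with the analytic semigroup $e^{t\cA_0}$ restricted to the high block, whose bound is $e^{-(q+\delta)t}$, then gives $\|u(t)\|\le Ce^{-qt}\|u(0)\|$.
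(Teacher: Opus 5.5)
Your proposal is correct and follows the paper's route: the weighted energy identity $-\sum_j\frac{\mu_j}{\Re}\int_j\partial u_j\,\overline{\partial w_j}\,dy$ with Poincar\'e on $(0,1)$, the Green-identity input functional $-\frac{\mu_2}{\Re}\overline{\partial w_2(1)}$ with Fattorini's criterion, and finite-dimensional placement on the span of the $\phi_j$ --- and you supply two steps the paper leaves implicit, namely that every $\sigma_j>-q$ is simple (a multiple eigenvalue always has an eigenfunction with $\partial u_2(1)=0$, which $q<q_{\rm mean}$ excludes), as single-input placement requires, and the block-triangular estimate of the uncontrolled modes. The enlargement to $\sigma_j\ge-q-\delta$ is unnecessary and changes the stated $F_0$: the placed values lie strictly below $-q$, so $U_c(0,t)$ decays strictly faster than $e^{-qt}$, an uncontrolled eigenvalue exactly at $-q$ produces no $te^{-qt}$ term, and your variation-of-constants argument already works for the feedback as stated.
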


\begin{proposition}[Mean-mode exception is nonempty]\label{prop:meanexc}
Let $\rho_1=\mu_1=1$, let $a_*\approx4.4934$ be the first root of $\tan a=a$ above $\pi$, and for $a\in(a_*,3\pi/2)$ set
\begin{subequations}\label{eq:fam}
\begin{align}
\delta&=\frac{1-\sec a}{2},\qquad r=\frac{\delta\pi}{\tan a-a},\qquad \rho_2=\frac1\delta,\qquad \mu_2=\frac{1}{\delta r^2}, \label{eq:family}\\
d&=\frac{ra}{\pi+ra},\qquad \beta_1=a+\frac{\pi}{r},\qquad \beta_2=r\beta_1 . \label{eq:family2}
\end{align}
\end{subequations}
Then $\sigma=-\beta_1^2/\Re$ is an eigenvalue of the mean-flow operator of Lemma \ref{lem:mean} whose eigenfunction has $\partial u_2(1)=0$, so $\sigma$ is not controllable from $U_c(0,\cdot)$; and for every $q>0$ there is $\Re$ with $\sigma>-q$.
\end{proposition}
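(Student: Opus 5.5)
The plan is to construct the eigenfunction explicitly and verify every condition of Lemma \ref{lem:mean} by direct substitution. The parameters \eqref{eq:fam} appear to be chosen exactly so that the phases at the interface become $\beta_1d=a$ and $\beta_2(1-d)=\pi$. I would check these two identities first. From \eqref{eq:family2},
\[
\beta_1d=\Big(a+\frac{\pi}{r}\Big)\frac{ra}{\pi+ra}=a,
\qquad
\beta_2(1-d)=r\beta_1\,\frac{\pi}{\pi+ra}=\pi .
\]
Next I would record the sign conditions that make the data admissible. On $(a_*,3\pi/2)$ we have $\cos a<0$, so $\delta>\tfrac12>0$, and $\tan a>a$, so $r>0$. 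Hence $\rho_2,\mu_2>0$ and $d\in(0,1)$. I would also note that $\rho_2/\mu_2=r^2$, so the two layers have wavenumbers $\beta_1$ and $r\beta_1=\beta_2$ at $\sigma=-\beta_1^2/\Re$.

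The eigenvalue equation $\rho_j\sigma u_j=-P'+\frac{\mu_j}{\Re}\partial^2u_j$ at $\sigma=-\beta_1^2/\Re$ reads $u_j''+\frac{\rho_j}{\mu_j}\beta_1^2u_j=\Re P'/\mu_j$. I would set $C:=\Re P'/\beta_1^2$ and use the ansatz
\[
u_1=C(1-\cos\beta_1y)+A_1\sin\beta_1y,\qquad
u_2=\delta C+A_2\cos\beta_2(1-y)+B_2\sin\beta_2(1-y).
\]
The constant in layer 2 is $\Re P'/(\mu_2\beta_2^2)=\delta C$. The condition $u_1(0)=0$ is built in. Imposing $\partial u_2(1)=0$, the property to be exhibited, forces $B_2=0$, and $u_2(1)=0$ then gives $A_2=-\delta C$. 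Because $\sin\beta_2(1-d)=0$, the slope of $u_2$ vanishes at $d$, so stress continuity reduces to $u_1'(d)=0$. This gives $\sin a+(A_1/C)\cos a=0$, that is, $A_1=-C\tan a$.

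Two conditions remain, continuity and zero mean, and they must both hold for the single free amplitude $C$; this is the one place where the specific family matters.
\begin{itemize}
\item \emph{Continuity at $d$.} Using $\cos\beta_2(1-d)=-1$, we get $u_2(d)=2\delta C$, while $u_1(d)=C\big(1-\cos a-\sin^2a/\cos a\big)=C(1-\sec a)$. These agree because $\delta=(1-\sec a)/2$.
\item \emph{Zero mean.} One computes $\int_0^du_1=C(d-\tan a/\beta_1)$ and $\int_d^1u_2=\delta C(1-d)$, since $\sin\pi=0$. Substituting $d=a/\beta_1$ and $1-d=\pi/(r\beta_1)$, the total is $C\big(a+\delta\pi/r-\tan a\big)/\beta_1$. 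This vanishes exactly when $r=\delta\pi/(\tan a-a)$, which is the definition of $r$ in \eqref{eq:family}.
\end{itemize}

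With $C\ne0$ the function is nonzero and lies in the domain of $\cA_0$, with multiplier $P'=\beta_1^2C/\Re$. So $\sigma$ is an eigenvalue and its eigenfunction has $\partial u_2(1)=0$. By the Hautus characterization in Lemma \ref{lem:mean} (equivalently, $B_0^*$ annihilates this eigenfunction), $\sigma$ is not controllable from $U_c(0,\cdot)$. Finally, none of $a,\delta,r,\rho_2,\mu_2,d,\beta_1$ depends on $\Re$. Hence for any $q>0$, taking $\Re>\beta_1^2/q$ gives $\sigma=-\beta_1^2/\Re>-q$.

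The only real obstacle is guessing the trigonometric ansatz that makes the overdetermined system consistent. Once the two phase identities $\beta_1d=a$ and $\beta_2(1-d)=\pi$ are in hand, every remaining step is a one-line substitution.
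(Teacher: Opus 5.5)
Your proposal is correct and follows the paper's proof: the same explicit eigenfunction (yours with free amplitude $C$, the paper's normalized by $P'=\beta_1^2/\Re$, i.e.\ $C=1$), the same phase identities $\beta_1d=a$ and $\beta_2(1-d)=\pi$, the same checks of the wall, transmission and zero-mean conditions, and the same use of the $\Re$-independence of $\beta_1$ for the last claim. The only difference is presentational: you derive the amplitudes by imposing $\partial u_2(1)=0$, stress continuity and continuity in turn, whereas the paper writes the eigenfunction down and verifies it.
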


\begin{remark}
The zero-mean constraint and the jump of $\mu_j/\rho_j$ put the mean mode outside the Volterra construction that gives the single-fluid one a closed-form kernel \cite[Lemma 15]{Kr26}, and by Proposition \ref{prop:meanexc} no construction passes $q_{\rm mean}$.
\end{remark}

For one fluid the event does not occur: with $\mu_1=\mu_2$, $\rho_1=\rho_2$, the wall and derivative conditions have the $2\times2$ minor $\cos\beta-1$, so $\beta\in2\pi\mathbb{Z}$, then $b=0$ and the flux gives $P'=0$. So $q_{\rm mean}<\infty$ occurs, and on this family $q_{\rm mean}\le\beta_1^2/\Re\to0$ as $\Re$ grows: the rate limit of Theorem \ref{thm:main} is a genuine two-fluid limitation, and it comes from a light thin low-viscosity layer at the actuated wall whose wall motion does not reach the mean profile of the fluid below.

\subsection{Proof of Theorem \ref{thm:main}}

\begin{proof}
Take $K=\max\{K_0,q/\gamma\}$ with $K_0$, $\gamma$ from Lemmas \ref{lem:tail}--\ref{lem:sector}, which is \eqref{eq:K} since $q/\gamma=2q(\mu_1+\mu_2)/(\pi\Re\Sigma)$. The controlled band is finite and for each of its wavenumbers the resonant shifts form a countable set, so their union cannot cover $[q,q+1]$ and a single $Q$ there is nonresonant for all of them.

(i) \emph{Well-posedness.} $X_F$ is the orthogonal sum of the mean space, the finitely many controlled hyperplanes $\cH^K_k$ and the tail spaces $\cH_k$, each with the norm $(2\pi|k|)^{-1}\|\cdot\|_k$. The mean block and the finitely many controlled blocks are analytic generators by Lemma \ref{lem:mean} and Theorem \ref{thm:cl}, and the tail is uniformly sectorial by Lemma \ref{lem:sector}, so their direct sum generates an analytic semigroup on $X_F$ and every datum in $X_F$ has a unique solution in $C([0,\infty);X)\cap C^1((0,\infty);X)$, with the pressure of Proposition \ref{prop:pressure} for $t>0$. The conjugate symmetry $F_{-k}\bar x=\overline{F_kx}$ follows from that of the plant and of the design, as in \cite[Lemma 6]{Kr26}, and makes the physical inputs real.

(ii) \emph{Decay.} For $k\ne0$ the modal and physical norms are related exactly by
\begin{equation}\label{eq:A1}
\|x(k)\|_{X,k}^2=\sum_j\rho_j\big(\|u_j\|^2+\|v_j\|^2\big)+\varpi^+_k|\eta(k)|^2=\frac{1}{4\pi^2k^2}\,\|x(k)\|_k^2,
\end{equation}
since $u_j=\frac{i}{2\pi k}\partial v_j$ and the interface weight in \eqref{eq:ip} is $4\pi^2k^2\varpi^+_k$. On the band, Theorem \ref{thm:cl} gives $\|x_k(t)\|_k\le C_ke^{-Qt}\|x_k(0)\|_k$ with $C_B:=\max_{0<|k|<K}C_k<\infty$; off the band, Lemma \ref{lem:sector} gives $\|x_k(t)\|_k\le C_\infty e^{-\gamma|k|t}\|x_k(0)\|_k\le C_\infty e^{-qt}\|x_k(0)\|_k$ because $K\ge q/\gamma$; at $k=0$, Lemma \ref{lem:mean} gives decay at rate $q$, its finitely many eigenvalues above $-q$ being controllable since $q<q_{\rm mean}$. With \eqref{eq:A1} and Parseval, \eqref{eq:decay} follows with $C=\max\{C_0,C_B,C_\infty\}$.

\end{proof}

\begin{figure}[t]
\centering
\includegraphics[width=\textwidth]{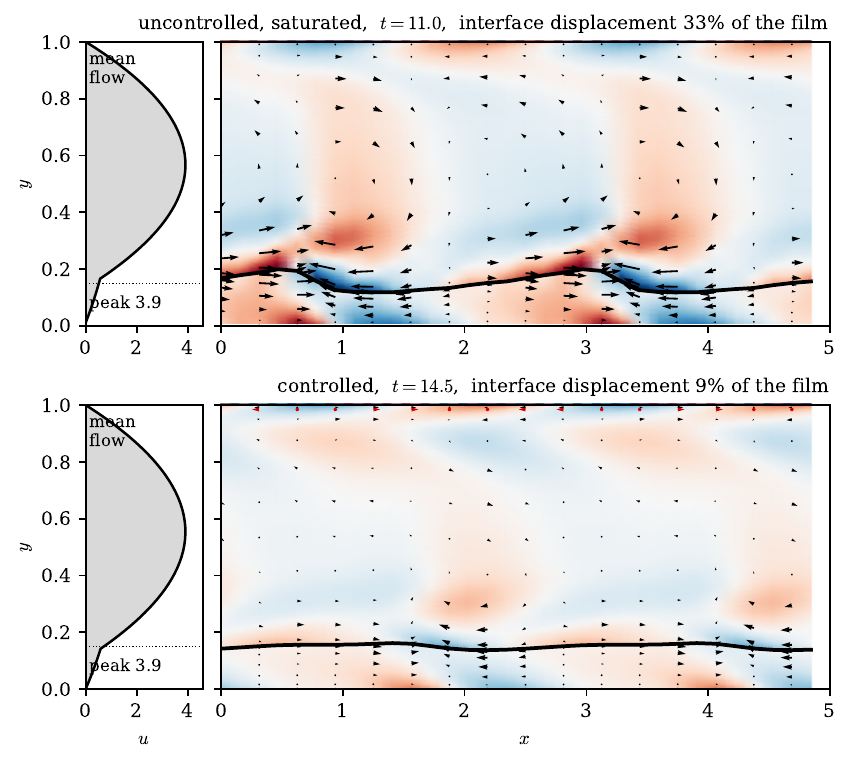}
\caption{Uncontrolled saturated state (top) and the controlled state (bottom). The instant shown in the bottom panel, $3.5$ time units after switch-on, is chosen so that the perturbation, at $9\%$ of the film thickness, is nearly extinguished but still legible; later instants show less. Left: the equilibrium profile $U_j$, peaking at $3.9$. Right: vorticity (color), velocity minus $U_j$ (arrows), and the interface (black); the wall velocity $U_c$ is drawn in red at the upper wall. Same color and arrow scales in both rows.}
\label{fig:pair}
\end{figure}

\begin{figure}[t]
\centering
\includegraphics[width=\textwidth]{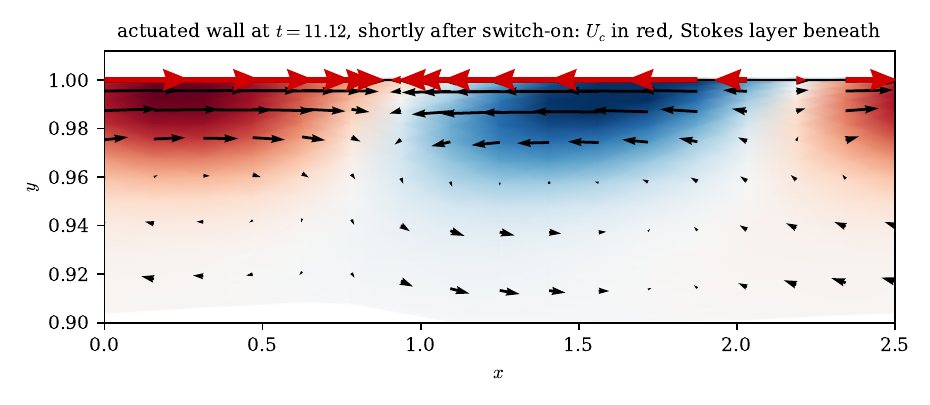}
\caption{The actuated wall, $0.9\le y\le1$, shortly after switch-on: the wall velocity $U_c$ (red) and the Stokes layer beneath it, through which the tangential actuation reaches the interface at $y\approx0.15$.}
\label{fig:wall}
\end{figure}

\section{Simulation}\label{sec:sim}

The nonlinear two-fluid system \eqref{eq:ns}--\eqref{eq:ifc} is simulated with $\Re=60$, $M=0.2$, $d=0.15$, $\Sigma=0.05$, matched densities, $G=0$, and period $L=5$: a thin film five times more viscous than the fluid above it, lying against the unactuated wall. Two wavenumbers are unstable, $k=1/5$ and $k=2/5$, with growth rates $0.26$ and $0.56$. Each layer is mapped to a fixed reference slab so that the interface is a coordinate line, with $32$ Fourier modes in $x$, $24$ and $64$ Chebyshev points in the two layers, the interface conditions \eqref{eq:ifc} imposed in exact form at every step by a Newton solve, and the time step $2\times10^{-4}$.

The controller is tangential only, built by the construction of Section \ref{sec:design} with $V_c\equiv0$ rather than by Theorem \ref{thm:main}, the assigned modes having nonzero wall coefficients at these parameters so that $K_V$ is not needed. It places the leading interfacial eigenvalue of $k\in\{1/5,2/5,3/5\}$ at real part $-1$.

Left alone, a perturbation of the flat interface of amplitude $10^{-3}$ grows at the linear rate, steepens, and saturates by $t\approx10.5$ into a travelling wave whose crests and troughs displace the interface by a third of the film thickness (Figure \ref{fig:pair}, top). The controller is switched on at $t=11$ from this saturated state, with no ramp. The interface amplitude falls to $9\%$ of the film thickness by $t=14.5$ (Figure \ref{fig:pair}, bottom) and to $5\%$ by $t=15.6$; the peak wall velocity is $1.3$ at switch-on, against a base flow peaking at $3.9$, and decreases with the interface to $0.05$. The residual at $5\%$ lies in the harmonics outside the controlled band, which decay at their own rate. Figure \ref{fig:wall} shows the actuated wall shortly after switch-on: the wall velocity and the Stokes layer through which it acts on the fluid.

\section{Conclusions}

The state assigned a decay rate here is the position of the interface, and the actuator reaches it only through the fluid that lies between them. Two constraints follow. The actuation must reach across that fluid, and the target may alter only some of the conditions holding at the interface. The second constraint is the familiar one for plants assembled from subsystems joined at an internal boundary, as in the sandwiched systems of \cite{WK20,WK21}: the transformation may delete the coupling that destabilizes, but not the junction itself. Section \ref{sec:target} carries out that division for an interface at which the junction is the transmission of stress between two fluids.

Interconnection usually makes a spectrum harder to work with. Here it does the opposite: the transmission conditions separate sequences that are dense in one another while the layers are uncoupled. Whether this occurs at other junctions and for other operators is not settled by one case.

The instability controlled here is not turbulence. It occurs at arbitrarily small Reynolds number and comes from a contrast of material properties rather than from inertia. Suppressing it is therefore a different problem from delaying transition: the interface is held flat and the bulk flow remains laminar.

Three questions are left open: whether the interface-position assumption can be removed; what governs the interfacial branch when there is no surface tension, where the capillary scaling of Lemma \ref{lem:tail} degenerates; and whether a design of this kind carries over to the diffuse-interface description \cite{Mun21}, in which the state is a field and the domain does not move.

\appendix

\section{Proofs of Supporting Results}

\subsection{Spectrum and wall coefficients}

\begin{proof}[Proof of Proposition \ref{prop:riesz}]
As in \cite[Prop.\ 4 and Lemma 5]{Kr26}: on $\Gamma_n=\{|\lambda-\lambda_n|=d_n/2\}$, $\|\cB_k(\lambda-\cS_k)^{-1}\|\le2\|\cB_k\|/d_n<\frac12$ by self-adjointness of $\cS_k$ and Theorem \ref{thm:gap}, the Neumann series gives the resolvent bound and the projection estimate, and $\sum_n\|P_n-P^0_n\|^2\le16\|\cB_k\|^2\sum_nd_n^{-2}<\infty$ by \eqref{eq:gap}; the Bari--Markus lemma of \cite{Kr26} then applies. The eigenvalue localization is the inclusion $\sigma(\cA_k)\subset\{z:\mathrm{dist}(z,\sigma(\cS_k))\le\|\cB_k\|\}$ read inside each disk.
\end{proof}

\begin{proof}[Proof of Lemma \ref{lem:wallplant}]
Let $\psi_n=(\zeta_1,\zeta_2,\chi)$ be the adjoint eigenvector at $\overline{\lambda'_n}$, normalized in $\cH_k$, and $\omega_j:=\E\zeta_j$, $b_j^2:=-\alpha^2-\Re\,\overline{\lambda'_n}\rho_j/\mu_j$, so $b_j\asymp n$ and $b_2/b_1\to\kappa$ by Proposition \ref{prop:riesz}.

Field equation. On each layer, \eqref{eq:a1} reads $\omega_j''+b_j^2\omega_j=g_j$ with $g_j:=-\frac{2\pi ik\rho_j\Re}{\mu_j}\big(U_j\omega_j+2U_j'\zeta_j'\big)$, since $\E(U\zeta)-U''\zeta=U\E\zeta+2U'\zeta'$. Variation of constants from the wall of layer $j$ gives $\omega_2=P_2\cos b_2(1-y)+Q_2\sin b_2(1-y)+r_2$ with $r_2(1)=r_2'(1)=0$ and $\|r_2\|_\infty\le\|g_2\|_{L^1}/b_2$, and likewise on $(0,d)$ from $y=0$; as $\zeta_j'=O(\|\omega_j\|/b_j)$ from $\zeta_j=\omega_j/(\beta_j^2-\alpha^2)+$ evanescent terms, $\|g_j\|\le C\|\omega_j\|$ and the remainders are $O(\|\omega\|/b)$ in the sup norm. The wall traces are exact: $\E\zeta_2(1)=P_2$, $\partial\E\zeta_2(1)=-b_2Q_2$.

Interface scalar. In \eqref{eq:a6} the left side is $w_k\Re(\overline{\lambda'_n}-2\pi ikU_I)\chi$ with $|\lambda'_n|\asymp n^2$, and the right side is bounded by $C(|\E\zeta_1(d)|+n^2|\zeta(d)|)$. In the normalization, $\|\partial\zeta_j\|\le1$ gives $|\zeta(d)|\le C/n$ and $|\E\zeta_1(d)|\le Cn$ by the representation above, so $|\chi|\le C/n$, $w_k|\chi|^2=O(n^{-2})$, and the term $-\Re w_k\chi$ in \eqref{eq:a5} is $O(1/n)$ against traces of size $n^2$: the interface scalar is lower order, and the normalization is carried by the fields, $\sum_j\rho_j\|\omega_j\|^2/b_j^2=1+O(n^{-1})$, so $\|\omega\|\asymp n$.

Amplitudes. With $\chi$ and the remainders lower order, the conditions \eqref{eq:a3}--\eqref{eq:a5} at $y=d$ are those of the proof of Lemma \ref{lem:doublet} up to relative $O(1/n)$: the $U_1'(d)\zeta(d)$ term of \eqref{eq:a5} is $O(1/n)$ against $\partial\E\zeta$ of size $n^2$, and the transport coefficient $\overline{\lambda'_n}-2\pi ikU_I$ differs from $\overline{\lambda'_n}$ by $O(1)$. Hence $Q_1,Q_2=O(\|\omega\|/n)$ and, by the two-case argument of Lemma \ref{lem:doublet}, $|P_2|\ge c\|\omega\|$, so $|P_2|\asymp n$, while $|b_2Q_2|\le Cn$.

Conclusion. $|b_n|\ge2\pi|k|\frac{\mu_2}{\Re}|\E\zeta_2(1)|=2\pi|k|\frac{\mu_2}{\Re}|P_2|\ge cn$, and $|b_n|\le\frac{\mu_2}{\Re}(|b_2Q_2|+2\pi|k||P_2|)\le Cn$.
\end{proof}

\subsection{Pre-compensated plant and design family}

\begin{proof}[Proof of Proposition \ref{prop:prefed}]
Conjugation. For $x\in D(\cA_k)$, $\Xi_kx\in D(\cA'_k)$ and $\cA'_k\Xi_kx=\cA_kx+a_k\,K_Vx$, so
\begin{equation}\label{eq:conj}
\cA''_k:=\Xi_k^{-1}\cA'_k\Xi_k=\cA_k+\Theta_k,\qquad \Theta_k:=(Pa_k)\otimes K_V,
\end{equation}
of rank one and bounded, $\|\Theta_k\|\le\|Pa_k\|_k\|K_V\|$, a fortiori $\cA_k$-bounded with relative bound zero. Hence $\cA''_k$, and with it $\cA'_k$, generates an analytic semigroup with compact resolvent.

Spectrum and basis. $\cA''_k=\cS_k+(\cB_k+\Theta_k)$ with the bracket bounded, so the argument of Proposition \ref{prop:riesz} applies with $C_k:=\|\cB_k\|+\|\Theta_k\|$ in place of $\|\cB_k\|$: on $\Gamma_n$, $\|(\cB_k+\Theta_k)(\cS_k-z)^{-1}\|\le2C_k/d_n<\frac12$ once $d_n>4C_k$, giving one simple eigenvalue $\lambda'_n$ per disk with $|\lambda'_n-\lambda_n|\le C_k$ and $\|P''_n-P^0_n\|\le4C_k/d_n$, square-summable by Theorem \ref{thm:gap}, hence a Riesz basis with parentheses of $\cH_k$; $\Xi_k$ carries it to $\cH^K_k$, with $\|P'_n-\Xi_kP^0_n\Xi_k^{-1}\|\le\|\Xi_k\|\|\Xi_k^{-1}\|\,4C_k/d_n$. The same holds for the adjoint.

Wall coefficients. Since $\Theta_k$ is bounded, $D\big((\cA''_k)^*\big)=D(\cA_k^*)$ and $(\cA''_k)^*=\cA_k^*+\Theta_k^*$: the adjoint conditions are those of Lemma \ref{lem:adjoint} unchanged, in particular $\zeta_2(1)=\partial\zeta_2(1)=0$, and since $\ell'(1)=0$ the tangential pairing is \eqref{eq:pairings} unchanged, $b'_n=-2\pi ik\frac{\mu_2}{\Re}\overline{\E\zeta''_{2,n}(1)}$. The normalized adjoint eigenvector satisfies $(\cA_k^*-\overline{\lambda'_n})\psi''_n=-\Theta_k^*\psi''_n$, and $\Theta_k^*\psi=k_K\langle\psi,Pa_k\rangle_k$ with $k_K$ the representative of $K_V$, which lies in the fixed finite-dimensional range of $P^*_{\rm low}$ and is smooth; so $\|\Theta_k^*\psi''_n\|_k\le\|\Theta_k\|$ and the field equation of Lemma \ref{lem:wallplant} acquires a forcing $r_{j,n}$ from a fixed finite-dimensional family with $\|r_{j,n}\|_{L^1}+\|r_{j,n}\|_{L^2}\le C$. The variation-of-constants remainder then gains $C/n$, the $2\times2$ system for $Q_1,Q_2$ an $O(1/n)$ inhomogeneity, and the two leading interface relations $o(\|\omega_n\|)$ and $o(n\|\omega_n\|)$ respectively. Since $\|P''_n-P^0_n\|=O(1/n)$ keeps the normalization in the fields, $\|\omega_n\|\asymp n$, and the two-case argument of Lemma \ref{lem:doublet} gives $|P_{2,n}|\ge c\|\omega_n\|$, hence $|P_{2,n}|\asymp n$. Launching the variation of constants at $y=1$ makes the remainder and its derivative vanish there, so $\E\zeta''_{2,n}(1)=P_{2,n}$ exactly and $|b'_n|\asymp n$.
\end{proof}

\begin{proof}[Proof of Lemma \ref{lem:qc}]
$|c_n|=|\tilde\lambda_n-\lambda'_n|/|b'_n|\le C/n$ for $n\ge n_1$ by Proposition \ref{prop:prefed}, so the second sum converges. For $m\ne n$, $|\langle X_n,\psi'_m\rangle_k|=|b'_m||c_n|/|\tilde\lambda_n-\lambda'_m|\le C(m/n)\Re/|m^2-n^2|$ by Theorem \ref{thm:gap}, Proposition \ref{prop:prefed}, and the nonresonance of $Q$; the low components are $O(n^{-2})$ by the same bound. Hence
\begin{equation}\label{eq:tails}
\sum_{m\ne n}|\langle X_n,\psi'_m\rangle_k|^2\le C\Big(\sum_{m\le n/2}\frac{m^2}{n^6}+\frac{1}{n^2}\sum_{j\ge1}\frac{1}{j^2}+\sum_{m\ge2n}\frac{1}{n^2m^2}\Big)\le\frac{C'}{n^2},
\end{equation}
where the middle sum runs over $m=n\pm j$, and since the parentheses are those of a Riesz basis, $\|X_n-\phi'_n\|_k^2\le C''n^{-2}$; the first sum converges.
\end{proof}

\begin{proof}[Proof of Corollary \ref{cor:basis}]
$J$ is Fredholm of index zero, so its range is closed; Lemma \ref{lem:complete} makes it dense, hence onto, and index zero gives $\ker J=\{0\}$.
\end{proof}

\subsection{Tail, pressure, and mean flow}

\begin{proof}[Proof of Lemma \ref{lem:sector}]
Notation. $P^0_k$ is the Stokes projection onto the $\eta$-mode, $Q^0_k=I-P^0_k$, and $\cS_k=0\oplus S_{f,k}$ with $S_{f,k}\le-\delta_\alpha$, $\delta_\alpha:=a_0\alpha^2/\Re$, $a_0$ independent of $k$ by Theorem \ref{thm:gap} and the Poincar\'e inequality; $\|\cB_k\|\le M\alpha^{3/2}$ by the proof of Lemma \ref{lem:tail}(i).

Projections. Enlarge $K_0$ so that $\delta_\alpha\ge4\gamma|k|$ for $|k|\ge K_0$, possible since $\delta_\alpha=a_0\alpha^2/\Re$ and $\gamma|k|$ is linear. For $\alpha\ge(4M\Re/a_0)^2$ one has $2\|\cB_k\|/\delta_\alpha\le\frac12$, so on $\Gamma_\alpha=\{|z|=\delta_\alpha/2\}$, $\|(z-\cS_k)^{-1}\|\le2/\delta_\alpha$ and $\|(z-\cA_k)^{-1}\|\le4/\delta_\alpha$, whence with $P_k$ the Riesz projection of $\lambda_{\rm int}(k)$,
\begin{equation}\label{eq:projest}
P_k-P^0_k=\frac{1}{2\pi i}\int_{\Gamma_\alpha}(z-\cS_k)^{-1}\cB_k(z-\cA_k)^{-1}dz,\qquad \|P_k-P^0_k\|\le\frac{4M\Re}{a_0\sqrt\alpha} .
\end{equation}
Since $\cS_k(z-\cS_k)^{-1}=z(z-\cS_k)^{-1}-I$ is bounded by $2$ on $\Gamma_\alpha$, the same integral gives
\begin{equation}\label{eq:graphest}
\big\|\cS_k\big(P_k-P^0_k\big)\big\|\le 4M\alpha^{3/2}.
\end{equation}

Straightening. By \eqref{eq:projest}, for large $\alpha$ the Kato intertwiner $W_k$ with $W_kP^0_k=P_kW_k$ exists, $\|W_k-I\|+\|W_k^{-1}-I\|\le C_1/\sqrt\alpha$, and, being a function of $P_k-P^0_k$, it satisfies $\|\cS_k(W_k-I)\|\le C_2\alpha^{3/2}$ by \eqref{eq:graphest}. Then $\widehat{\cA}_k:=W_k^{-1}\cA_kW_k$ commutes with $P^0_k$ and is block diagonal, $\widehat{\cA}_k=\lambda_{\rm int}(k)\oplus A_{f,k}$: the $O(\alpha^2)$ interaction that produces the capillary branch is absorbed into the scalar block by Lemma \ref{lem:tail}(ii) and is not estimated again.

Fast block. For $\Rea z\ge-\delta_\alpha/2$, $\|(z-S_{f,k})^{-1}\|\le2/\delta_\alpha$ and $\|S_{f,k}(z-S_{f,k})^{-1}\|\le C$. From $\widehat{\cA}_k-\cS_k=W_k^{-1}[\cS_k,W_k]+W_k^{-1}\cB_kW_k$ and the three estimates above,
\begin{equation}\label{eq:fastpert}
\big\|(A_{f,k}-S_{f,k})(z-S_{f,k})^{-1}\big\|\le C\Big(\|W_k-I\|+\frac{\alpha^{3/2}}{\delta_\alpha}\Big)\le\frac{C_*}{\sqrt\alpha},\qquad C_*=C\Big(C_1+\frac{M\Re}{a_0}\Big),
\end{equation}
so for $\alpha\ge4C_*^2$ the Neumann series gives $\|(z-A_{f,k})^{-1}\|\le2\|(z-S_{f,k})^{-1}\|$ there, and the same on the sector $|\arg(z+\delta_\alpha/2)|\le\pi-\theta$, with constants independent of $k$. Hence $\|e^{tA_{f,k}}\|\le Ce^{-\delta_\alpha t/2}$ and $\|A_{f,k}e^{tA_{f,k}}\|\le Ct^{-1}e^{-\delta_\alpha t/4}$.

Slow block and assembly. By Lemma \ref{lem:tail}(ii) the rate is $\Rea\lambda_{\rm int}(k)=-2\gamma|k|+O(1)$, so after enlarging $K_0$, $\Rea\lambda_{\rm int}(k)\le-\tfrac32\gamma|k|$, and with $|\lambda_{\rm int}(k)|\le C\alpha$, $|\lambda_{\rm int}|e^{t\Rea\lambda_{\rm int}}\le C|k|e^{-3\gamma|k|t/2}\le Ct^{-1}e^{-\gamma|k|t}$. Since $\delta_\alpha/4\ge\gamma|k|$ and $W_k$, $W_k^{-1}$ are uniformly bounded, transporting the two blocks back gives \eqref{eq:tailsg}. The direct sum inherits the sector estimate with the same constants.
\end{proof}

\begin{proof}[Proof of Proposition \ref{prop:pressure}]
Equations. \eqref{eq:P1} is the Fourier transform of \eqref{eq:mx} linearized at \eqref{eq:profile}. For the normal component, let $R_j:=\rho_j(\partial_tv_j+i\beta U_jv_j)+\partial p_j-\frac{\mu_j}{\Re}\E v_j$; differentiating \eqref{eq:P1} in $y$, subtracting $i\beta R_j$, and using $i\beta u_j+\partial v_j=0$ leaves exactly \eqref{eq:p1}, so $i\beta R_j=0$ and $R_j=0$ for $k\ne0$. At the interface, \eqref{eq:p7} is the tangential-stress condition, which carries no pressure, and substituting \eqref{eq:P1} into the primitive normal-stress jump returns \eqref{eq:p8}; no interface condition is lost in the reduction to \eqref{eq:plant}.

Parameter-elliptic estimate. Let $v_2(1)=\partial v_2(1)=0$. From \eqref{eq:p1}, $\E^2v_j=r_j$ with $r_j:=\frac{\Re}{\mu_j}\big(\rho_j\E w_j+i\beta\rho_jU_j\E v_j-i\beta\rho_jU_j''v_j\big)$ and $\|r\|_{H^{-1}}\le C\alpha(\|W\|_k+\|x\|_k)$. Scale $Y=\alpha y$: then $\E^2=\alpha^4(\partial_Y^2-1)^2$, the walls sit at distance $\alpha d$ and $\alpha(1-d)$ from the interface, and the conditions \eqref{eq:p3}, \eqref{eq:p5}--\eqref{eq:p8}, each divided by the power of $\alpha$ that normalizes its leading term, become $\alpha$-independent with $O(\alpha^{-1})$ corrections and with $\eta$ and $\omega$ appearing as data. In the limit the problem decouples into three problems with $O(e^{-c\alpha})$ coupling, $c=\min\{d,1-d\}$.

At a clamped wall, $(\partial_Y^2-1)^2V=F$ on a half-line with $V=\partial_YV=0$ at the end and $V$ decaying: the homogeneous decaying solutions are $(a+bY)e^{-Y}$, and the two conditions give $a=b=0$, so the problem is uniquely solvable. At the interface, with $V_1=(a_1+b_1Y)e^{Y}$ on $Y<0$ and $V_2=(a_2+b_2Y)e^{-Y}$ on $Y>0$, the four conditions read
\begin{equation}\label{eq:4x4}
a_1-a_2=d_1,\quad a_1+a_2+b_1-b_2=d_2,\quad 2\mu_1b_1+2\mu_2b_2=d_3,\quad 2\mu_1b_1-2\mu_2b_2=d_4,
\end{equation}
using $\E V_1(0)=2b_1$, $\E V_2(0)=-2b_2$, $(\E V_1)'(0)=2b_1$, $(\E V_2)'(0)=2b_2$. The last two give $b_1=(d_3+d_4)/(4\mu_1)$, $b_2=(d_3-d_4)/(4\mu_2)$, and the first two then give $a_1,a_2$: after the column permutation to the order $(a_1,a_2,b_1,b_2)$ the matrix of \eqref{eq:4x4} is block triangular with determinant $-16\mu_1\mu_2$, so in the order $(a_1,b_1,a_2,b_2)$ its determinant is $16\mu_1\mu_2\ne0$. Nonvanishing of this determinant and of the two clamped-wall ones is the Lopatinskii--Shapiro condition for the limiting transmission problem, so the two-slab problem is parameter-elliptic in $\alpha$ and its scaled solution operator is bounded uniformly for $\alpha\ge\alpha_0$, and undoing the scaling gives
\begin{equation}\label{eq:parell}
\sum_j\|v_j\|_{H^3}\le C(1+\alpha)^r\big(\|W\|_k+\|x\|_k\big),
\end{equation}
whence \eqref{eq:P3} by \eqref{eq:P1}.

Closed loop. For $|k|\ge K$ the state has $v_2(1)=\partial v_2(1)=0$ and $W=\cA_kx$, so \eqref{eq:P3} applies with the uniform constant; for the finitely many $0<|k|<K$ the traces are $K_Vx$ and $-2\pi ikF^U_kx$, and for such a fixed $k$ ordinary elliptic regularity for the fourth-order two-slab problem with those bounded feedback boundary conditions gives $\sum_j\|v_j\|_{H^3}\le C_k(\|\cA_{{\rm cl},k}x\|_k+\|x\|_k)$ on $D(\cA_{{\rm cl},k})$, hence $\sum_j\|p_j\|\le C_k(\|\cA_{{\rm cl},k}x\|_k+\|x\|_k)$ by \eqref{eq:P1}; no uniformity in $k$ is needed on a finite set, and the analyticity of Theorem \ref{thm:cl} then supplies $\|\cA_{{\rm cl},k}x(t)\|_k\le C_kt^{-1}\|x(0)\|_k$ for $t>0$.

Summability. For $|k|\ge K$, Lemma \ref{lem:sector} at $t/2$ gives $\|W\|_k+\|x_k(t)\|_k\le Ct^{-1}e^{-\gamma|k|t/2}\|x_k(0)\|_k$, so $\|p_k(t)\|\le C_t(1+\alpha)^{r}e^{-\gamma|k|t/2}\|x_k(0)\|_k$. With $\|x_k(0)\|_k^2=4\pi^2k^2E_k(0)$, $E_k$ the modal contribution to $\|\cdot\|_X^2$, the factor $(1+\alpha)^{2r+2}e^{-\gamma|k|t}$ is bounded uniformly in $k$ for fixed $t>0$, so $\sum_{|k|\ge K}\|p_k(t)\|^2\le C_t\|x(0)\|_X^2$; the band contributes finitely many terms. At $k=0$ the pressure is not a Fourier mode but the scalar multiplier $P'(t)$ of the zero-flux constraint.
\end{proof}

\begin{proof}[Proof of Lemma \ref{lem:mean}]
For $u,w$ in the domain, $\sum_j\rho_j\langle\cA_0u,w\rangle=\sum_j\frac{\mu_j}{\Re}\int_ju_j''\bar w_j-P'(u)\int_0^1\bar w=-\sum_j\frac{\mu_j}{\Re}\int_ju_j'\bar w_j'$, the interface terms cancelling by the transmission conditions, the wall terms by $w(0)=w(1)=0$, and the multiplier term by $\int w=0$; this is symmetric and negative. With $\nu_{\min}=\min_j\nu_j$, $\rho_{\min}=\min_j\rho_j$, $\rho_{\max}=\max_j\rho_j$, the Poincar\'e inequality on $(0,1)$ gives $\frac1\Re\sum_j\mu_j\|u_j'\|^2\ge\frac{\nu_{\min}\rho_{\min}}{\Re}\|u'\|^2\ge\frac{\pi^2\nu_{\min}\rho_{\min}}{\Re\,\rho_{\max}}\sum_j\rho_j\|u_j\|^2$, so $\sigma_1\le-\gamma_0$. With $u_2(1)=U\ne0$ the same identity acquires the boundary term $-\frac{\mu_2}{\Re}U\,\overline{\partial w_2(1)}$, so the input functional on the adjoint is $w\mapsto-\frac{\mu_2}{\Re}\overline{\partial w_2(1)}$, and Fattorini's criterion is the stated one. For $q<q_{\rm mean}$ the finitely many $\sigma_j>-q$ are controllable by \eqref{eq:qmean}, and finite-dimensional pole placement on their span gives $F_0$.
\end{proof}

\begin{proof}[Proof of Proposition \ref{prop:meanexc}]
On $(a_*,3\pi/2)$, $\cos a<0$ and $\tan a>a$, so $\delta,r>0$ and all parameters are admissible. With $P'=\beta_1^2/\Re$, the functions $u_1=1-\cos\beta_1y-\tan a\,\sin\beta_1y$ and $u_2=\delta[1-\cos\beta_2(1-y)]$ satisfy $\frac{\mu_j}{\Re}u_j''-\sigma\rho_ju_j=P'$ on each layer, since $\beta_2^2/\beta_1^2=\rho_2\mu_1/(\rho_1\mu_2)=r^2$ and $\delta=\beta_1^2/(\mu_2\beta_2^2)$; $u_1(0)=0$ and $u_2(1)=\partial u_2(1)=0$; at $y=d$, $\beta_1d=a$ and $\beta_2(1-d)=\pi$ give $u_2(d)=2\delta$, $\partial u_2(d)=0$, $\partial u_1(d)=\beta_1(\sin a-\tan a\cos a)=0$, and $u_1(d)=1-\cos a-\tan a\sin a=1-\sec a=2\delta$; and $\int_0^du_1=(a-\tan a)/\beta_1$, $\int_d^1u_2=\delta\pi/\beta_2$, whose sum vanishes by the definition of $r$. Hence $u$ is an eigenfunction at $\sigma$ with zero wall shear. The last claim follows from $\sigma=-\beta_1^2/\Re$ with $\beta_1$ independent of $\Re$.
\end{proof}

\subsection{Adjoint of plant}

\begin{proof}[Proof of Lemma \ref{lem:adjoint}]
For $x$ in the domain of $\cA_k$ and $\psi$ in the domain of $\cA_k^*$, $\langle\cA_kx,\psi\rangle_k-\langle x,\cA_k^*\psi\rangle_k$ is the sum over the two slabs of the Green identity of the fourth-order operator $\frac{\mu_j}{\Re}\E^2-2\pi ik\rho_jU_j\E+2\pi ik\rho_jU_j''$ against its formal $L^2$-adjoint, the left side of \eqref{eq:a1}, the mass form $\rho_j\E$ contributing $-\lambda\rho_j[\partial v_j\bar\zeta_j-v_j\partial\bar\zeta_j]$ at the endpoints, and the $\eta$-row pairing $w_k(\lambda\eta+2\pi ikU_I\eta-v_2(d))\bar\chi$. The boundary form at each endpoint of layer $j$ is
\begin{align}
&\frac{\mu_j}{\Re}\Big[\partial\E v\,\bar\zeta-\E v\,\partial\bar\zeta+\partial v\,\overline{\E\zeta}-v\,\overline{\partial\E\zeta}\Big] \nonumber\\
&\qquad-\big(2\pi ik\rho_jU_j+\lambda\rho_j\big)\big[\partial v\,\bar\zeta-v\,\partial\bar\zeta\big]+2\pi ik\rho_jU_j'\,v\bar\zeta, \label{eq:green}
\end{align}
and the reaction term contributes nothing since $U_j''$ is real. At $y=0$ and $y=1$ the plant conditions \eqref{eq:p3}--\eqref{eq:p4b} at zero input leave the terms in $\zeta_j$, $\partial\zeta_j$, which \eqref{eq:a2} removes. At $y=d$, substituting the plant conditions \eqref{eq:p5}--\eqref{eq:p8} for the layer-1 traces leaves a linear form in the five free quantities $v_2(d)$, $\partial v_2(d)$, $\E v_2(d)$, $\partial\E v_2(d)$, $\eta$; requiring it to vanish identically gives five conditions on the traces of $\zeta$ and on $\chi$, which after conjugation are \eqref{eq:adj}: the coefficients of $\partial\E v_2(d)$ and $\E v_2(d)$ give \eqref{eq:a3}, that of $\partial v_2(d)$ gives \eqref{eq:a4}, that of $v_2(d)$ gives \eqref{eq:a5}, and that of $\eta$ gives \eqref{eq:a6}. The algebra was carried out symbolically.
\end{proof}

\section*{Acknowledgment}
The author's problems, ideas, and results were developed with the assistance of Claude and ChatGPT in final theorem formulation, proofs, simulations, and drafting throughout the paper, under the author's correction and complete verification.

\end{document}